\documentclass[runningheads,a4paper]{llncs}

\usepackage[
margin=2.84cm,
includefoot,
footskip=30pt,
]{geometry}

\usepackage{sgame}
\usepackage{graphicx}
%

\usepackage{tcolorbox}
\usepackage{amsmath}
\usepackage{amsfonts}
\usepackage{verbatim}
\usepackage{bbm}
\usepackage{amssymb}

\usepackage{mathtools}
\usepackage{cases}
%

\usepackage{float}
%

\usepackage{color,soul}
%

\usepackage{tablefootnote}

\newcommand{\p}{\ensuremath{\mathtt{P}}\xspace}



\usepackage{array}
\usepackage{multirow}

\usepackage{amssymb}
\usepackage{graphicx}


%

\usepackage{fancybox}
\usepackage{minibox}
\usepackage[export]{adjustbox}
%

\usepackage{ragged2e}
%

\usepackage{sgame}
\usepackage{graphicx}
%

\usepackage{tcolorbox}
\usepackage{amsmath}
\usepackage{amsfonts}
\usepackage{amsthm}
\usepackage{verbatim}
\usepackage{bbm}

\usepackage{mathtools}
\usepackage{cases}
%

\usepackage{color,soul}
%

\usepackage{esvect}
%

\usepackage[nospace,noadjust]{cite}
\usepackage{todonotes}
%

\usepackage{pgfplots}
\pgfplotsset{width=14cm,compat=1.9}
%

\usepackage{float}
%

\usepackage{wrapfig}
%

\usepackage{enumerate}
%



\usepackage{xspace}
\usepackage{enumitem}
\usepackage[english]{babel}
\usepackage{algorithm}
\usepackage[noend]{algorithmic}

\newcommand{\NP}{\ensuremath{\mathtt{NP}}\xspace}



\makeatother

\usepackage{amsmath}
\usepackage{amssymb}
\usepackage{xspace}
\usepackage{esvect}

\usepackage{float}
\usepackage{graphicx}
%

\usepackage{tcolorbox}
%

\usepackage{wrapfig}

\usepackage[T1]{fontenc}

\usepackage{color,soul}
\usepackage{xcolor}

\DeclareMathOperator{\conv}{conv}

\DeclareMathOperator{\exact}{exact}
\DeclareMathOperator{\stm}{STM}
\DeclareMathOperator{\tmv}{TMV}
\DeclareMathOperator{\tml}{TML}
\DeclareMathOperator{\tsd}{TSD}

\newcommand{\reals}{\mathbb{R}}

\newcommand{\eps}{\ensuremath{\epsilon}}
\newcommand{\leps}{\ensuremath{<_\eps}}
\newcommand{\leeps}{\ensuremath{\le_\eps}}

\newcommand{\geps}{\ensuremath{>_\eps}}
\newcommand{\geeps}{\ensuremath{\ge_\eps}}
\newcommand{\eetr}{\ensuremath{\eps}\text{-ETR}\xspace}
\newcommand{\poly}{\text{poly}\xspace}

\newcommand{\pr}{\ensuremath{\text{Pr}}\xspace}

\newcommand{\xcal}{\ensuremath{\mathcal{X}}}
\newcommand{\ycal}{\ensuremath{\mathcal{Y}}\xspace}
\newcommand{\scal}{\ensuremath{\mathcal{S}}}

\newcommand{\ebb}{\ensuremath{\mathbb{E}}}

\newcommand{\xhat}{\ensuremath{\hat{x}}}


\newcommand{\TFNP}{\ensuremath{\mathtt{TFNP}}\xspace}

\newcommand{\etr}{\ensuremath{\mathtt{ETR}}\xspace}
\newcommand{\fetr}{\ensuremath{\mathtt{FETR}}\xspace}
\newcommand{\tfetr}{\ensuremath{\mathtt{TFETR}}\xspace}

\newcommand{\FNP}{\ensuremath{\mathtt{FNP}}\xspace}
\newcommand{\PSPACE}{\ensuremath{\mathtt{PSPACE}}\xspace}


\newcommand{\CH}{\ensuremath{\textsc{Consensus Halving}}\xspace}

\newcommand{\feas}{\ensuremath{\textsc{Feasible}}\xspace}

\DeclareMathOperator{\gap}{gap}



\usepackage{amsmath}
\usepackage{amsfonts}
%
%

\usepackage{array}
\usepackage{multirow}

\usepackage{amsmath}
\usepackage{amssymb}
\usepackage{xspace}
\usepackage{esvect}

\usepackage{color,soul}

\usepackage{float}
%

%


%
%
%
%
%

\newcommand{\polylog}{\text{poly}\ensuremath{\log}\xspace}

%
%
%
%
%
%


\usepackage{amssymb,amsmath,amsthm}
\usepackage{hyperref} 



%

%
%
%
%

\setcounter{secnumdepth}{5}
\setcounter{tocdepth}{5}

\newcount\Comments  
\Comments=1
\definecolor{darkgreen}{rgb}{0,0.6,0}
\newcommand{\kibitz}[2]{\ifnum\Comments=1{\color{#1}{#2}}\fi}

\sloppy

\begin{document}

\title{Approximating the Existential Theory of the Reals}



\author{
	Argyrios Deligkas\inst{1}\thanks{This work was completed while the author was affiliated with the Department of Computer Science and the Leverhulme Research Centre for Functional Materials Design at the University of Liverpool, Liverpool, UK. The author was supported by Leverhulme Trust and the Leverhulme Research Centre for Functional Materials Design.}
	\and
	John Fearnley\inst{2}\thanks{The author was supported by the EPSRC grant EP/P020909/1.}
	\and
	Themistoklis Melissourgos\inst{3}\thanks{This work was completed while the author was affiliated with the Department of Computer Science at the University of Liverpool, Liverpool, UK.}
	\and 
	Paul G. Spirakis\inst{2,4}\thanks{The author was supported by the EPSRC Grant EP/P02002X/1, the Leverhulme Research Centre, and the ERC Project ALGAME.}
}
\institute{Royal Holloway University of London, London, UK\\ \email{Argyrios.Deligkas@rhul.ac.uk} \and Department of Computer Science, University of Liverpool, Liverpool, UK\\
\email{\{John.Fearnley,P.Spirakis\}@liverpool.ac.uk}  \and Operations Research Group, Technical University of Munich, Munich, Germany\\
\email{Themistoklis.Melissourgos@tum.de} \and Computer Engineering and Informatics Department, University of Patras, Patras, Greece}

\authorrunning{A. Deligkas, J. Fearnley, T. Melissourgos, and P. G. Spirakis}


{\def\addcontentsline#1#2#3{}\maketitle}
%

	
	\begin{abstract}
		
		The Existential Theory of the Reals (ETR) consists of existentially quantified Boolean
		formulas over equalities and inequalities of polynomial functions of variables in $\reals$.
		In this paper we propose and study the approximate existential theory of the reals ($\eps$-ETR), in which the
		constraints only need to be satisfied approximately. We first show that when the domain of the variables is $\reals$ then $\eps$-ETR = ETR under polynomial time reductions, and then study the constrained $\eps$-ETR problem when the variables are constrained to lie in a given bounded convex set.
		Our main theorem is a sampling theorem, similar to those that have been proved
		for approximate equilibria in normal form games. It discretizes the domain in a grid-like manner whose density depends on various properties of the formula. A consequence of our
		theorem is that we obtain a quasi-polynomial time approximation scheme (QPTAS) for a
		fragment of constrained $\eps$-ETR. We use our theorem to create several new
		PTAS and QPTAS algorithms for problems from a variety of fields.
		
	\end{abstract}

	\newpage

	\tableofcontents
	

~\\
~\\

\section{Introduction}

\subsection{Sampling techniques}

The Lipton-Markakis-Mehta algorithm (LMM) is a well known method for computing
approximate Nash equilibria in normal form games \cite{LMM}. The key idea behind their technique is to prove that there
exist approximate Nash equilibria where all players use \emph{simple}
strategies. 

Suppose that we have a convex set $C = \conv(c_1, c_2, \dots, c_l)$ defined by
vectors $c_1$ through $c_l$. A vector $x \in C$ is \emph{$k$-uniform} if it can
be written as a sum of the form
$(\beta_1/k) \cdot c_1 + (\beta_2/k) \cdot c_2 + \dots + 
(\beta_l/k) \cdot c_l,$
where each $\beta_i$ is a non-negative integer and $\sum_{i=1}^l \beta_i = k$.
Since there are at most $l^{O(k)}$ $k$-uniform vectors, we can enumerate all
$k$-uniform vectors in $l^{O(k)}$ time. For approximate equilibria in $n \times n$ bimatrix
games, Lipton, Markakis, and Mehta showed that for every $\eps>0$ there exists 
an $\eps$-Nash
equilibrium where both players use $k$-uniform strategies where $k \in O(\log
n/\eps^2)$, and so they obtained a quasi-polynomial time approximation scheme (QPTAS)
for finding an $\eps$-Nash equilibrium.

Their proof of this fact uses a sampling argument. Every bimatrix game has an
exact Nash equilibrium (NE), and each player's strategy in this NE is a probability
distribution. If we sample from each of these distributions $k$ times, and then
construct new $k$-uniform strategies using these samples, then when $k \in
O(\log n / \eps^2)$ there is a positive probability the new strategies form an
$\eps$-NE. So by the probabilistic method, there must exist a
$k$-uniform $\eps$-NE.

The sampling technique has been widely applied. It was initially used by
Alth\"ofer~\cite{A94} in zero-sum games, before being applied to non-zero sum
games by Lipton, Markakis, and Mehta~\cite{LMM}. Subsequently, it was used to
produce algorithms for finding approximate equilibria in normal form games with
many players~\cite{BBP}, sparse bimatrix games~\cite{Barman15}, tree
polymatrix~\cite{BLP15}, and Lipschitz games~\cite{DFS-lipschitz}.
It has also been used to find 
constrained approximate equilibria in polymatrix games with
bounded treewidth~\cite{DFS17}.

At their core, each of these results uses the sampling technique in the same way
as the LMM algorithm: first take an exact solution to the problem, then sample
from this solution $k$ times, and finally prove that with positive probability
the sampled vector is an approximate solution to the problem. The details of the
proofs, and the value of $k$, are often tailored to the specific application, but
the underlying technique is the same.

\subsection{The existential theory of the reals}

In this paper we ask the following question: \emph{is there a broader class of
	problems to which the sampling technique can be applied?} We answer this by
providing a sampling theorem for the existential theory of the reals.
The existential theory of the reals consists of existentially quantified
formulae using the connectives $\{\land, \lor, \lnot\}$ over polynomials
compared with the operators $\{<, \le, =, \ge, >\}$. For example, each of the
following is a formula in the existential theory of the reals.
\begin{align*}
&\exists x \exists y \exists z \cdot (x = y) \land (x > z) &
&\exists x \cdot (x^2 = 2) \\
&\exists x \exists y \cdot \lnot (x^{10} = y^{100}) \lor (y \ge 4) & 
&\exists x \exists y \exists z \cdot (x^2 + y^2 = z^2)
\end{align*}
Given a formula in the existential theory of the reals, we must decide whether
the formula is \emph{true}, that is, whether there do indeed exist values for the variables
that satisfy the formula. Throughout this paper we will use the Turing model of computation (also known as bit model). In this model, the inputs of our problems will be polynomial functions represented by tensors with rational entries which are encoded as a string of binary bits.

\etr is defined as the class that contains every problem that can be reduced in polynomial time to the typical \etr problem: Given a Boolean formula $F$, decide whether $F$ is a true sentence in the existential theory of the reals. It is known that in the Turing model $\etr \subseteq \PSPACE$ \cite{C88}, and $\NP \subseteq \etr$ since the problem can easily encode
Boolean satisfiability. However, the class is not known to be equal to either
\PSPACE or \NP, and it seems to be a distinct class of problems between
the two. Many problems are now known to be \etr-complete, including various
problems involving constrained equilibria in normal form games with at least
three players~\cite{BM12,BM14,BM16,BM17,GMV+}.

\subsection{Our contribution}

In this paper we propose the \emph{approximate} existential theory of the reals
(\eetr), where we seek a solution that approximately satisfies the
constraints of the formula. We show a subsampling theorem for a large fragment
of \eetr, which can be used to obtain PTASs and QPTASs for the problems
that lie within it. We believe that this will be useful for future research:
instead of laboriously reproving subsampling results for specific games, it now
suffices to simply write a formula in \eetr and then apply our theorem to immediately get the desired result. To exemplify this, we prove several new QPTAS and PTAS
results using our theorem.

Our first result is actually that, in the computational complexity world, $\eps$-\etr = \etr, meaning that the problem of computing an approximate solution to an ETR formula is as hard as finding an exact
solution. However, this result crucially relies on the fact that ETR formulas
can have solutions that are doubly-exponentially large. This motivates the study
of \emph{constrained} $\eps$-ETR, where the solutions are required to lie within
a given bounded convex set. 

Our main theorem (Theorem~\ref{thm:main}) gives a subsampling result for
constrained $\eps$-\etr. It states that if the formula has an exact solution,
then it also has a $k$-uniform approximate solution, where the value of $k$
depends on various parameters of the formula, such as the number of constraints
and the number of variables. The theorem allows for the formula to be written
using \emph{tensor} constraints, which are a type of constraint that
is useful in formulating game-theoretic problems.

The consequence of the main theorem is that, when various parameters of the
formula are up to polylogarithmic in specific parameters (see Corollary~\ref{cor:qptas}), we are able to obtain a
QPTAS for approximating the existential theory of the reals. Specifically, this
algorithm either finds an approximate solution of the constraints, or verifies
that no exact solution exists. In many game theoretic and fair division applications an exact
solution always exists, and so this algorithm will always find an approximate
solution.

We should mention here also that our technique allows approximation of optimization problems whose objective function does not need to be described using the grammar of ETR formulas. For a discussion on this, see Remark \ref{rmk:beyond_ETR}. Also, we are not just applying the well-known subsampling
techniques in order to derive our main theorem. Our main theorem incorporates a
new method for dealing with polynomials of degree $d$, which prior subsampling
techniques were not able to deal with. 

Our theorem can be applied to a wide variety of problems. In the game theoretic
setting, we prove new results for constrained approximate equilibria in normal
form games, and approximating the value vector of a Shapley game. Then we move to the fair division setting, and we show how a special case of the Consensus Halving problem admits a QPTAS. We also show
optimization results. Specifically, we give approximation algorithms for
optimizing polynomial functions over a bounded convex set, subject to polynomial
constraints. We also give algorithms for approximating eigenvalues and
eigenvectors of tensors. Finally, we apply our results to some problems from
computational geometry.

\section{The Existential Theory of the Reals}

Let $x_1$, $x_2$, $\dots$, $x_q \in \reals$ be distinct variables, which we will
treat as a vector $x \in \reals^q$. A \emph{term} of a multivariate polynomial is a function
$T(x) := a \cdot x_1^{d_1} \cdot x_2^{d_2} \cdot \dots \cdot x_q^{d_q}$, where $a$ is non negative rational and $d_1, d_2,
\dots, d_q$ are non negative integers. A multivariate polynomial is a function $p(x) := T_1(x)
+ T_2(x) + \dots + T_t(x) + c$, where each $T_i$ is a term as defined above, and
$c \in \mathbb{Q}_{\geq 0}$ is a constant.

We now define \emph{Boolean formulae} over multivariate polynomials. The atoms
of the formula are polynomials compared with $\{<, \le, =, \ge, >\}$, and the
formula itself can use the connectives $\{\land, \lor, \lnot\}$.

\begin{definition}
	The {\em existential theory of the reals} consists of every true sentence of the form
	$\exists x_1 \exists x_2 \dots \exists x_q \cdot F(x)$,
	where $F$ is a Boolean formula over multivariate polynomials of
	$x_1$ through $x_q$.
\end{definition}
\etr is defined as the class that contains every problem that can be reduced in polynomial time to the typical \etr problem: Given a Boolean formula $F$, decide whether $F$ is a true
sentence in the existential theory of the reals. We will say that $F$ has $m$
constraints if it uses $m$ operators from the set $\{<, \le, =, \ge, >\}$ in its
definition.

\subsection{The approximate \etr}

In the \emph{approximate} existential theory of the reals, we replace the
operators $\{<, \le, \ge, >\}$ with their approximate counterparts. We define
the operators $<_\eps$ and $>_\eps$ with the interpretation that $x <_\eps y$
holds if and only if $x < y + \eps$ and $x >_\eps y$ if and only if $x > y
- \eps$ for some given $\eps > 0$. The operators $\le_\eps$ and $\ge_\eps$ are defined analogously. 

We do not allow equality tests in the approximate \etr. Instead, we require that
every constraint of the form $x = y$ should be translated to $(x \le y) \land (y
\le x)$ before being weakened to $(x \le_\eps y) \land (y \le_\eps x)$.

We also do not allow negation in Boolean formulas. Instead, we require that all
negations are first pushed to atoms, using De Morgan's laws, and then further
pushed into the atoms by changing the inequalities. So the formula $\lnot ((x
\le y) \land (a > b))$ would first be translated to $(x > y) \lor (a \le b)$
before then being weakened to $(x >_\eps y) \lor (a \le_\eps y)$.


\begin{definition}
	The approximate existential theory of the reals consists of every true
	sentence of the form $\exists x_1 \exists x_2 \dots \exists x_q \cdot F(x)$,
	where $F$ is a negation-free Boolean formula using the operators $\{<_\eps,
	\le_\eps, \ge_\eps, >_\eps\}$ over multivariate polynomials of $x_1$ through
	$x_q$.
\end{definition}

Given a Boolean formula $F$, the $\eps$-\etr problem asks us to decide whether
$F$ is a true sentence in the approximate existential theory of the reals, where
the operators $\{<_\eps, \le_\eps, \ge_\eps, >_\eps\}$ are used.

\subsubsection{Unconstrained $\eps$-\etr}

Our first result is that if no constraints are placed on the value of the
variables, that is if each $x_i$ can be arbitrarily large, then $\eps$-\etr = \etr
for \emph{all} values of $\eps > 0$. We show this via a two way polynomial time reduction
between $\eps$-\etr and \etr. The reduction from $\eps$-\etr to \etr is trivial,
since we can just rewrite each constraint $x <_\eps y$ as $x < y + \eps$, and
likewise for the other operators.

For the other direction, we show that the \etr-complete problem $\feas$, which
asks us to decide whether a system of multivariate polynomials
$(p_i)_{i=1,\dots, k}$ has a shared root, can be formulated in $\eps$-\etr. 
%
%
We will prove this by modifying a technique of Schaefer
and Stefankovic~\cite{SS17}.

\begin{definition}[$\feas$]
	Given a system of $k$ multi-variate polynomials $p_i : \reals^{n} \rightarrow
	\reals^{n}$, $i = 1, \dots, k$, decide whether there exists an $x \in \reals^n$ such that $p_i(x) =
	0$ for all $i$.
\end{definition}

Schaefer and Stefankovic showed that this problem is \etr-complete.

\begin{theorem}[\cite{SS17}]
	$\feas$ is \etr-complete.
\end{theorem}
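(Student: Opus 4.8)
The statement is cited from \cite{SS17}; here is the approach I would take to prove it. Membership in \etr is immediate: ``$p_{1},\dots,p_{k}$ have a common root'' is literally the sentence $\exists x_{1}\cdots\exists x_{n}\cdot\bigwedge_{i}\bigwedge_{j}\bigl(p_{i,j}(x)=0\bigr)$, where $p_{i,1},\dots,p_{i,n}$ are the component polynomials of $p_{i}$, and this is an \etr formula of size polynomial in the input. So the content is \etr-hardness, i.e.\ a polynomial-time reduction from the generic problem ``given a Boolean formula $F$, is $\exists x\cdot F(x)$ true?'' to \feas.

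The plan is the standard route from an arbitrary quantifier-free Boolean combination of polynomial (in)equalities to a single polynomial equation, carried out so as to keep degree and bit-size polynomial. First I would put $F$ in prenex form (trivial, as \etr has only existential quantifiers) and push all negations down to the atoms via De~Morgan, flipping the comparisons; this is size-preserving. Next I would eliminate every atom that is not already an equation by introducing a fresh real variable: $p\ge 0$ becomes $\exists s\cdot p=s^{2}$, $p>0$ becomes $\exists s\cdot p\cdot s^{2}=1$, and $p\neq 0$ becomes $\exists s\cdot p\cdot s=1$ (the latter two being solvable in $s$ exactly when the intended relation holds); the $<$ and $\le$ cases are analogous. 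Replacing each resulting atom $q=0$ by $q^{2}=0$, every atom is now a polynomial equation whose left-hand side is a sum of squares, hence nonnegative everywhere.

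Then I would eliminate the Boolean connectives bottom-up, keeping the invariant that each subformula $\psi$ is represented by a sum-of-squares polynomial $f_{\psi}$ such that $\psi$ is satisfiable, for the current values of the original variables, iff the auxiliary variables can be chosen to make $f_{\psi}=0$. For $\land$ use $f_{\psi_{1}\land\psi_{2}}:=f_{\psi_{1}}+f_{\psi_{2}}$ (a sum of nonnegatives vanishes iff each does); for $\lor$ introduce a fresh variable $\lambda$ and set $f_{\psi_{1}\lor\psi_{2}}:=\lambda^{2}f_{\psi_{1}}+(1-\lambda)^{2}f_{\psi_{2}}$, where nonnegativity of $f_{\psi_{1}},f_{\psi_{2}}$ makes this expression reachable-to-$0$ exactly when one of the two is. At the root this produces a single equation $P(x,z)=0$ over $(x,z)\in\reals^{M}$ with $\exists(x,z)\cdot P(x,z)=0$ equivalent to $\exists x\cdot F(x)$; feeding the one-polynomial system $y\mapsto(P(y),0,\dots,0)$ (or, equivalently, keeping the $\land$-branches as separate equations and only applying the $\lor$-gadget) into \feas completes the reduction. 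Every step runs in polynomial time and outputs polynomially many rational coefficients.

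The main obstacle is the degree blow-up in handling $\lor$. The familiar identity $(f_{1}=0)\lor(f_{2}=0)\iff f_{1}\cdot f_{2}=0$ doubles the degree, so nesting disjunctions along a path of length $\Theta(|F|)$ would make the degree---and thus the dense coefficient representation---exponential. The convex-combination gadget $\lambda^{2}f_{1}+(1-\lambda)^{2}f_{2}$ avoids this, adding only $2$ to the degree per connective while $\land$ adds nothing, so the final degree stays linear in $|F|$ plus the input degrees. The remaining verifications---that the square/reciprocal gadgets capture $\ge,>,\neq$, that a finite sum of real squares is $0$ iff every summand is, and that prenexing and pushing negations do not blow up the size---are routine.
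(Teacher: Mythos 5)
The paper does not prove this theorem itself; it simply imports it from \cite{SS17} as a black box, so there is no in-paper argument to compare yours against. Judged on its own terms, your sketch is a sound and fairly standard reconstruction of the hardness direction. Membership is immediate, as you say. For hardness, your pipeline---push negations to atoms, replace $\ge$, $>$, $\neq$ (and dually $\le$, $<$) by fresh-witness equations, square each atom so that every atomic constraint is a sum-of-squares polynomial set to $0$, then fold $\land$ by summation and $\lor$ by the convex-combination gadget $\lambda^{2}f_{1}+(1-\lambda)^{2}f_{2}$---is correct. The nonnegativity invariant is exactly what makes the Boolean folding work: a finite sum of nonnegative polynomials vanishes iff each summand does, and in the $\lor$ gadget vanishing forces $\lambda^{2}f_{1}=0$ and $(1-\lambda)^{2}f_{2}=0$ simultaneously, which (since $\lambda$ and $1-\lambda$ cannot both be $0$) gives $f_{1}=0$ or $f_{2}=0$. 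Your remark that the obvious product gadget $f_{1}\cdot f_{2}=0$ would blow the degree up exponentially, while the convex-combination gadget only adds a constant to the degree per gate, identifies the actual crux of the construction. Embedding the single resulting equation (or, as you note, the $\land$-branches kept separate) into the \feas format is then immediate.

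The one place you assert more than you argue is the sentence that every step outputs polynomially many rational coefficients. Degree control alone does not give this: squaring an atom roughly squares its monomial count, and each $\lor$ gate multiplies one operand by $(1-\lambda)^{2}=1-2\lambda+\lambda^{2}$, tripling its monomial count. This does work out to a polynomial bound if you check it, for example by applying the factor-of-three multiplier to the smaller operand, or by replacing $(1-\lambda)^{2}$ with a fresh $\mu^{2}$ together with the side equation $\lambda+\mu-1=0$ kept as a separate member of the \feas system, which turns the growth into an additive one. Alternatively, the issue evaporates entirely if one takes the polynomials to be represented by arithmetic formulas or circuits, which is how this kind of reduction is usually phrased. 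So this is a gap in the write-up rather than in the idea, but a careful proof should say which representation is being used and run through the size accounting for it.
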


We will reduce $\feas$ to $\eps$-\etr.
Let $P = (p_i)_{i=1,\dots, k}$ be an instance of $\feas$, and let $L$ be the
number of bits needed to represent this instance. We define $\gap(P) =
2^{-2^{L+5}}$. The following lemma was shown by  Schaefer and Stefankovic. 

\begin{lemma}[\cite{SS17}]
	\label{lem:twotwol}
	Let $P = (p_i)_{i=1,\dots, k}$ be an instance of $\feas$. If there does not exist an
	$x \in \reals^n$ such that $p_i(x) = 0$ for all $i$, then for every $x 
	\in \reals^n$ there exists an $i$ such that $| p_i(x) | > \gap(P)$.
\end{lemma}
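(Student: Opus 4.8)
The plan is to reduce the statement to a \emph{gap bound for the minimum value of a single polynomial}, and then to invoke effective estimates from real algebraic geometry. First I would collapse the system $P = (p_i)_{i=1,\dots,k}$, where each $p_i : \reals^n \to \reals^n$ has coordinates $p_{i,1},\dots,p_{i,n}$, into the single scalar polynomial $g(x) := \sum_{i=1}^{k}\sum_{j=1}^{n} p_{i,j}(x)^2$. This $g$ has rational coefficients, is nonnegative on all of $\reals^n$, has degree at most $2d$ with $d := \max_i \deg p_i$, and has bit-size polynomial in $L$; moreover $P$ has \emph{no} common real root exactly when $g(x) > 0$ for every $x$. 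Since $g(x) \le k\cdot\max_i |p_i(x)|^2$, it suffices to prove: if $g > 0$ everywhere then $\inf_{x \in \reals^n} g(x)$ is bounded below by a doubly-exponentially small positive quantity, from which some $|p_i(x)|$ must exceed $\gap(P)$ --- the square root, the harmless factor $k$, and the bit-size blow-up incurred in forming $g$ are all absorbed into the slack between the exponent $L+5$ in $\gap(P)=2^{-2^{L+5}}$ and whatever exponent the elimination bound below produces. So everything reduces to an effective \emph{lower bound on the infimum of a positive polynomial}.

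For that bound I would argue as follows. Suppose first that $\mu := \inf_x g(x)$ is attained at some $x^\star$; then $x^\star$ is a critical point of $g$, so $(x^\star,\mu)$ solves the polynomial system $\{\,\nabla g(x)=0,\ g(x)=t\,\}$ in the unknowns $(x,t)$. Eliminating $x$ from this system --- by iterated resultants, or equivalently by running a quantifier-elimination procedure on ``$\exists x:\ \nabla g(x)=0 \wedge g(x)=t$'' --- yields a nonzero univariate polynomial $h(t)$ with integer coefficients that vanishes at $t=\mu$. Classical degree and height estimates for such eliminations (via the theory of real closed fields, or via explicit root-separation bounds of the Canny/Cucker type) bound $\deg h$ and the coefficient bit-sizes of $h$ in terms of $n$, $d$ and the coefficient sizes of $P$; by the Cauchy bound on the roots of $h$, the quantity $\mu$ (if nonzero) is then at least $2^{-N}$, where $N$ is this bound on the bit-size of $h$. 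Measuring $N$ against the encoding length $L$ of $P$ --- and using a standard bounded-degree normalisation of the $p_i$ (replacing high-degree subterms by fresh variables) so that $d$ and $n$ stay under control --- gives, after a routine but tedious accounting, precisely the asserted value, i.e.\ $N$ small enough that $\mu \geq \gap(P) = 2^{-2^{L+5}}$.

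The main obstacle is that $\inf_x g(x)$ \emph{need not be attained}: a strictly positive polynomial can tend to $0$ along a sequence escaping to infinity (for instance $(xy-1)^2 + x^2$ on $\reals^2$), and then no critical point carries the infimum. I see two ways around this, both of which Schaefer and Stefankovic's analysis must confront. One is to use that the $\feas$ instances that matter can be taken to confine all solutions to an explicit ball: adjoining a fresh variable $s$ and the equation $\sum_j x_j^2 + s^2 = R$ with $R = 2^{2^{O(L)}}$ places the variety on a compact set, on which the minimum is attained and the preceding argument runs verbatim, while far from that set the corresponding $p_i$ is already enormous, so the conclusion $|p_i(x)|>\gap(P)$ is immediate. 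The other is to pass to the projective closure (or one-point compactification) of $\reals^n$ and bound $g$ separately near the points at infinity, where homogeneity again makes the relevant quantities roots of a controlled univariate polynomial. Either route reproduces the same doubly-exponential bookkeeping, which is where essentially all of the technical work lies; the remainder is routine. Since this is exactly the argument of Schaefer and Stefankovic, in the paper it suffices to cite their analysis.
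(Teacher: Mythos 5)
The paper does not actually prove this lemma: it is stated with an explicit citation to Schaefer and Stefankovic~\cite{SS17} and used as a black box, and you correctly note this in your final sentence. So there is no ``paper's proof'' to compare against; what you have supplied is a sketch of the argument that goes into the cited result. That sketch is broadly on the right track --- sum-of-squares collapse to a single positive polynomial, critical-point system, elimination to a univariate $h(t)$ with controlled degree and height, Cauchy/root-separation lower bound, and an explicit compactification to handle an unattained infimum --- and it correctly identifies the unattained infimum as the main obstruction. Two points where your sketch is thinner than the real argument, and which you should at least flag: (i) once you confine to the sphere $\sum_j x_j^2 + s^2 = R$, the constrained minimiser need not be a critical point of $g$ but of the Lagrangian, so the elimination system must include the multiplier equations (or, equivalently, you should set up the problem as eliminating $x$ from a sentence with \emph{both} $\exists$ and $\forall$ quantifiers characterising ``$t$ is the infimum,'' as Schaefer--Stefankovic effectively do via the general Ball Theorem / quantifier-elimination gap bounds of Basu--Pollack--Roy and Grigor'ev--Vorobjov); and (ii) the specific constant $L+5$ in $\gap(P)=2^{-2^{L+5}}$ is not generic slack --- it is chosen to match the exact exponent emerging from those elimination bounds, and the ``routine but tedious accounting'' you wave at is precisely where that constant comes from. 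Since the paper only cites the lemma, the appropriate disposition is: the citation in the paper is correct, and your sketch is a fair high-level reconstruction of the cited argument, modulo those two elided technical steps.
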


In other words, if the instance of $\feas$ is not solvable, then one of the
polynomials will always be bounded away from $0$ by at least $\gap(P)$.

\paragraph*{The reduction}
The first task is to build an $\eps$-\etr formula that ensures that a variable $t \in \reals$
satisfies $t \ge \eps/\gap(P)$. This can be done by the standard trick of
repeated squaring, but we must ensure that the $\eps$-inequalities do not
interfere with the process. We define the following formula over the variables
$t, g_1, g_2, \dots, g_{L+6} \in \reals^n$, where all of the following
constraints are required to hold.
\begin{align*}
g_1 &\ge_{\eps} 2+\eps, \\
g_j &\ge_{\eps} g_{j-1}^2 + \eps & \text{for all $j \in \{2, 3, \dots, L+6\}$.} \\
t & \ge_{\eps} \eps \cdot g_{L+6}  + \eps
\end{align*}
In other words, this requires that $g_1 \ge 2$, and $g_j \ge g_{j-1}^2$. So we
have $g_{L+6} \ge 2^{2^{L+5}}$, and hence $t \ge \eps / \gap(P)$. Note that the
size of this formula is polynomial in the size of $P$.

Given an instance $P = (p_i)_{i=1,\dots, k}$ of $\feas$ we create the following
$\eps$-\etr instance $\psi$, where all of the following are required to hold.
\begin{align}
\label{l1}
t \cdot p_i(x) &\le_{\eps} 0 & \text{for all $i$,}  \\
\label{l2}
t \cdot p_i(x) &\ge_{\eps} 0 & \text{for all $i$,}  \\
\label{l3}
t &\ge \eps / \gap(P),
\end{align}
where the final inequality is implemented using the construction given above.

\begin{lemma}
	$\psi$ is satisfiable if and only if $P$ has a solution.
\end{lemma}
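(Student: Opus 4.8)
The plan is to prove the two implications of the biconditional separately, and in both directions the only real work is tracking the $\eps$'s: the relaxed relations \leeps{} and \geeps{} shift a threshold by $\pm\eps$, and the extra ``$+\eps$'' added to the right-hand side of every gadget constraint is placed precisely to absorb this slack. So throughout I will repeatedly use that $a \leeps b$ is equivalent to $a \le b + \eps$ and $a \geeps b$ is equivalent to $a \ge b - \eps$.

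For the direction ``$P$ solvable $\Rightarrow \psi$ satisfiable'', I would take a common root $x^\ast$ of $p_1,\dots,p_k$, set $x := x^\ast$, define the auxiliary variables by the recursion $g_1 := 2+\eps$ and $g_j := g_{j-1}^2 + \eps$, and set $t := \eps\cdot g_{L+6} + \eps$. Then each gadget inequality holds, in fact strictly: $g_1 = 2+\eps \ge 2 = (2+\eps)-\eps$, and $g_j = g_{j-1}^2+\eps \ge g_{j-1}^2 = (g_{j-1}^2+\eps)-\eps$, and likewise for the constraint defining $t$. Finally $t\cdot p_i(x^\ast) = 0$, so $0 \leeps 0$ and $0 \geeps 0$ hold because $0 < \eps$ and $0 > -\eps$. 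One should also note that unrolling the recursion gives $g_j \ge 2^{2^{j-1}}$, hence $g_{L+6} \ge 2^{2^{L+5}} = 1/\gap(P)$ and $t > \eps\cdot g_{L+6} \ge \eps/\gap(P)$, which is the intended behaviour of the construction.

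For the direction ``$\psi$ satisfiable $\Rightarrow P$ solvable'', let $(x,t,g_1,\dots,g_{L+6})$ be a satisfying assignment. First I would show the gadget still forces $t$ large despite the relaxation: $g_1 \geeps 2+\eps$ gives $g_1 \ge 2$; $g_j \geeps g_{j-1}^2+\eps$ gives $g_j \ge g_{j-1}^2$ (the $-\eps$ coming from \geeps{} exactly cancels the $+\eps$ on the right), so by induction $g_j \ge 2^{2^{j-1}}$ and in particular $g_{L+6} \ge 2^{2^{L+5}} = 1/\gap(P)$; then $t \geeps \eps\cdot g_{L+6}+\eps$ gives $t \ge \eps\cdot g_{L+6} \ge \eps/\gap(P) > 0$. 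Next, constraints~\eqref{l1} and~\eqref{l2} read $t\cdot p_i(x) \le \eps$ and $t\cdot p_i(x) \ge -\eps$, i.e.\ $|t\cdot p_i(x)| \le \eps$ for every $i$; dividing by $t > 0$ and using $t \ge \eps/\gap(P)$ yields $|p_i(x)| \le \eps/t \le \gap(P)$ for every $i$. Applying the contrapositive of Lemma~\ref{lem:twotwol} — if some point makes all $|p_i(\cdot)|$ at most $\gap(P)$, then $P$ has an exact solution — completes the argument.

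The main obstacle is purely the $\eps$-bookkeeping in the squaring gadget: I must make sure that each relaxed inequality, once combined with the compensating $+\eps$ offset on its right-hand side, degrades to the corresponding \emph{exact} inequality, so that the $L+6$ iterated squarings genuinely push $g_{L+6}$ beyond $1/\gap(P)$ instead of decaying. (If one naively used $g_j \geeps g_{j-1}^2$ without the offset, one would only get $g_j \ge g_{j-1}^2-\eps$, which is not enough.) Once that is pinned down, both directions are short computations plus the single invocation of Lemma~\ref{lem:twotwol}.
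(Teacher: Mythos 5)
Your proof is correct and follows essentially the same approach as the paper's: construct an explicit witness in the forward direction, and in the reverse direction use the gadget to establish $t \ge \eps/\gap(P)$, derive $|p_i(x)| \le \gap(P)$, and invoke Lemma~\ref{lem:twotwol}. The paper verifies the squaring gadget's behaviour in a separate paragraph before stating the lemma, whereas you fold that verification into the proof itself and are somewhat more explicit about how the ``$+\eps$'' offset in each gadget constraint cancels the slack introduced by $\geeps$ — but these are presentational differences, not a different route.
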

\begin{proof}
	First, let us assume that $P$ has a solution.
	This means that there exists an $x \in \reals^n$ such that $p_i(x) = 0$ for all
	$i$. Note that 
	$x$ clearly satisfies inequalities \eqref{l1} and \eqref{l2}, while
	inequality \eqref{l3} can be satisfied by fixing $t$ to be any number greater than
	$\eps/\gap(P)$. So we have proved that $\psi$ is satisfiable.
	
	On the other hand, now we will assume that $x \in \reals^n$ satisfies $\psi$. 
	Note that we must have 
	\begin{equation*}
	p_i(x) \le \eps / t \le \gap(P)
	\end{equation*}
	and likewise 
	\begin{equation*}
	p_i(x) \ge -\eps / t \ge -\gap(P),
	\end{equation*}
	and hence $|p_i(x)| \le \gap(P)$ for all $i$. But Lemma~\ref{lem:twotwol} states
	that this is only possible in the case where $P$ has a solution.
\end{proof}
This completes the proof of the following theorem.


\begin{theorem}
	\label{thm:unconstrained}
	$\eps$-\etr = \etr for all $\eps \geq 0$.
\end{theorem}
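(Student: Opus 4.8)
The plan is to prove Theorem~\ref{thm:unconstrained} by combining the two directions of a polynomial-time reduction between $\eps$-\etr and \etr, most of which has already been assembled in the preceding text. First I would handle the easy direction: $\eps$-\etr reduces to \etr. Given an $\eps$-\etr formula, simply rewrite each approximate atom syntactically as an exact polynomial inequality — $x <_\eps y$ becomes $x < y + \eps$, $x \ge_\eps y$ becomes $x \ge y - \eps$, and so on — where $\eps$ is a fixed rational supplied with the instance. This transformation is local, size-preserving, and clearly computable in polynomial time, and the resulting \etr formula is true if and only if the original $\eps$-\etr formula is, by the very definition of the $\eps$-operators. Hence $\eps\text{-}\etr \subseteq \etr$.

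For the converse direction, $\etr \subseteq \eps\text{-}\etr$, I would invoke the chain of results just developed. Since $\feas$ is \etr-complete (the cited theorem of Schaefer and Stefankovic), it suffices to reduce $\feas$ to $\eps$-\etr in polynomial time. That reduction is exactly the construction of the formula $\psi$ given above: starting from an instance $P = (p_i)_{i=1,\dots,k}$ of $\feas$ with bit-length $L$, one builds the repeated-squaring gadget on the auxiliary variables $t, g_1, \dots, g_{L+6}$ that forces $t \ge \eps/\gap(P)$, and then conjoins the constraints $t \cdot p_i(x) \le_\eps 0$ and $t \cdot p_i(x) \ge_\eps 0$ for all $i$. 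The lemma just proved shows $\psi$ is satisfiable if and only if $P$ has a solution, and the text already notes that $|\psi|$ is polynomial in $|P|$, so this is a valid polynomial-time reduction. Composing with \etr-completeness of $\feas$ gives $\etr \subseteq \eps\text{-}\etr$, and together with the first direction, $\eps\text{-}\etr = \etr$.

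Two small points need to be tidied to make the statement hold for \emph{all} $\eps \ge 0$ as claimed. The case $\eps = 0$ is immediate, since then the approximate operators literally coincide with the exact ones and $0$-\etr is syntactically a subclass of \etr (modulo the negation-normal-form and equality-splitting conventions, which are polynomial-time preprocessing and do not change truth value). For $\eps > 0$, the reduction above works uniformly; the only thing to check is that $\eps$ itself has a polynomial-size representation so that constants like $\eps/\gap(P)$ and $\eps \cdot g_{L+6} + \eps$ appearing in $\psi$ are legitimate rational inputs — but $\eps$ is part of the problem instance, and $\gap(P) = 2^{-2^{L+5}}$ is specified by a single exponent, so $\eps/\gap(P)$ is encoded succinctly and the repeated-squaring gadget is exactly what lets us express it without writing it out in full.

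The main obstacle, and the reason the construction is subtle rather than trivial, is the direction $\etr \subseteq \eps\text{-}\etr$: approximate inequalities are genuinely weaker than exact ones, so one cannot hope to enforce an exact equation $p_i(x) = 0$ directly. The trick that makes it go through — and the step I would emphasize — is the multiplicative rescaling by the large quantity $t$: the slack $\eps$ in the weakened constraints $t \cdot p_i(x) \le_\eps 0$ translates into a slack of only $\eps/t \le \gap(P)$ on $p_i(x)$ itself, and Lemma~\ref{lem:twotwol} guarantees that a slack below $\gap(P)$ is already enough to force an exact root. Getting $t$ to be that large using only $\eps$-inequalities, without the accumulated $\eps$-errors in the repeated-squaring chain swamping the doubly-exponential growth, is precisely what the carefully chosen constants ($2+\eps$, the $+\eps$ in each squaring step, the $L+6$ iterations) accomplish, and verifying that these error terms are harmless is the one genuinely technical part of the argument.
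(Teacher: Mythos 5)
Your proof is correct and takes essentially the same route as the paper: the easy direction by local syntactic rewriting of $\eps$-atoms into exact inequalities, and the hard direction by reducing $\feas$ to $\eps$-\etr via the repeated-squaring gadget, the multiplicative rescaling by $t$, and Lemma~\ref{lem:twotwol}. One small point in your favour: you explicitly handle $\eps=0$, where the gadget-based reduction degenerates (the constraint $t \ge \eps/\gap(P)$ becomes $t \ge 0$, and $t=0$ would trivially satisfy $\psi$); the paper states the theorem for all $\eps \ge 0$ but its construction really only covers $\eps>0$, so your observation that $0$-\etr coincides with \etr up to polynomial-time normal-form preprocessing is a worthwhile clarification.
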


\subsubsection{Constrained $\eps$-\etr} 

In our negative result for unconstrained $\eps$-\etr, we abused the fact that
variables could be arbitrarily large to construct the doubly-exponentially large
number~$t$. So, it makes sense to ask whether $\eps$-\etr gets easier if we
\emph{constrain} the problem so that variables cannot be arbitrarily large.

In this paper, we consider $\eps$-\etr problems that are constrained by a bounded convex
set in $\reals^q$. For vectors $c_1, c_2, \dots, c_l \in \reals^q$  we use
$\conv(c_1, c_2, \dots, c_l)$ to denote the set containing every vector that
lies in the convex hull of $c_1$ through $c_l$. In the {\em constrained $\eps$-\etr},
we require that the solution of the $\eps$-\etr problem should also lie in the
convex hull of $c_1$ through $c_l$.

\begin{definition}\label{def:const_eps_ETR}
	Given vectors $c_1, c_2, \dots, c_l \in \reals^q$ and a Boolean formula $F$ that uses the operators $\{<_\eps,
	\le_\eps, \ge_\eps, >_\eps\}$, the constrained $\eps$-\etr problem asks us to decide whether
	\begin{equation*}
	\exists x_1 \exists x_2 \dots \exists x_q \cdot \bigl(x \in \conv(c_1, c_2, \dots,
	c_l) \land F(x) \bigr).
	\end{equation*}
\end{definition}

Note that, unlike the constraints used in $F$, the convex hull constraints are
not weakened. So the resulting solution $x_1, x_2$, $\dots$, $x_q$, must
actually lie in the convex set.

\section{Approximating Constrained $\eps$-\etr}

\subsection{Polynomial classes}

To state our main theorem, we will use a certain class of polynomials where the
coefficients are given as a tensor. This will be particularly useful when we
apply our theorem to certain problems, such as normal form games. To be clear
though, this is not a further restriction on the constrained $\eps$-\etr problem,
since all polynomials can be written down in this form. 

In the sequel, we use the term {\em variable} to refer to a $p$-dimensional vector; for example, in Definition \ref{def:const_eps_ETR}, the $q$-dimensional vector $x$ would be called variable under this new naming. The variables of the polynomials we will study will be $p$-dimensional vectors
denoted as $x_1, x_2, \ldots, x_n$, where $x_j (i)$ will denote the $i$-th
element ($i \in [p]$) of vector $x_j$. The coefficients of the polynomials will be a captured by tensor
denoted by $A$. Given a $\times_{j=1}^{n} p$ tensor $A$, we denote by $a(i_1,
\ldots, i_n)$ its element with coordinates $(i_1, \ldots, i_n)$ on the tensor's
dimensions $1, \ldots, n$, respectively, and by $\alpha$ we denote the maximum
absolute value of these elements. We define the following two classes of
polynomials.

\begin{itemize}
	\item {\bf Simple tensor multivariate.}  
	
	We will use $\stm(A,x^{d_1}_{1},\ldots,x^{d_n}_{n})$ 
	denote an STM polynomial with $n$ variables where each variable $x_j$, $j \in [n]$ is applied $d_j$ times on tensor $A$ that defines the coefficients. Tensor $A$ has $\sum_{j=1}^{n} d_j$ dimensions with $p$ indices each. We will say that an STM polynomial is of maximum 
	degree $d$, if $d = \max_{j}d_j$.
	Here is an example of a degree 2 simple tensor polynomial with two variables:
	\begin{align*}
	\stm(A, x^{2},y) = \sum_{i=1}^p \sum_{j=1}^p  \sum_{k=1}^p x(i)\cdot x(j) \cdot y(k) \cdot a(i,j,k) + 10.
	\end{align*}
	
	This polynomial itself is written as follows.
	\begin{align*}
	&\stm(A, x^{d_1}_1,\ldots,x^{d_n}_n) = \\
	&\sum_{i_{1,1} \in [p]}\cdots\sum_{i_{n,d_n} \in [p]}
	(x_1(i_{1,1}))\cdot\ldots\cdot(x_1(i_{1,d_1}))\cdot\ldots\cdot (x_n(i_{n,1}))\cdot\ldots\cdot(x_n(i_{n,d_n}))\cdot \\
	&\qquad \qquad \qquad \qquad \qquad \qquad \qquad \cdot a(i_{1,1},\ldots,i_{1,d_1} \ldots, i_{n,1}, \ldots, i_{n,d_n}) + a_0.
	\end{align*}
	
	\item {\bf Tensor multivariate.} 
	A tensor multivariate (TMV) polynomial is the sum over a number of simple tensor
	multivariate polynomials. We will use \text{\quad} $\tmv(x_1,\ldots,x_n)$ to denote a
	tensor multivariate polynomial with $n$ vector variables, which is formally defined as 
	\begin{equation*}
	\tmv(x_1,\ldots,x_n) = \sum_{i \in [t]} \stm(A_i, x^{d_{i1}}_1,\ldots,x^{d_{in}}_n),
	\end{equation*}
	where the exponents $d_{i1},\ldots,d_{in}$ depend on $i$, and $t$ is the number of simple 
	multivariate polynomials.  We will say that $\tmv(x_1,\ldots,x_n)$ has length $t$ if it is the
	sum of $t$ STM polynomials, and that it is of degree $d$ if $d = \max\limits_{i \in [t], j \in [n]}d_{ij}$. Observe that $t \leq (d+1)^n$; it could be the case that a TMV polynomial is a sum of STM polynomials, each of which has a distinct combination of exponents $d_{i1},\ldots,d_{in}$ in its variables, where each $d_{ij} \in \{ 0, 1, \dots, d \}$.
\end{itemize}

\subsection{$\eps$-\etr with tensor constraints}

We focus on $\eps$-\etr instances $F$ where all constraints are of the form
$\tmv(x_1, \ldots, x_n) \bowtie 0 $, where $\bowtie$ is an operator from the set
$\{<_\eps, \le_\eps, >_\eps, \ge_\eps\}$. 
Recall that each TMV constraint considers vector variables. 
We consider the number of variables used in $F$ (denoted as $n$) to be the 
number of vector variables used in the TMV constraints. 
So the value of $n$ used in our main theorem may be constant if
only a constant number of vectors are used, even if the underlying $\eps$-\etr
instance actually has a non-constant number of variables.
For example, if $x$ and $y$ and $w$ are $p$-dimensional probability distributions 
and $A_1$ and $A_2$ are $p \times p$ tensors, the TMV constraint 
$x^TA_1y + w^TA_2x > 0$ has three variables, degree 1, length two; though 
the underlying problem has $3\cdot p$ variables.

Note that every $\eps$-\etr constraint can be written as a TMV constraint,
because all multivariate polynomials can be written down as a TMV polynomial.
Every term of a TMV can be written as a STM polynomial where the tensor entry is non zero for
exactly the combination of variables used in the term, and $0$ otherwise. Then a
TMV polynomial can be constructed by summing over the STM polynomial for each
individual term.

\subsubsection{The main theorem}

Given an $\eps$-\etr formula $F$, we define $\exact(F)$ to be a Boolean formula
in which every approximate constraint is replaced with its exact variant,
meaning that every instance of $x \le_\eps y$ is replaced with $x \le y$, and
likewise for the other operators.

Our main theorem is as follows.

\begin{theorem}
	\label{thm:main}
	Let $F$ be an $\eps$-\etr instance with $n$ vector variables and $m$ multivariate-polynomial
	constraints each one of maximum length $t$ and maximum degree $d$, 
	constrained by the convex hull defined by $c_1, c_2, \ldots, c_l \in \reals^{np}$. 
	Let $\alpha$ be the maximum absolute value of the coefficients of constraints of $F$, 
	and let $\gamma =\max_i \|c_i\|_\infty$. If $\exact(F)$ has a solution in 
	$\conv(c_1, c_2, \ldots, c_l)$, then $F$ has a $k$-uniform solution in 
	$\conv(c_1, c_2, \ldots, c_l)$ where
	\begin{align*}
	k = \frac{512 \cdot \alpha^6 \cdot \gamma^{2d+2} \cdot n^6 \cdot d^6 \cdot t^5 \cdot \ln ( 2 \cdot \alpha' \cdot \gamma' \cdot d \cdot n \cdot m )}{\eps^5},
	\end{align*}
	where $\alpha':=\max(\alpha, 1), \gamma':=\max(\gamma, 1)$.
\end{theorem}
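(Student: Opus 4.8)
The plan is to follow the classical Lipton--Markakis--Mehta sampling paradigm, but with a twist that handles polynomials of degree $d$. We start from an exact solution $x^* = (x_1^*,\ldots,x_n^*) \in \conv(c_1,\ldots,c_l)$ of $\exact(F)$. Each $x_j^*$ can be written as a convex combination $\sum_{r=1}^l \lambda_{jr} c_r$ of the generators, i.e.\ as the expectation of a random vector $C_j$ that equals $c_r$ with probability $\lambda_{jr}$. We then draw $k$ i.i.d.\ samples of each $C_j$ and form the empirical average $\xhat_j = \frac{1}{k}\sum_{s=1}^k C_j^{(s)}$. By construction $\xhat_j$ is $k$-uniform and lies in $\conv(c_1,\ldots,c_l)$, so the only thing to prove is that with positive probability $\xhat = (\xhat_1,\ldots,\xhat_n)$ satisfies every weakened constraint of $F$. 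Since $F$ has $m$ constraints, by a union bound it suffices to show that each individual TMV constraint fails with probability strictly less than $1/m$, which is why a logarithmic dependence on $m$ appears in $k$.

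The heart of the argument is a concentration bound: for a fixed TMV constraint built from STM polynomials $\stm(A_i, x_1^{d_{i1}},\ldots,x_n^{d_{in}})$, I must show that $|\tmv(\xhat_1,\ldots,\xhat_n) - \tmv(x_1^*,\ldots,x_n^*)| \le \eps$ with high probability. The obstacle here is exactly the nonlinearity: $\tmv(\xhat)$ is not a sum of independent random variables, because each $\xhat_j$ appears $d_{ij}$ times in a single STM term, so products like $\xhat_j(i_{j,1})\cdots\xhat_j(i_{j,d_{ij}})$ are $d_{ij}$-fold products of the \emph{same} empirical average and are not unbiased estimators of $x_j^*(\cdot)^{d_{ij}}$. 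The plan to get around this is a hybrid/telescoping decomposition of the degree-$d$ monomial: write the difference $\prod_s \xhat_j(\cdot) - \prod_s x_j^*(\cdot)$ as a telescoping sum over $s$ where the $s$-th summand replaces one factor of $\xhat_j$ by $x_j^*$, multiplied by the remaining factors; each summand is then a product of a bounded quantity (all entries are bounded by $\gamma$ since we stay in the convex hull, hence $\xhat_j$ and $x_j^*$ have $\ell_\infty$-norm at most $\gamma$, contributing the $\gamma^{2d+2}$-type factor) times a genuine deviation $\xhat_j(\cdot) - x_j^*(\cdot)$. For the linear-in-one-coordinate pieces one can apply Hoeffding's inequality (or a vector-valued/Bernstein variant) to bound $|\xhat_j(i) - x_j^*(i)|$ by roughly $\sqrt{\ln(\cdot)/k}$, and then sum up all $p^{d}$-ish index combinations weighted by the tensor entries (bounded by $\alpha$), the length $t$, and the degree $d$. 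Tracking how many times each error term is multiplied and how the union bound over $m$ and over the (at most $(d+1)^n$) STM summands inflates the constants is exactly what produces the stated exponents $\alpha^6$, $n^6$, $d^6$, $t^5$, $\eps^{-5}$.

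Concretely, the steps I would carry out are: (1) set up the probabilistic construction of $\xhat$ and verify $k$-uniformity and membership in the convex hull; (2) fix one TMV constraint and one STM summand within it, and expand $\stm(A_i,\xhat) - \stm(A_i,x^*)$ via the telescoping identity across all $\sum_j d_{ij}$ repeated factors; (3) bound each telescoped term as (bounded prefactor in $\gamma$ and $\alpha$) $\times$ (single-coordinate empirical deviation), and invoke Hoeffding with failure probability budget $\delta/(\text{number of terms})$ where $\delta = 1/(2m)$ or similar; (4) sum the bounds over all index tuples, over the $t$ STM summands, to get $|\tmv(\xhat)-\tmv(x^*)| \le \eps$ except with probability $< 1/m$; (5) union bound over the $m$ constraints to conclude that with positive probability $\xhat$ satisfies every weakened constraint, and note that an exact constraint $\tmv(x^*) = 0$ becomes $\tmv(\xhat) \in [-\eps,\eps]$, i.e.\ both $\le_\eps 0$ and $\ge_\eps 0$ hold, so $\xhat$ satisfies $F$; (6) solve the resulting inequality for $k$ to obtain the displayed bound, using $\alpha' = \max(\alpha,1)$ and $\gamma' = \max(\gamma,1)$ to keep the logarithm well-defined when coefficients or generators are small. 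The main obstacle, and the genuinely new ingredient over prior sampling arguments, is step (2)--(3): making the telescoping decomposition work uniformly for arbitrary degree $d$ while keeping the $\gamma$-exponent at $2d+2$ rather than something exponential in $n$, and ensuring the per-term Hoeffding failure probabilities still sum to less than $1/m$ after being split across all index tuples and STM summands.
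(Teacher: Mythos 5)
There is a genuine gap in steps (3)--(4). You correctly identify the obstacle (nonlinearity in $\xhat_j$) and the telescoping decomposition in step (2) is a legitimate first move (the paper uses the same one-variable-at-a-time hybrid in Lemma~\ref{lem:k-TML}), but the concentration argument you attach to it cannot produce the stated bound. You propose to apply Hoeffding to each coordinate deviation $|\xhat_j(i) - x_j^*(i)| \lesssim \sqrt{\ln(\cdot)/k}$ and then ``sum up all $p^d$-ish index combinations weighted by the tensor entries.'' That sum inflates the error by a factor of order $p^{d-1}$, forcing $k$ to grow polynomially in the ambient dimension $p$; but in the theorem $k$ has \emph{no} dependence on $p$. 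The way the paper avoids a $p$-factor in the multilinear case is to apply Hoeffding to the \emph{scalar} quantity $ML_j^i$ directly: once the other vector variables are frozen, the multilinear polynomial is a linear functional of $x'_i$, and since $x'_i$ is an average of $k$ i.i.d.\ samples, $ML_j^i$ is itself an average of $k$ bounded i.i.d.\ scalars. Your $s$-th telescoped piece cannot be treated this way, because it contains both the factor $(\xhat_j(i_s) - x_j^*(i_s))$ and the factors $\prod_{s'>s}\xhat_j(i_{s'})$, so the scalar quantity (after summing over index tuples) is a degree-$(d-s+1)$ polynomial in the same $k$ samples, not a linear average of bounded i.i.d.\ scalars, and Hoeffding's inequality does not apply to it.

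The missing idea is exactly the novel ingredient in the paper's Lemma~\ref{lem:standard-d-new}: rather than sample $\xhat_j$ once, split it into $r$ \emph{independent} $(k/r)$-uniform copies $x_1, \ldots, x_r$ drawn from the same exact distribution, and set $x = \frac{1}{r}(x_1 + \cdots + x_r)$. Expanding $A(x^d) = \frac{1}{r^d}\sum_{(i_1,\ldots,i_d)\in[r]^d} A(x_{i_1},\ldots,x_{i_d})$, the tuples with all indices distinct produce genuine multilinear polynomials in independent random vectors, to which the scalar Hoeffding of the multilinear lemma applies with no $p$-dependence; the tuples with a repeated index are a fraction at most $(d-1)^2/r$ of the total, each bounded by $\alpha\gamma^d$, so choosing $r = 2\alpha^2\gamma^d d^2/\eps$ confines their contribution to $\eps/2$ deterministically. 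This ``polarization'' of the degree-$d$ term into multilinear pieces plus a small controlled remainder is what restores the structure Hoeffding needs, and is absent from your proposal; without it, steps (3)--(6) do not go through and the claimed $k$ cannot be obtained.
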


\subsubsection{Consequences of the main theorem}\label{sec:conseq_main}

Our main theorem gives a QPTAS for approximating a fragment of $\eps$-\etr. The
total number of $k$-uniform vectors in a convex set $C = \conv(c_1, c_2, \dots,
c_l)$ is $l^{O(k)}$. So, if the parameters $\alpha$, $\gamma$, $d$, $t$,
and $n$ are all polylogarithmic in $m$, then our main theorem tells us that the total number
of $k$-uniform vectors is $l^{O(\polylog m)}$, where $m$ is the number of
constraints. So if we enumerate each $k$-uniform vector $x$, we can check
whether $F$ holds, and if it does, we can output $x$. If no $k$-uniform vector
exists that satisfies $F$, then we can determine that $\exact(F)$ has no
solution. This gives us the following result.

\begin{corollary}
	\label{cor:qptas}
	Let $F$ be an $\eps$-\etr instance constrained by the convex hull defined by $c_1,
	c_2, \ldots, c_l$. If $\alpha$, $\gamma$, $n$, $d$, and $t$ are polylogarithmic in $m$, then
	we have an algorithm and runs in time $l^{O\left( \frac{\polylog m}{\eps^5} \right)}$  that
	either finds a solution to $F$, or determines that $\exact(F)$ has no solution.
\end{corollary}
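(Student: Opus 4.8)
The plan is to turn Theorem~\ref{thm:main} into a brute-force search over a coarse grid. First I would substitute the hypotheses into the value of $k$ supplied by Theorem~\ref{thm:main}. When $\alpha$, $\gamma$, $n$, $d$, and $t$ are all polylogarithmic in $m$, every factor of the numerator of $k$ is polylogarithmic in $m$ as well — the factors $\alpha^6$, $\gamma^{2d+2}$, $n^6$, $d^6$, $t^5$ by the hypothesis, and $\ln(2\cdot\alpha'\cdot\gamma'\cdot d\cdot n\cdot m) = O(\log m)$ since $2\alpha'\gamma' dnm$ is at most $m^{1+o(1)}$ — so $k = O(\polylog m / \eps^5)$. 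Consequently the number of $k$-uniform vectors of $\conv(c_1,\dots,c_l)$, which is $l^{O(k)}$ as observed in Section~\ref{sec:conseq_main}, is at most $l^{O(\polylog m/\eps^5)}$, and all of them can be listed within this time bound, each corresponding to a choice of non-negative integers $\beta_1,\dots,\beta_l$ with $\sum_i \beta_i = k$.

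Next, for each enumerated $k$-uniform vector $x$ I would evaluate every TMV constraint of $F$ at $x$ and test whether the associated $\eps$-weakened comparison holds. Evaluating a single STM summand is just a sum over the tensor entries, so it is polynomial time in the bit-size of the instance (the tensors are part of the input, hence their size is already accounted for); likewise the rational coordinates of a $k$-uniform point have bit-size polynomial in the input together with $d$ and $\log k$, which are polylogarithmic in $m$. Hence checking all constraints costs polynomial time per vector. If some enumerated $x$ passes every check, I output it: by construction it lies in $\conv(c_1,\dots,c_l)$ and satisfies $F$, so it is a valid answer. If no enumerated $k$-uniform vector satisfies $F$, I conclude that $\exact(F)$ has no solution in $\conv(c_1,\dots,c_l)$; this is exactly the contrapositive of Theorem~\ref{thm:main}, which guarantees that a $k$-uniform solution of $F$ exists whenever $\exact(F)$ is solvable inside the hull, and the search has inspected every $k$-uniform point. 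The overall running time is the enumeration time times the per-point cost, namely $l^{O(\polylog m/\eps^5)}$.

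Since all the genuine content is already packaged inside Theorem~\ref{thm:main}, there is essentially no obstacle at this stage — the argument is simply ``enumerate and test''. The only points that need a moment's attention are the running-time bookkeeping, in particular confirming that under the stated hypotheses the grid parameter $k$, and therefore the count $l^{O(k)}$ of grid points, stays within the claimed $l^{O(\polylog m/\eps^5)}$ bound, and the routine verification that a TMV constraint can be evaluated at a rational grid point in polynomial time in the Turing model given the tensor encoding of its coefficients.
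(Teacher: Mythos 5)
Your argument is the paper's own: substitute the hypotheses into the bound for $k$ from Theorem~\ref{thm:main}, observe $k = O(\polylog m / \eps^5)$, enumerate the $l^{O(k)}$ $k$-uniform points of the hull, test each against the $\eps$-weakened constraints in polynomial time per point, and invoke the contrapositive of Theorem~\ref{thm:main} when no candidate passes. This matches the reasoning in the paragraph preceding Corollary~\ref{cor:qptas} essentially line for line.

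One detail is worth flagging because you assert it explicitly and it does not in fact follow ``by the hypothesis'': $\gamma$ and $d$ each being polylogarithmic in $m$ does not make $\gamma^{2d+2}$ polylogarithmic, since the exponent itself depends on $d$. Already for a constant $\gamma > 1$ and $d = \Theta(\log m)$ one has $\gamma^{2d+2} = 2^{\Theta(\log m)} = m^{\Theta(1)}$, which is polynomial rather than polylogarithmic, and for larger polylog $d$ it is quasipolynomial. The paper's prose makes the same silent jump. The bookkeeping goes through exactly when $\gamma^{2d+2}$ is itself $\polylog m$; sufficient conditions are $\gamma \le 1$ (the scaled-simplex case, which covers every application the paper discusses) or $d = O(\log\log m)$. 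So either read the corollary's hypothesis as implicitly requiring $\gamma^{2d+2} = \polylog m$, or add that clause; your proof is otherwise complete and identical in approach to the paper's.
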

Let $N$ be the input size of the given problem. If $m$ is constant and $l$ is polynomial in $N$ then this gives a PTAS, while if $m$ and $l$ are polynomial in $N$, then this gives a QPTAS.

In Section~\ref{sec:appl} we will show that the problem of approximating
the best social welfare achievable by an approximate Nash equilibrium in a
two-player normal form game can be written down as a constrained  $\eps$-\etr
formula where $\alpha$, $\gamma$, $d$, and $n$ are constant (and recall that $t \leq (d+1)^n$). It has been shown
that, assuming the exponential time hypothesis, this problem cannot be solved faster than quasi-polynomial
time~\cite{BKW15,DFS18}, so this also implies that constrained $\eps$-\etr where 
$\alpha$, $\gamma$, $d$, and $n$ are constant cannot be solved faster than
quasi-polynomial time unless the exponential time hypothesis is false.

Many $\eps$-\etr problems are naturally constrained by sets that are defined by
the convex hull of exponentially many vectors. The cube $[0, 1]^p$ is a natural
example of one such set. Brute force enumeration does not give an efficient
algorithm for these problems, since we need to enumerate $l^{O(k)}$ vectors, and
$l$ is already exponential in the dimension parameter $p$. However, our main theorem is able to provide
non-deterministic polynomial time algorithms for these problems. 

This is because
each $k$-uniform vector is, by definition, the convex combination of at most $k$
of the vectors in the convex set, and this holds even if $l$ is exponential. So,
provided that $k$ is polynomial in the input size, we can guess the subset of vectors that are
used, and then verify efficiently that the formula holds. This is particularly useful for
problems where $\exact(F)$ always has a solution, which is often the case in
game theory applications, since it places the
approximation problem in \NP, whereas deciding the existence of an exact solution may be
\etr-complete.

\begin{corollary}
	\label{cor:nptas}
	Let $F$ be an $\eps$-\etr instance constrained by the convex hull defined by $c_1,
	c_2, \ldots, c_l$. If $\alpha$, $\gamma$, $d$, $t$, $n$, are polynomial in the input size,
	then there is a non-deterministic polynomial time algorithm that either finds a
	solution to $F$, or determines that $\exact(F)$ has no solution. Moreover, if
	$\exact(F)$ is guaranteed to have a solution, then the problem of finding an
	approximate solution for $F$ is in \NP. 
\end{corollary}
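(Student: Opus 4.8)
The plan is to derive Corollary~\ref{cor:nptas} directly from Theorem~\ref{thm:main} by a guess-and-check argument, using the fact that a $k$-uniform vector is, by construction, a convex combination of at most $k$ of the generating vectors $c_1,\ldots,c_l$. First I would observe that if $\alpha,\gamma,d,t,n$ are polynomial in the input size $N$, then substituting these bounds into the formula for $k$ in Theorem~\ref{thm:main} shows that $k$ is itself polynomial in $N$ (treating $\eps$ as a constant, or more precisely, the exponent on $1/\eps$ is a fixed constant $5$, so the dependence on $\eps$ is the usual PTAS-style dependence). The logarithmic factor $\ln(2\alpha'\gamma' d n m)$ is clearly polynomially bounded, and all the other factors are products of polynomially-bounded quantities, so the whole expression for $k$ is $\poly(N)$.

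Next I would describe the nondeterministic algorithm explicitly. The machine guesses a multiset of at most $k$ indices from $\{1,\ldots,l\}$, i.e. nonnegative integers $\beta_1,\ldots,\beta_l$ with $\sum_i\beta_i = k$ and only at most $k$ of them nonzero, together with the implied $k$-uniform vector $x = \sum_{i=1}^l(\beta_i/k)\,c_i$. Here the key point is that although $l$ may be exponential in $N$, the witness only needs to record the at most $k$ indices that are actually used (and their multiplicities), so the guessed certificate has size $\poly(N)$; computing $x$ from this certificate and then evaluating every TMV constraint of $F$ at $x$ takes polynomial time, since evaluating an STM/TMV polynomial given its tensor and the vector arguments is a straightforward polynomial-time arithmetic computation, and there are $m$ constraints. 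The algorithm accepts iff all constraints of $F$ are satisfied (in their $\eps$-weakened form) by $x$. For correctness: if $\exact(F)$ has a solution in $\conv(c_1,\ldots,c_l)$, then by Theorem~\ref{thm:main} there exists a $k$-uniform solution to $F$, so some guess succeeds; conversely if some guess succeeds, the algorithm has in hand an explicit $x\in\conv(c_1,\ldots,c_l)$ satisfying $F$. Contrapositively, if no accepting path exists then no $k$-uniform solution to $F$ exists, and hence by Theorem~\ref{thm:main} $\exact(F)$ has no solution in the convex hull — which is exactly the required dichotomy.

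Finally, for the last sentence of the statement I would note that when $\exact(F)$ is guaranteed to have a solution in the convex hull (as happens in many game-theoretic and fair-division settings where existence is guaranteed by a fixed-point or topological argument), the "or determines that $\exact(F)$ has no solution" branch is vacuous, so the algorithm is a genuine \NP machine: every input is a yes-instance, a polynomial-size certificate (the multiset of $\le k$ indices) always exists, and it is verifiable in polynomial time. Hence finding an approximate solution to $F$ lies in \NP.

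The main obstacle, and the only place needing care, is bounding the bit-complexity of the arithmetic: the entries of $c_i$ and the tensor coefficients are rationals whose encodings contribute to $N$, and the $k$-uniform vector $x$ has denominators of size $k$, so evaluating a degree-$d$ TMV polynomial at $x$ involves rationals of bit-length roughly $d\log k$ times the input bit-length — this is still polynomial in $N$ since $k$ and $d$ are polynomially bounded, but it should be remarked that the verification stays in polynomial time in the Turing model. Everything else is bookkeeping: plugging polynomial bounds into the closed-form expression for $k$ and checking that the guessed certificate is of polynomial size despite $l$ possibly being exponential.
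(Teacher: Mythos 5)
Your argument reproduces the paper's own justification for this corollary essentially verbatim: a $k$-uniform vector is by construction a convex combination of at most $k$ of the $c_i$, so the witness consists of at most $k$ indices with multiplicities, which has polynomial size even when $l$ is exponential, and the nondeterministic machine guesses this witness, reconstructs $x$, and evaluates the $\eps$-relaxed constraints in polynomial time. Your closing remark on bit-complexity in the Turing model is a welcome precision that the paper's two-paragraph discussion glosses over. One place that needs more care (a looseness present in the corollary statement itself): your claim that the factors in the expression for $k$ are ``products of polynomially-bounded quantities'' is not literally right, because the term $\gamma^{2d+2}$ in Theorem~\ref{thm:main} is a polynomially-bounded base raised to a polynomially-bounded exponent, hence potentially super-polynomial in $N$ (e.g.\ if $\gamma$ and $d$ are both $\Theta(N)$ then $\gamma^{2d+2} = N^{\Theta(N)}$). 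To conclude $k \in \poly(N)$ one actually needs the stronger hypothesis that $\gamma^{2d+2}$ itself is polynomial, which holds for instance when $\gamma \le 1$ (the simplex-constrained applications that dominate the paper) or when $d$ is $O(1)$. You should state this restriction explicitly rather than assert that $k$ is automatically polynomial; apart from this single elision, the proposal is sound and follows the paper's reasoning.
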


\subsubsection{Approximation notions}

According to the relaxation procedure for \etr that we have described, each atom $A_i$ of the \etr formula is relaxed additively by a positive quantity $\eps$. The main theorem (Theorem \ref{thm:main}) and the intermediate resluts, give a sufficiently fine discretization (distance at most $1/k$ for some $k \in \mathbb{N}^*$) of the domain of the \etr instance's variables, such that if there exists an exact solution $x^*=(x_1^*, \dots, x_n^*)$ of the formula then there exists a $k$-uniform solution in the discretized domain that $\eps$-satisfies every $A_i$.
In particular we prove that if $A_i = (p(x) \bowtie 0)$, where $p(x)$ is a multivariate polynomial and $\bowtie \in \{ <, \leq, =, \geq, > \}$ then there exists a $k$-uniform vector $x'$ such that $| p(x') - p(x^*) | \leq \eps$. This implies the $\eps$-satisfaction of each $A_i$ by the triangle inequality.

In fact, by this work we do not aim to reply an ``approximate yes/no'' to an \etr instance, i.e. to give a yes/no answer to the relaxed \etr instance, but instead to output an \textit{approximate solution} (if an exact solution exists) to the \etr instance. Therefore, more accurately we should refer to this approximation of \etr as an approximation of \texttt{Function} \etr (\fetr), where \fetr is the function problem extension of the decision problem complexity class \etr. As \etr is the analogue of \NP, \fetr is the analogue of \FNP in the Blum-Shub-Smale computation model \cite{BSS89}.
\begin{definition}[$\eps$-approximation]
	Consider a given \etr instance with domain $D$ and formula $F$. If $x^*$ is a solution to the instance and $x'$ is a solution to the respective $\eetr$ instance for a given $\eps > 0$, then $x'$ is called an \textit{$\eps$-approximation} of $x^*$.
\end{definition}
\begin{definition}[PTAS/QPTAS]
	Consider a function problem $P$ with input size $N$, whose objective is to output a solution $x^*$. An algorithm that computes an $\eps$-approximation $x'$ of $P$ in time polynomial in $N$ for any fixed $\eps>0$ is a Polynomial Time Approximation Scheme (PTAS). An algorithm that computes $x'$ in time $O(N^{\polylog N})$ is a Quasi-Polynomial Time Approximation Scheme (QPTAS).  
\end{definition}

\begin{remark}\label{rmk:beyond_ETR}
	Our technique that finds an $x'$ such that $| p(x') - p(x^*) | \leq \eps$ provides one with more power than showing that polynomial inequalities weakened by $\eps$ hold for $x'$. In fact, it allows for approximation of solutions that need not be described by an \etr formula. A simple example of such a case is the one presented in Section \ref{sec:QP-PTAS} where we seek an approximation of the maximum of the quadratic function in the simplex. The maximization objective does not need to be written in an \etr formula. Instead, we show that any point $f(x)$ of the quadratic function, for $x$ in the simplex, can be approximated by a point $f(x')$ where $x'$ is in a discrete simplex with a small number of points. Then we find the maximum of $f(x')'s$ which is smaller than $\max(f(x))$ by at most $\eps$. 
	
	The fact that operation ``max'' can be executed in time linear in the number of points of the discretized simplex allows us to use our method for expressions with ``max'' which is forbidden in the grammar of \etr. More generally, the following theorem shows that even more complicated objectives, such as ``$\max_{x_1}\min_{x_2}$'' can be treated by a modification of the algorithm described in Section \ref{sec:conseq_main}.
\end{remark}

\begin{theorem}\label{thm:multi-obj}
	Let $F$ be a multi-objective optimization instance whose objective functions are multivariate polynomials, with variables constrained by the convex hull defined by $c_1, c_2,$ $\ldots, c_l$. Let $k$ be the quantity specified in Theorem \ref{thm:main} with $m$ being the number of polynomial functions in the instance, meaning the ones in the objectives and constraints. If every objective on the functions has a polynomial time algorithm to be performed on a discrete domain, then there is an algorithm that runs in time $l^{O(k)}$, and either finds a solution which satisfies every objective of $F$ within additive $\eps$, or determines that $F$ has no solution.
\end{theorem}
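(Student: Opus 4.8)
The plan is to reduce the statement to the discretization supplied by Theorem~\ref{thm:main}, treating every objective polynomial on the same footing as a constraint polynomial. Concretely, I would apply Theorem~\ref{thm:main} to the $\eps$-\etr formula built from the constraints of $F$ alone, but with the parameter $m$ counting \emph{all} polynomial functions occurring in $F$ — those in the constraints together with those inside the objectives — so that the resulting $k$ is large enough to make the grid of $k$-uniform points fine with respect to the objective polynomials as well. The fact I would extract from (the proof of) Theorem~\ref{thm:main} is the stronger ``value approximation'' property already emphasised in the Approximation notions paragraph and in Remark~\ref{rmk:beyond_ETR}: writing $D$ for the set of $k$-uniform vectors of $\conv(c_1,\dots,c_l)$, for \emph{every} point $x$ of the convex hull and every polynomial $p$ among the $m$ functions there is a grid point $x'\in D$ with $|p(x')-p(x)|\le\eps$; moreover the rounding $x\mapsto x'$ acts coordinate-wise on the convex combination, so that when the domain is a product of per-variable convex hulls the rounding of a tuple restricts to a rounding of each component into the corresponding per-component grid.

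The algorithm is then the brute-force enumeration of $D$, which has $l^{O(k)}$ elements: discard any vector that fails to $\eps$-satisfy some constraint of $F$; if nothing survives, output that $F$ has no solution (an exact solution would, by the property above, round to a surviving grid point); otherwise, for each objective of $F$ — by hypothesis a max, a min, or a fixed nesting of alternating min and max of objective polynomials — invoke the assumed polynomial-time procedure to evaluate that objective over the surviving discrete set, and return the resulting solution (a point, or a partial assignment in the nested case). Since the surviving set has size $l^{O(k)}$ and each objective procedure is polynomial in the size of its discrete input, the whole computation runs in time $l^{O(k)}$.

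For correctness I would show that evaluating an objective over $D$ changes its optimal value by at most an additive $\eps$ (after the harmless rescaling of the constant hidden in $k$). For a plain objective $\max p$ the lower bound is immediate: the exact optimizer $x^*$ rounds to some $\eps$-feasible $x'\in D$ with $p(x')\ge p(x^*)-\eps$; for the upper bound, any $\eps$-feasible grid point lies in the same bounded region, so it cannot beat $p(x^*)$ by more than $O(\eps)$. For a nested objective such as $\max_{x_1}\min_{x_2}p(x_1,x_2)$ I would use the coordinate-wise rounding: restricting the inner $\min$ from the continuous $x_2$-domain to its grid $D_2$ raises its value by at most $\eps$ (round the continuous inner minimizer into $D_2$), and then restricting the outer $\max$ from the continuous $x_1$-domain to $D_1$ lowers its value by at most $\eps$ (round the continuous outer maximizer into $D_1$, using that for it the inner value is $\eps$-stable under the $D_2$-restriction uniformly in $x_2$). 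Chaining these through the triangle inequality gives the claimed bound, and the argument telescopes through any fixed number of $\max$/$\min$ alternations.

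The main obstacle is exactly this last step for nested objectives, and it is really a demand on Theorem~\ref{thm:main} rather than on new arithmetic: it needs the \emph{uniform} form of the discretization — ``for every $x$ in the domain there is a nearby grid point on which all the $m$ polynomials are within $\eps$'' — rather than a guarantee tied to a single distinguished exact solution, together with the product structure of the grid over the per-variable domains, so that an inner optimization can be discretized independently of the outer variables. I would therefore spend most of the write-up reading these two structural features off the sampling/probabilistic-method construction behind Theorem~\ref{thm:main}, which already produces a statement of this universal, coordinate-wise shape; the min--max bookkeeping above is then routine.
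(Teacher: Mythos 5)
You take essentially the same route as the paper: enumerate the grid of $k$-uniform points given by Theorem~\ref{thm:main} (with $m$ inflated to count the objective polynomials), discard the points that are not $\eps$-feasible, and run the assumed per-objective algorithms on the survivors. Both you and the paper rest on the ``universal'' form of the rounding guarantee --- for \emph{every} point of the convex hull, not only a distinguished exact solution, there is a nearby grid point on which each of the $m$ polynomials moves by at most $\eps$ --- which is indeed what the probabilistic construction behind Lemmas~\ref{lem:k-TML}--\ref{lem:k-TSM} gives, together with the per-variable independence of the sampling. Your write-up is in fact more explicit than the paper's two-paragraph sketch, and you correctly isolate the two structural facts needed (universal rounding and the product/coordinate-wise form of the grid).

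Where you push further than the paper you also expose a genuine gap that the paper glosses over. Your lower bound for $\max_{x_1}\min_{x_2}p$ rounds the continuous outer maximiser $x_1^*$ to a grid $x_1'$ and then needs $|p(x_1',x_2)-p(x_1^*,x_2)|\le\eps$ \emph{uniformly} over the inner variable $x_2$ (and symmetrically, your inner step needs a rounding of $x_2$ that works for every outer $x_1$, not just $x_1^*$). The Hoeffding/union-bound argument behind Theorem~\ref{thm:main} only guarantees the bound for the $m$ fixed polynomials at a single designated point; making it uniform over $x_2\in D_2$ would require a union bound over $|D_2|=l^{O(k)}$ events, which forces $k\gtrsim k\log l/\eps^2$ and is circular. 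This uniformity \emph{does} hold for the deterministic $L_\infty$ rounding of Theorem~\ref{thm:kunif} via Lemma~\ref{lem:polynomial}, since a bound driven by $\|x_1-x_1'\|_\infty$ preserves $p(\cdot,x_2)$ for all $x_2$ at once, so the nested-objective correctness is clean under that theorem; it is not, however, an immediate consequence of the $k$ taken from Theorem~\ref{thm:main}, which is what the statement uses. You correctly flag this as ``the main obstacle,'' but the resolution you sketch (reading uniformity off the sampling construction) does not actually go through for the probabilistic version, and the paper's own proof never addresses the point at all.
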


\begin{proof}
	As explained at the beginning of this section, our technique discretizes the domain of the variables with a density sufficient to approximate any point of any of the polynomial functions that are given as part of the atoms of an \etr formula. That is, for any $x^*$ in the continuous domain it guarantees the existence of a discrete $x'$ such that for every polynomial $p$ in the atoms, it is $| p(x') - p(x^*) | \leq \eps$. Note now that the technique works for any given set of polynomials when we require that for every polynomial in the set, every point $x^*$ has a discrete $x'$. This is regardless of what the atoms' operators from $\{ <, \leq, =, \geq, > \}$ are or with what logical operators from $\{ \land, \lor\}$ the atoms connect to each other.
	
	In view of the above, observe that any objective (with the properties of the statement of the theorem) on functions, takes time polynomial in the size of the discretized space, therefore it does not change asymptotically the total running time of the algorithm described at the beginning of Section \ref{sec:conseq_main}. That is because first, the aforementioned algorithm will brute-force through all of the points in the discretized domain and for these points it will check if all of the constraints of $F$ are satisfied. Now the algorithm we propose will deviate from the aforementioned algorithm and for the points that satisfy the constraints of $F$ (feasible points), for each objective it will run the efficient respective algorithm of the objective on the feasible points and check whether all objectives of the relaxed by $\eps$ instance are satisfied for some point. This can be done in time polynomial in the size of the discretized domain, i.e. $l^{O(k)}$. If a discrete point is found that $\eps$-satisfies $F$, then the algorithm returns it, otherwise there no point in the continuous domain that satisfies $F$ according to Theorem \ref{thm:main}. 
\end{proof}

\subsection{A theorem for non-tensor constraints}

One downside of Theorem~\ref{thm:main} is that it requires that the formula is
written down using tensor constraints. We have argued that every \etr formula can
be written down in this way, but the translation introduces a new
vector-variable for each variable in the \etr formula. When we apply
Theorem~\ref{thm:main} to obtain PTASs or QPTASs we require that the number of
vector variables is at most polylogarithmic, and so this limits the application of the theorem
to \etr formulas that have at most polylogarithmically many variables. 

Theorem \ref{thm:kunif} is a sampling result for $\eps$-\etr with non-tensor
constraints, which is proved via some intermediate results. 
%
%
%
%
First, we will use the following theorem of Barman.
\begin{theorem}[\cite{Barman15}]
	\label{thm:barman}
	Let $c_1, c_2, \dots, c_l \in \reals^q$ with $\max_i \| c_i \|_\infty \le 1$.
	For every $x \in \text{conv}(c_1, c_2,$ $ \dots, c_l)$ and every $\eps > 0$
	there exists a $O(\log l/ \eps^2)$-uniform vector $x' \in \conv(c_1, c_2, \dots, c_l)$ such that $\| x - x' \|_\infty \le \eps$.
\end{theorem}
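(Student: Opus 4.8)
The plan is to recognise this as the \emph{approximate Carath\'eodory theorem} and to prove it exactly in the spirit of the LMM sampling argument recalled in the introduction: exhibit the desired $k$-uniform vector by the probabilistic method. First I would fix $x \in \conv(c_1,\ldots,c_l)$ and write it as an explicit convex combination $x = \sum_{i=1}^{l} \lambda_i c_i$ with $\lambda_i \ge 0$ and $\sum_{i=1}^l \lambda_i = 1$. I interpret $\lambda$ as a probability distribution on $\{c_1,\ldots,c_l\}$ and draw $Z_1,\ldots,Z_k$ independently from it, so that $Z_j = c_i$ with probability $\lambda_i$, and set $x' := \tfrac{1}{k}\sum_{j=1}^{k} Z_j$.

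Two structural facts are then immediate. First, $x'$ lies in $\conv(c_1,\ldots,c_l)$, being an average of points of that set (this matters because, as stressed after Definition~\ref{def:const_eps_ETR}, the convex constraint is not weakened — so it must hold exactly). Second, $x'$ is $k$-uniform: writing $\beta_i := |\{\, j : Z_j = c_i \,\}|$ we have $x' = \sum_{i=1}^l (\beta_i/k)\, c_i$ with the $\beta_i$ nonnegative integers summing to $k$. Moreover, by linearity, $\ebb[x'] = \tfrac{1}{k}\sum_{j=1}^k \ebb[Z_j] = x$, so coordinatewise $x'(r)$ is an unbiased estimator of $x(r)$.

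The heart of the argument is concentration. Fix a coordinate $r$. Then $x'(r) - x(r) = \tfrac{1}{k}\sum_{j=1}^{k}\bigl(Z_j(r) - x(r)\bigr)$ is an average of $k$ i.i.d.\ mean-zero random variables, each supported in an interval of length at most $2$ (here I use $\max_i\|c_i\|_\infty \le 1$, which forces $|Z_j(r)| \le 1$ and hence $|x(r)| \le 1$). Hoeffding's inequality gives $\pr\bigl[\,|x'(r) - x(r)| > \eps\,\bigr] \le 2\exp(-k\eps^2/2)$. Choosing $k = \Theta(\log l / \eps^2)$ with a suitable constant, a union bound makes $\pr[\,\|x' - x\|_\infty > \eps\,]$ strictly less than $1$; hence some realisation of $x'$ is simultaneously $k$-uniform, in the convex hull, and within $\eps$ of $x$ in $\ell_\infty$, which is the claim.

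The only delicate point — the hard part — is getting the logarithmic factor to be $O(\log l)$ rather than the $O(\log q)$ that a crude union bound over the $q$ coordinates would yield. The clean way around this is to carry out the moment-generating-function estimate for the $\ell_p$ norm with $p = \Theta(\log(\cdot))$ instead of working directly with $\ell_\infty$, and then use $\|\cdot\|_\infty \le \|\cdot\|_p$: Hoeffding's lemma applied to $\ebb\bigl[\exp\bigl(s\sum_j (Z_j(r)-x(r))\bigr)\bigr]$ gives, after optimising $s$ and summing the $p$-th powers over coordinates, an $\ell_p$-error bound of order $\sqrt{p/k}$, so that $k = O(p/\eps^2) = O(\log(\cdot)/\eps^2)$ suffices. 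I expect this $\ell_p$ bookkeeping to be the main thing to get right; the construction of $x'$, its $k$-uniformity, and its membership in $\conv(c_1,\ldots,c_l)$ are all routine.
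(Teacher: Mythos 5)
The paper does not prove this theorem itself --- it is imported from Barman \cite{Barman15} --- so there is no in-paper argument to compare against; I can only assess whether your argument establishes the claim as written. Your construction (sample $Z_1,\dots,Z_k$ i.i.d.\ from the mixing weights of $x$, set $x'=\tfrac1k\sum_j Z_j$, note it is $k$-uniform, lies in the hull, and is unbiased, then apply per-coordinate Hoeffding) is exactly the right approach and is essentially Barman's.

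The gap is in your last paragraph. You observe that a union bound over the $q$ coordinates gives $k=O(\log q/\eps^2)$, call this "crude," and propose to improve it to $O(\log l/\eps^2)$ via Barman's $\ell_p$-moment argument with $p=\Theta(\log(\cdot))$. But that manoeuvre does not produce $\log l$. After summing $\ebb\bigl[|x'(r)-x(r)|^p\bigr]\le(c\sqrt{p/k})^p$ over the coordinates you pick up a factor $q^{1/p}$; to make it $O(1)$ you are forced to take $p=\Theta(\log q)$, and you land back on $k=O(\log q/\eps^2)$ --- the $\ell_p$ route only sharpens constants over the coordinate union bound, it does not change $q$ to $l$. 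In fact $O(\log l/\eps^2)$ is false in general: take $q=2^{l}$ with $c_i(r)=(-1)^{r_i}$ for $r\in\{0,1\}^l$, and let $x=\tfrac1l\sum_i c_i$. For any mixture $\hat\lambda\in\Delta^l$ one has
\begin{align*}
\Bigl\|\textstyle\sum_i\bigl(\hat\lambda_i-\tfrac1l\bigr)c_i\Bigr\|_\infty
= \bigl\|\hat\lambda-\tfrac1l\mathbbm{1}\bigr\|_1
\end{align*}
(choose the coordinate $r$ aligning all signs), and any $k$-uniform $\hat\lambda$ with $k<l$ has at most $k$ nonzero entries, so $\|\hat\lambda-\tfrac1l\mathbbm{1}\|_1\ge 1-k/l$. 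Hence $\|x'-x\|_\infty\le\eps$ forces $k\ge(1-\eps)l$, which is exponentially larger than $O(\log l/\eps^2)$ for fixed $\eps$. So the dependence has to be on $\log q$ (the dimension), which is precisely what your "crude" union bound and the $\ell_p$ argument both give; the $\log l$ in the statement should be read as $\log q$ (this is harmless in the paper's applications, where $l$ and $q$ are of the same order). In short: what you flag as the hard part is a misdirection --- the coordinate-wise union bound was already the complete and correct argument, and the improvement you were chasing does not exist.
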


The following lemma shows that if we take two vectors $x$ and $x'$ that are
close in the $L_\infty$ norm, then for all polynomials $p$ the value of $|p(x) -
p(x')|$ cannot be too large.

We denote by $consts(p)$ the maximum absolute coefficient in polynomial $p$, and by $terms(p)$ the number of terms of $p$.

\begin{lemma}
	\label{lem:polynomial}
	Let $p(x)$ be a multivariate polynomial over $x \in \mathbb{R}^q$ with
	degree $d$ and let $\epsilon \in (0,\gamma]$ for some constant $\gamma >0$. For every pair of vectors $x,x' \in [0,\gamma]^q$ with $\| x - x' \|_{\infty} \leq \epsilon$ we have:
	\begin{align*}
	|p(x) - p(x')| \leq \gamma^{d-1} \cdot (2^d - 1) \cdot consts(p) \cdot terms(p) \cdot \epsilon .
	\end{align*}
\end{lemma}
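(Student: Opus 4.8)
The plan is to prove the bound term-by-term: since $p(x) - p(x') = \sum_i (T_i(x) - T_i(x'))$ over the $\terms(p)$ terms of $p$ (the constant term cancels), it suffices to show that each individual term $T(x) = a \cdot x_{1}^{d_1} \cdots x_{q}^{d_q}$ with $\sum_j d_j = \deg(T) \le d$ satisfies $|T(x) - T(x')| \le \gamma^{d-1} \cdot (2^d - 1) \cdot |a| \cdot \eps$, and then sum over all terms, bounding $|a|$ by $\consts(p)$. So I would reduce immediately to a single monomial.

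For a single monomial, write $T(x) = a \prod_{j} y_j$ where I relabel the (at most $d$) factors as $y_1, \dots, y_r$ with $r = \deg(T) \le d$, each $y_j$ being one of the coordinates $x_i$ (with multiplicity), and similarly $y_j'$ for the corresponding coordinate of $x'$. Then I use the standard telescoping identity
\begin{align*}
\prod_{j=1}^r y_j - \prod_{j=1}^r y_j' = \sum_{j=1}^r \Bigl(\prod_{s<j} y_s'\Bigr)(y_j - y_j')\Bigl(\prod_{s>j} y_s\Bigr).
\end{align*}
Each of the $r$ summands has absolute value at most $\gamma^{r-1} \cdot \eps$, using $|y_s|, |y_s'| \le \gamma$ (because $x, x' \in [0,\gamma]^q$) and $|y_j - y_j'| \le \|x - x'\|_\infty \le \eps$. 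Hence $|\prod y_j - \prod y_j'| \le r \cdot \gamma^{r-1} \cdot \eps$. Now I need $r \cdot \gamma^{r-1} \le \gamma^{d-1}(2^d-1)$ for all $1 \le r \le d$; since $\gamma \ge \eps > 0$, I would split into the cases $\gamma \ge 1$ (where $\gamma^{r-1} \le \gamma^{d-1}$ and $r \le 2^d - 1$, because $r \le d \le 2^d - 1$) and $\gamma < 1$ — this latter case is the one to watch, since then $\gamma^{r-1} \ge \gamma^{d-1}$ and the naive bound goes the wrong way. But when $\gamma < 1$ one instead uses $|y_s|, |y_s'| \le 1$, so each summand is at most $\eps$ and $|\prod y_j - \prod y_j'| \le r \cdot \eps \le (2^d-1)\eps \le \gamma^{d-1}(2^d-1)\eps$ since $\gamma^{d-1} \ge 1$ when $\gamma < 1$ and $d - 1$ could be negative — wait, I must be careful: if $\gamma < 1$ then $\gamma^{d-1} \ge 1$ only when $d \ge 1$, which holds. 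Either way the factor $2^d - 1$ is the crude bound that absorbs both $r \le d$ and any slack.

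The main obstacle is exactly this bookkeeping with $\gamma$ possibly below $1$: the exponent $\gamma^{d-1}$ in the claimed bound is tight only in the regime $\gamma \ge 1$, and for $\gamma < 1$ one has to notice that the "losing" power of $\gamma$ is compensated because the factors are then bounded by $1$ rather than by $\gamma$. A clean way to handle both uniformly is to bound each of the $r$ telescope summands by $\max(1,\gamma)^{r-1} \cdot \eps$ and then observe $\max(1,\gamma)^{r-1} \le \max(1,\gamma)^{d-1}$, and finally check $\max(1,\gamma)^{d-1} \cdot d \le \gamma^{d-1}(2^d-1)$ in each case; alternatively one simply does the two cases separately as above. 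The rest is a routine application of the triangle inequality across the $\terms(p)$ terms.
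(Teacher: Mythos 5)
Your telescoping decomposition is a genuinely different and cleaner route than the paper's. The paper writes a degree-$d$ monomial as a product of $d$ factors and bounds $\prod(y_i + \eps) - \prod y_i$ via a binomial-style expansion $\sum_{k=1}^d \binom{d}{k}\gamma^{d-k}\eps^k$, collapsing it to $\gamma^{d-1}(2^d-1)\eps$ using $\eps \le \gamma$. Your telescope gives the sharper per-term constant $r\gamma^{r-1}$ (with $r$ the degree of the term) where the paper obtains $(2^d-1)\gamma^{d-1}$; the price is that for $r < d$ you must separately compare $r\gamma^{r-1}$ against $(2^d-1)\gamma^{d-1}$, which is exactly where the trouble lives.

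The $\gamma < 1$ bookkeeping in your proposal is broken, and you half-noticed it yourself. You write that for $\gamma < 1$ one has $\gamma^{d-1} \ge 1$ ``only when $d \ge 1$,'' but the inequality goes the other way: for $0 < \gamma < 1$ and $d \ge 2$ we have $\gamma^{d-1} < 1$, so the chain $r\eps \le (2^d-1)\eps \le \gamma^{d-1}(2^d-1)\eps$ fails. Your proposed unified patch $\max(1,\gamma)^{d-1}\cdot d \le \gamma^{d-1}(2^d-1)$ is also false for small $\gamma$ (take $d=2$, $\gamma = 1/100$: it asks for $2 \le 3/100$). In fact the obstruction you were circling is real: for terms of degree strictly below $d$ the lemma's bound can actually fail when $\gamma < 1$. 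Take $q = 1$, $p(x) = x^2 + x$ (so $d = 2$, $\consts(p) = 1$, $\terms(p) = 2$), $\gamma = \eps = 1/100$, $x = 1/100$, $x' = 0$; then $|p(x) - p(x')| = 0.0101$ while the claimed bound is $\gamma^{d-1}(2^d-1)\consts(p)\terms(p)\eps = 0.0006$. The paper's own proof has the same gap: it silently asserts $c\eps\gamma^{d'-1}(2^{d'}-1) \le c\eps\gamma^{d-1}(2^{d'}-1)$ for a term of degree $d' < d$, which needs $\gamma \ge 1$. The clean fix is to replace $\gamma^{d-1}$ in the statement by $\max(1,\gamma)^{d-1}$ (or simply to add the hypothesis $\gamma \ge 1$); your telescoping argument then goes through verbatim, since each of the $r$ summands is at most $\max(1,\gamma)^{r-1}\eps \le \max(1,\gamma)^{d-1}\eps$ and $r \le d \le 2^d - 1$.
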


\begin{proof}
	Consider a term of $p(x)$, which can without loss of generality be written as $t(x) = c \cdot \prod\limits_{\substack{i \in [q] \\ \sum\limits_{i} d_i \leq d}} x_{i}^{d_i}$, where $d_i$ is the degree of coordinate $x_i$ (resp. $x_i'$). We have
	\begin{align*}
	|t(x) - t(x')| & = \left|c \cdot \prod\limits_{\substack{i \in [q] \\ \sum\limits_{i} d_i \leq d}} x_{i}^{d_i} - c \cdot \prod\limits_{\substack{i \in [q] \\ \sum\limits_{i} d_i \leq d}} (x_{i}')^{d_i}\right| \\
	& = c \cdot \left|\prod\limits_{\substack{i \in [q] \\ \sum\limits_{i} d_i \leq d}} x_{i}^{d_i} - \prod\limits_{\substack{i \in [q] \\ \sum\limits_{i} d_i \leq d}} (x_{i}')^{d_i}\right| \\
	& \leq c \cdot \left[ \prod\limits_{\substack{i \in [q] \\ \sum\limits_{i} d_i \leq d}} (x_{i}^{d_i} + \epsilon) - \prod\limits_{\substack{i \in [q] \\ \sum\limits_{i} d_i \leq d}} x_{i}^{d_i} \right] \\
	& \leq c \cdot \left[ \left[ \prod\limits_{\substack{i \in [q] \\ \sum\limits_{i} d_i \leq d}} x_{i}^{d_i} + \binom{d}{1}\gamma^{d-1}\epsilon + \binom{d}{2}\gamma^{d-2}\epsilon^2 + \dots + \binom{d}{d}\gamma^{0}\epsilon^d \right] -  \prod\limits_{\substack{i \in [q] \\ \sum\limits_{i} d_i \leq d}} x_{i}^{d_i} \right] \\
	& \leq c \cdot \epsilon \cdot \sum\limits_{k=1}^{d} \binom{d}{k} \gamma^{d-1}  \\
	& = c \cdot \epsilon \cdot \gamma^{d-1} \cdot \sum\limits_{k=1}^{d} \binom{d}{k} \\
	& = \epsilon \cdot c \cdot \gamma^{d-1} \cdot (2^d - 1),
	\end{align*}
	where the fourth and third to last lines use the fact that $x_i$'s, and $\epsilon$ are all at most $\gamma$.
	
	Next consider a term $t(x)$ of $p(x)$ of degree $d' \leq d$. This can be written similarly to the aforementioned term. Then $|t(x) - t(x')| \leq c \cdot \epsilon \cdot \gamma^{d-1} \cdot (2^{d'} - 1) \leq c \cdot \epsilon \cdot \gamma^{d-1} \cdot (2^{d} - 1)$. Since there are $terms(p)$ many terms in $p$, we therefore have that
	\begin{align*}
	|p(x) - p(x')| \leq \gamma^{d-1} \cdot (2^d - 1) \cdot consts(p) \cdot terms(p) \cdot \epsilon .
	\end{align*}
\end{proof}


We now apply this to prove the following theorem.
\begin{theorem}
	\label{thm:kunif}
	Let $F$ be an $\eetr$ instance constrained over the convex hull defined
	by $c_1, c_2, \dots, c_l \in \reals^q$. 
	Let $m$ be the number of constraints used in $F$, Let $\gamma = \max_i \| c_i \|_{\infty}$, 
	let $\alpha$ be the largest constant coefficient used in $F$, let $t$ be the number of terms
	used in total in all polynomials of $F$, and let $d$ be the
	maximum degree of the polynomials in $F$. 
	If $\exact(F)$ has a solution in $\conv(c_1, c_2, \dots, c_l)$,
	then $F$ has a $k$-uniform solution in 
	$\conv(c_1, c_2, \dots, c_l)$ where 
	\begin{equation*}
	k =  \alpha^2 \cdot \gamma^{2d-2} \cdot (2^d - 1)^2 \cdot t^2 \cdot \log l / \eps^2.
	\end{equation*}
\end{theorem}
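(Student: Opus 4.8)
The plan is to combine Barman's sampling theorem (Theorem~\ref{thm:barman}) with the polynomial-stability estimate of Lemma~\ref{lem:polynomial}. First I would normalize: since Barman's theorem is stated for vectors of $L_\infty$ norm at most $1$, I would scale the convex hull down by a factor of $\gamma = \max_i\|c_i\|_\infty$, so that $c_i/\gamma$ all lie in the unit cube and every point of $\conv(c_1,\dots,c_l)$ becomes a point of $\conv(c_1/\gamma,\dots,c_l/\gamma)$ with $L_\infty$ norm at most $1$. Let $x^*$ be the assumed exact solution of $\exact(F)$ in $\conv(c_1,\dots,c_l)$. Applying Theorem~\ref{thm:barman} to $x^*/\gamma$ with accuracy parameter $\eps'$ (to be chosen), I get a $k$-uniform vector $y'$ in the scaled hull with $\|x^*/\gamma - y'\|_\infty \le \eps'$, where $k \in O(\log l/\eps'^2)$; rescaling, $x' := \gamma\cdot y'$ is a $k$-uniform vector of $\conv(c_1,\dots,c_l)$ with $\|x^* - x'\|_\infty \le \gamma\eps'$.

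Next I would apply Lemma~\ref{lem:polynomial} to each polynomial $p$ occurring in $F$. With $x^*, x' \in [0,\gamma]^q$ (up to a harmless translation so the hull sits in the nonnegative orthant — note the paper's terms have nonnegative coefficients and constants, so I can also just work coordinatewise with the bound $\gamma$ on absolute values) and $\|x^* - x'\|_\infty \le \gamma\eps'$, the lemma gives
\begin{equation*}
|p(x^*) - p(x')| \le \gamma^{d-1}\cdot(2^d-1)\cdot consts(p)\cdot terms(p)\cdot (\gamma\eps').
\end{equation*}
Using $consts(p) \le \alpha$ and $\sum_p terms(p) \le t$ (so in particular $terms(p)\le t$ for each $p$), this is at most $\alpha\cdot\gamma^{d}\cdot(2^d-1)\cdot t\cdot\eps'$. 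To make this at most $\eps$ for every constraint, I set
\begin{equation*}
\eps' = \frac{\eps}{\alpha\cdot\gamma^{d}\cdot(2^d-1)\cdot t}.
\end{equation*}
Then for each atom $p(x)\bowtie 0$ of $F$, since $x^*$ satisfies it exactly, $x'$ satisfies the $\eps$-relaxed atom $p(x)\bowtie_\eps 0$ (for $=$, first rewrite as the conjunction of two inequalities as prescribed in the definition of $\eetr$). Because the Boolean structure of $F$ is negation-free, satisfaction of every atom is preserved under taking conjunctions and disjunctions, so $x'$ satisfies $F$ itself; moreover $x'$ lies in the (unscaled) convex hull, as required.

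Finally I would substitute $\eps'$ into $k \in O(\log l/\eps'^2)$ to obtain
\begin{equation*}
k = \alpha^2\cdot\gamma^{2d-2}\cdot(2^d-1)^2\cdot t^2\cdot\log l/\eps^2,
\end{equation*}
noting that the factor $\gamma^{2d}$ from $\eps'^2$ combined with the scaling in Barman's theorem accounts for the stated $\gamma^{2d-2}$ exponent (one power of $\gamma$ is absorbed because Barman's error $\eps'$ is measured in the normalized hull while the polynomial bound needs the un-normalized spacing $\gamma\eps'$ — the bookkeeping here is exactly where I expect to have to be careful). The main obstacle is not conceptual but this constant/exponent bookkeeping: making sure the scaling by $\gamma$, the definition of $consts$ and $terms$ relative to the aggregate bound $t$, and the two appearances of $\gamma$ (once in Barman, once in Lemma~\ref{lem:polynomial}) line up to give precisely the exponent $2d-2$ rather than $2d$, and checking that the translation into the nonnegative orthant needed by Lemma~\ref{lem:polynomial} does not change degrees, coefficients, or the value of any $p(x^*)-p(x')$.
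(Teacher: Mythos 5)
Your proposal follows exactly the two-step strategy the paper's own proof uses: apply Theorem~\ref{thm:barman} to replace the exact solution $x^*$ by a nearby $k$-uniform vector, then invoke Lemma~\ref{lem:polynomial} to show that every polynomial in $F$ moves by at most $\eps$, which suffices because each atom of $F$ carries an $\eps$-tolerance and $F$ is negation-free. So the route is the same one.

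Where you diverge from the paper---and it is the very point you flag as needing care---is the normalization of Barman's theorem. The paper's proof applies Theorem~\ref{thm:barman} directly in the unscaled hull, asking for $\|x^*-y\|_\infty \le \eps/(\alpha\gamma^{d-1}(2^d-1)t)$, and then feeds this distance into Lemma~\ref{lem:polynomial}; this gives $k=\alpha^2\gamma^{2d-2}(2^d-1)^2t^2\log l/\eps^2$ in one line. You instead scale the hull by $1/\gamma$ so that Barman's hypothesis $\max_i\|c_i\|_\infty\le 1$ is satisfied, obtain $\|x^*/\gamma-y'\|_\infty\le\eps'$, and rescale to $\|x^*-\gamma y'\|_\infty\le\gamma\eps'$. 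Plugging $\gamma\eps'$ into Lemma~\ref{lem:polynomial} forces $\eps'=\eps/(\alpha\gamma^{d}(2^d-1)t)$, hence $k=\alpha^2\gamma^{2d}(2^d-1)^2t^2\log l/\eps^2$, off from the stated bound by a factor of $\gamma^{2}$. You acknowledge a discrepancy but describe it as ``one power of $\gamma$ absorbed''; in fact the gap is two powers, and no absorption happens: mapping a $k$-uniform vector between the scaled and unscaled hulls leaves $k$ unchanged, so $k\in O(\log l/\eps'^2)$ in both worlds. The conclusion is that your more careful version is sound but yields $\gamma^{2d}$; the paper's stated exponent $2d-2$ comes from applying Barman to the unnormalized hull, which is only justified when $\gamma\le 1$. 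That is a real, if minor, issue you uncovered: it does not affect the qualitative conclusion (a QPTAS under the stated hypotheses), only the exact power of $\gamma$ in $k$. One further small point you raise and should not wave away: Lemma~\ref{lem:polynomial} as proved requires $x,x'\in[0,\gamma]^q$ and $\eps'\le\gamma$, so when the hull is not contained in the nonnegative orthant one genuinely needs a translation, which changes the polynomial's coefficients and therefore $\alpha$; neither your proposal nor the paper's proof tracks this.
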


\begin{proof}
	Let $x$ be the solution to $\exact(F)$.
	First we apply Theorem~\ref{thm:barman} to find a point $y$ that is $k$-uniform,
	where $k =  \alpha^2 \cdot \gamma^{2d-2} \cdot (2^d - 1)^2 \cdot t^2 \cdot \log l / \eps^2$, such that
	\begin{equation*}
	\| x - y \|_\infty \le \eps/(\alpha \cdot \gamma^{d-1} \cdot (2^d - 1) \cdot t).
	\end{equation*}
	Next we can apply Lemma~\ref{lem:polynomial} to argue that, for each polynomial
	$p$ used in $F$, we have
	\begin{align*}
	| p(x) - p(y) | &\le \alpha \cdot \gamma^{d-1} \cdot (2^d - 1) \cdot t \cdot
	\left(\frac{\eps}{\alpha \cdot \gamma^{d-1} \cdot (2^d - 1) \cdot t}\right) \\
	& = \eps.
	\end{align*}
	Since all constraints of $F$ have a tolerance of $\eps$, and since $x$ satisfies
	$\exact(F)$, we can conclude that $F(y)$ is satisfied.
\end{proof}


The key feature here is that the number of variables does not appear in the
formula for $k$, which allows the theorem to be applied to some formulas for
which Theorem~\ref{thm:main} cannot. However, since the theorem does not allow
tensor constraints, its applicability is more limited because the number of
terms $t$ will be much larger in non-tensor formulas. For example, as we will
see in Section~\ref{sec:appl}, we can formulate bimatrix games using tensor
constraints over constantly many vector variables, and this gives a result using
Theorem~\ref{thm:main}. No such result can be obtained via
Theorem~\ref{thm:kunif}, because when we formulate the problem without tensor
constraints, the number of terms $t$ used in the inequalities becomes polynomial in the dimension.

\section{The Proof of the Main Theorem}

In this section we prove Theorem~\ref{thm:main}. Before we proceed with the
technical results, let us illustrate via an example the crucial idea for proving that the special vectors we have defined (i.e. the $k$-uniform vectors for some $k \in \mathbb{N}^*$) inside a discretized convex hull can be used to approximate not only multilinear polynomials, but also multivariate polynomials of degree $d \geq 2$. At the same time, we show that the discretization of the domain (points in distance at most $1/k$ from each other) does not need to be very fine in order to achieve an additive approximation $\eps$ at any point of such a function. Our example is in approximating the quadratic polynomial over the simplex.

Let us provide a roadmap for this section.  We begin by the detailed aforementioned example. Then we proceed by
considering two special cases, namely Lemma \ref{lem:k-TML} and Lemma \ref{lem:k-TSD}, which when combined will be the backbone of the
proof of the main
theorem. 

Firstly, we will show how to deal with problems where every constraint of the
Boolean formula is a \emph{multilinear polynomial}, which we will define
formally later. We deal with this kind of problems using Hoeffding's inequality
and the union bound, which is similar to how such constraints have been handled
in prior work.

Then, we study problems where the Boolean formula
consists of a {\em single} degree $d$ polynomial constraint. We reduce
this kind of problems to a constrained $\eps/2$-\etr problem with
multilinear constraints, so we can use our previous result to handle the reduced
problem. Sampling techniques in degree $d$ polynomial problems have not been considered in previous work, and
so this reduction is a novel extension of sampling based techniques to a broader
class of $\eps$-\etr formulas.

Finally, we deal with the main theorem: we reduce the original \etr
problem with multivariate constraints to a set of $\eps'$-\etr problems with a
single standard degree $d$ constraint, and then we use the last result to derive
a bound on $k$.

As a byproduct of our main result one can get the same result as that of \cite{KLP06} in which a PTAS for fixed degree polynomial minimization over the simplex was presented. Even though the PTAS that follows from our result on the same optimization problem has roughly the same running time as that of \cite{KLP06}, the proof presented here (which is independent of the aforementioned work) is significantly simpler. Nevertheless, the result in the current work generalizes previous results on polynomial optimization over the simplex, by providing a universal algorithm for multi-objective optimization problems, and showing how its running time depends on the parameters of the problem (see Theorem \ref{thm:multi-obj}).

\subsection{Example: A simple PTAS for quadratic polynomial optimization over the simplex}\label{sec:QP-PTAS}

%

\begin{definition}[Standard quadratic optimization problem (\textsc{SQP})]
	Given a $p \times p$ matrix $A$ with entries normalized in $[0,1]$, find the value 
	\begin{align*}
	v^*:= \max_{x \in \Delta_{p}}x^{T}Ax, \quad \text{where $\Delta_{p}$ is the $(p-1)-simplex$.}
	\end{align*}
\end{definition}
\textsc{sqp} is a strongly \NP-hard problem, even for the case where $A$ has entries in $\{0,1\}$; in a theorem of Motzkin and Straus \cite{MS65} it is shown that if matrix $A$ is the adjacency matrix of a graph on $p$ vertices whose maximum clique has $c$ vertices, then $v^* = 1 - 1/c$. The problem of finding the size of the maximum clique in a general graph is known to be (strongly) \NP-hard since its decision version is one of Karp's 21 \NP-complete problems \cite{K72}. Therefore, unless \p = \NP there is no Fully Polynomial Time Approximation Scheme for \textsc{sqp} and the best thing we can hope for the problem is a PTAS. We present a PTAS for \textsc{sqp} (Corollary \ref{cor:PTAS_SQP}), which has almost the same running time as that of \cite{BK02}, but we claim that our proof is significantly simpler.

Let $x^* := \arg(v^*)$. Consider the set $\Delta_p(k)$ of all $k$-uniform vectors, for $k= 16 \ln(3/\epsilon)/\epsilon^2$, with items $x^{(i)} \in \Delta_p(k)$, for $i=1,2,\dots, |\Delta_p(k)|$.

\begin{lemma}\label{lem: R}
	There exists a multiset $\xcal$ of $\Delta_p(k)$ with $|\xcal|=2/\epsilon$ such that for every $x^{(i)}, x^{(j)} \in \xcal$ with $i \neq j$, it is
	\begin{align*}
	x^{*T} A x^* - x^{(i)T} A x^{(j)} < \epsilon/2.
	\end{align*}
\end{lemma}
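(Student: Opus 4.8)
\emph{Proof proposal.} I would argue by the probabilistic method, via the standard LMM-style sampling argument adapted to a bilinear form. View $x^*\in\Delta_p$ as a probability distribution on $[p]$. Fix $N:=2/\epsilon$ and, independently for $i=1,\dots,N$, draw indices $I^{(i)}_1,\dots,I^{(i)}_k$ i.i.d.\ from $x^*$ and set $X^{(i)}:=\frac1k\sum_{s=1}^k e_{I^{(i)}_s}$, where $e_r$ denotes the $r$-th vertex of $\Delta_p$. Each $X^{(i)}$ is then a $k$-uniform vector, i.e.\ $X^{(i)}\in\Delta_p(k)$, with $\ebb[X^{(i)}]=x^*$, and the $N$ vectors are mutually independent. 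The goal is to show that with positive probability the multiset $\xcal=\{X^{(1)},\dots,X^{(N)}\}$ satisfies the stated inequality for every pair $i\neq j$; existence then follows.

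The core estimate is a pairwise bound. Fix $i\neq j$. Since $X^{(i)T}AX^{(j)}$ is quadratic in the sampled indices, plain Hoeffding does not apply, so I would first condition on $X^{(j)}$. Given $X^{(j)}$, we have $X^{(i)T}AX^{(j)}=\frac1k\sum_{s=1}^k (AX^{(j)})_{I^{(i)}_s}$, an average of $k$ i.i.d.\ random variables taking values in $[0,1]$ (as $A\in[0,1]^{p\times p}$ and $X^{(j)}\in\Delta_p$) with conditional mean $x^{*T}AX^{(j)}$; one-sided Hoeffding gives $\pr[\,x^{*T}AX^{(j)}-X^{(i)T}AX^{(j)}\ge\epsilon/4\mid X^{(j)}\,]\le e^{-k\epsilon^2/8}$, hence the same bound after integrating out $X^{(j)}$. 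Likewise $x^{*T}AX^{(j)}=\frac1k\sum_{t=1}^k (x^{*T}A)_{I^{(j)}_t}$ is an average of $k$ i.i.d.\ variables in $[0,1]$ with mean $x^{*T}Ax^*=v^*$, so $\pr[\,v^*-x^{*T}AX^{(j)}\ge\epsilon/4\,]\le e^{-k\epsilon^2/8}$. A union bound over these two events, together with the decomposition $v^*-X^{(i)T}AX^{(j)}=(v^*-x^{*T}AX^{(j)})+(x^{*T}AX^{(j)}-X^{(i)T}AX^{(j)})$, yields $\pr[\,v^*-X^{(i)T}AX^{(j)}\ge\epsilon/2\,]\le 2e^{-k\epsilon^2/8}$.

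It remains to take a union bound over pairs. There are fewer than $N^2=4/\epsilon^2$ ordered pairs $(i,j)$ with $i\neq j$, so the probability that some pair violates $v^*-X^{(i)T}AX^{(j)}<\epsilon/2$ is at most $\frac{4}{\epsilon^2}\cdot 2e^{-k\epsilon^2/8}$. Substituting $k=16\ln(3/\epsilon)/\epsilon^2$ gives $e^{-k\epsilon^2/8}=(\epsilon/3)^2$, so this probability is at most $8/9<1$. Hence some realization of $X^{(1)},\dots,X^{(N)}$ works, and for it $\xcal=\{X^{(1)},\dots,X^{(N)}\}$ has $|\xcal|=2/\epsilon$ and satisfies $x^{*T}Ax^*-x^{(i)T}Ax^{(j)}<\epsilon/2$ for all $i\neq j$, as claimed. (If $2/\epsilon\notin\nats$ one takes $N=\lfloor 2/\epsilon\rfloor$ or $\lceil 2/\epsilon\rceil$ and absorbs the change into the constants; I would keep $2/\epsilon$ an integer so as to match the statement exactly.)

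The main obstacle is precisely that $X^{(i)T}AX^{(j)}$ is a bilinear form in two independent empirical averages rather than a single one, so Hoeffding cannot be invoked in one shot; the conditioning step is what decouples the randomness into two clean scalar concentration bounds. The only other delicate point is arithmetic: the split $\epsilon/2=\epsilon/4+\epsilon/4$, the count of pairs, and the constants in both $k$ and $|\xcal|$ must be tuned together so that the final failure probability stays strictly below $1$, which is why $k$ carries the constant $16$.
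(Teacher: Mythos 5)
Your proof is correct and follows essentially the same route as the paper: probabilistic method, decompose the bilinear gap into two single-sample gaps via the triangle inequality, apply one-sided Hoeffding (with conditioning to handle the bilinear form), then union bound. The only cosmetic difference is that the paper splits off the $i$-term first, writing $v^*-x^{(i)T}Ax^{(j)}=(v^*-x^{(i)T}Ax^*)+(x^{(i)T}Ax^*-x^{(i)T}Ax^{(j)})$ and so union-bounds over $r+r(r-1)=r^2$ events (failure probability $4/9$), whereas you split off the $j$-term first and charge both sub-events per ordered pair, yielding the slightly looser $2r^2$ count (failure probability $8/9$); both are strictly below $1$, and your handling of the conditioning step is actually spelled out more explicitly than the paper's.
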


\begin{proof}
	Note that although $i \neq j$, $x^{(i)}$ could be equal to $x^{(j)}$ since the two $k$-uniform vectors belong to a multiset of $\Delta_p(k)$. The proof is by the probabilistic method. Let us create the events 
	\begin{align*}
	E_{i} &= \left\{x^{*T} A x^* - x^{(i)T} A x^*  < \epsilon/4 \right\}, \quad &\forall i \text{ for which } x^{(i)} \in \xcal, \\
	F_{i,j} &= \left\{x^{(i)T} A x^* - x^{(i)T} A x^{(j)} < \epsilon/4\right\}, \quad &\forall i,j \text{ with } i \neq j, \text{ for which } x^{(i)},x^{(j)} \in \xcal, \\
	G_{i,j} &= \left\{x^{*T} A x^* - x^{(i)T} A x^{(j)} < \epsilon/2\right\}, \quad &\forall i,j \text{ with } i \neq j, \text{ for which } x^{(i)},x^{(j)} \in \xcal.
	\end{align*}
	Observe that $E_{i} \cap F_{i,j} \subseteq G_{i,j}$. Now, let each of $k$ i.i.d. random variables be drawn from $x^*$. The sample space for each is $[p]$. For any $x^{(i)},x^{(j)} \in \Delta_p(k)$, the expectation of $x^{(i)T} A x^*$ is $x^{*T} A x^*$, and the expectation of $x^{(i)T} A x^{(j)}$ (for fixed $x^{(i)}$) is $x^{(i)T} A x^*$. Let us denote $r := |\xcal|  = 2/\epsilon$. By using a H{\"o}ffding bound \cite{H63}, we get 
	\begin{align*}
	\text{Pr}\{\overline{E_{i}}\} &\leq e^{-k \epsilon^2 /8}, \quad \forall i \text{ for which } x^{(i)} \in \xcal, \text{ and}\\
	\text{Pr}\{\overline{F_{i,j}}\} &\leq e^{-k \epsilon^2 /8}, \quad \forall i,j \text{ with } i \neq j, \text{ for which } x^{(i)},x^{(j)} \in \xcal.
	\end{align*}
	
	Consider now the event $H$ that captures the condition that needs to be satisfied by the lemma. It is
	\begin{align*}
	H &= \bigcap_{\substack{i,j \in \xcal \\ i \neq j}}G_{i,j}.
	\end{align*}
	Therefore
	\begin{align*}
	\overline{H} = \bigcup_{\substack{i,j \in \xcal \\ i \neq j}}\overline{G_{i,j}} 
	\subseteq \bigcup_{\substack{i \in \xcal }}\overline{E_{i}} \bigcup_{\substack{i,j \in \xcal \\ i \neq j}}\overline{F_{i,j}}.
	\end{align*}
	Hence
	\begin{align*}
	\text{Pr}\{\overline{H}\} &\leq r e^{-k \epsilon^2 /8} + r(r-1) e^{-k \epsilon^2 /8} \\
	&=r^2 e^{-k \epsilon^2 /8} \\
	&< 1.
	\end{align*}
\end{proof}

The above strict inequality means that $\text{Pr}\{H\}>0$, therefore, there exists a set $\xcal$ that satisfies the statement of the lemma.

The following theorem corresponds to the general Lemma \ref{lem:k-TSD}, for the case $\alpha = \gamma = 1$, $d=2$.

\begin{theorem}\label{thm: x}
	There exists a $\frac{32 \ln (3/\epsilon)}{\epsilon^3}$-uniform vector $x$, such that $v^* - x^T A x < \epsilon $ .
\end{theorem}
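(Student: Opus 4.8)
The plan is to combine Lemma~\ref{lem: R} with an averaging step. Lemma~\ref{lem: R} produces a multiset $\xcal = \{x^{(1)}, \dots, x^{(r)}\}$ of $k$-uniform vectors in $\Delta_p(k)$, with $r = 2/\eps$ and $k = 16\ln(3/\eps)/\eps^2$, such that $x^{*T}Ax^* - x^{(i)T}Ax^{(j)} < \eps/2$ for every $i \neq j$. First I would form the average $\bar{x} := \frac{1}{r}\sum_{i=1}^r x^{(i)}$. Since each $x^{(i)}$ is $k$-uniform and $\bar{x}$ is an average of $r$ of them, $\bar{x}$ is itself $(rk)$-uniform: writing each $x^{(i)} = \sum_\ell (\beta^{(i)}_\ell/k) e_\ell$ with $\sum_\ell \beta^{(i)}_\ell = k$, we get $\bar{x} = \sum_\ell \bigl(\sum_i \beta^{(i)}_\ell\bigr)/(rk)\cdot e_\ell$ with the new nonnegative integer weights summing to $rk$. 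Note $rk = (2/\eps)\cdot 16\ln(3/\eps)/\eps^2 = 32\ln(3/\eps)/\eps^3$, which is exactly the uniformity claimed in the theorem.

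Next I would bound $v^* - \bar{x}^T A \bar{x}$ by expanding the quadratic form:
\begin{align*}
\bar{x}^T A \bar{x} = \frac{1}{r^2} \sum_{i=1}^r \sum_{j=1}^r x^{(i)T} A x^{(j)}.
\end{align*}
The $r^2$ terms split into the $r(r-1)$ off-diagonal terms (with $i \neq j$), each of which satisfies $x^{(i)T}Ax^{(j)} > x^{*T}Ax^* - \eps/2 = v^* - \eps/2$ by Lemma~\ref{lem: R}, and the $r$ diagonal terms $x^{(i)T}Ax^{(i)}$, each of which is nonnegative (since $A$ has entries in $[0,1]$ and $x^{(i)} \in \Delta_p$, so $x^{(i)T}Ax^{(i)} \ge 0$). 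Therefore
\begin{align*}
\bar{x}^T A \bar{x} \ge \frac{1}{r^2}\Bigl( r(r-1)\bigl(v^* - \eps/2\bigr) + 0 \Bigr) = \frac{r-1}{r}\bigl(v^* - \eps/2\bigr).
\end{align*}
Using $v^* \le 1$ (again because $A$ has entries in $[0,1]$ and $x^* \in \Delta_p$) and $r = 2/\eps$, we get $v^* - \bar{x}^T A \bar{x} \le v^* - \frac{r-1}{r}(v^* - \eps/2) = \frac{1}{r}v^* + \frac{r-1}{r}\cdot\frac{\eps}{2} \le \frac{1}{r} + \frac{\eps}{2} = \frac{\eps}{2} + \frac{\eps}{2} = \eps$; a slightly more careful accounting gives the strict inequality $v^* - \bar{x}^T A \bar{x} < \eps$. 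Taking $x := \bar{x}$ completes the proof.

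The only delicate point is making sure the final inequality is strict rather than just $\le \eps$, which follows because the bound in Lemma~\ref{lem: R} is strict and $v^* < 1$ in all nontrivial cases (or, if $v^* = 1$, one can absorb the slack elsewhere); this is a routine tightening and not a genuine obstacle. The main conceptual step — and the reason the argument works at all — is the observation that averaging $r$ $k$-uniform vectors keeps the result $(rk)$-uniform while turning the pairwise guarantee of Lemma~\ref{lem: R} into a guarantee on the single quadratic form $\bar{x}^T A \bar{x}$, at the cost of only the $O(1/r)$ loss from the diagonal terms.
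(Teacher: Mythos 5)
Your proposal is correct and follows essentially the same route as the paper: average the $r = 2/\eps$ vectors from Lemma~\ref{lem: R} to get an $(rk)$-uniform vector, expand $\bar{x}^T A \bar{x}$ into diagonal and off-diagonal terms, bound the off-diagonal terms via Lemma~\ref{lem: R} and absorb the diagonal contribution into a $1/r = \eps/2$ loss. The only cosmetic difference is that the paper bounds each diagonal gap $x^{*T}Ax^* - x^{(i)T}Ax^{(i)}$ by $1$ directly rather than invoking nonnegativity of $x^{(i)T}Ax^{(i)}$ together with $v^*\le 1$, but the arithmetic is identical.
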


\begin{proof}
	Consider the multiset $\xcal$ of $\Delta_p(k)$ of Lemma \ref{lem: R}, and recall that $r := |\xcal| = 2/\eps$. Let us create the vector
	\begin{align*}
	x := \frac{1}{r} \sum_{i \in \xcal}x^{(i)} .
	\end{align*} 
	Then, it is 
	\begin{align*}
	x^{*T} A x^* - x^{T} A x &= 	x^{*T} A x^* - \left(\frac{1}{r} \sum_{x^{(i)} \in \xcal}x^{(i)T}\right) A \left(\frac{1}{r} \sum_{x^{(i)} \in \xcal}x^{(i)}\right) \\
	&= x^{*T} A x^* - \frac{1}{r^2} \sum_{x^{(i)},x^{(j
			)} \in \xcal} x^{(i)T} A x^{(j)} \\
	&= x^{*T} A x^* - \frac{1}{r^2} \left( \sum_{\substack{x^{(i)},x^{(j)} \in \xcal \\ i \neq j}} x^{(i)T} A x^{(j)} + \sum_{x^{(i)} \in \xcal} x^{(i)T} A x^{(i)} \right) \\
	&= \frac{1}{r^2} \left( r(r-1) x^{*T} A x^* - \sum_{\substack{x^{(i)},x^{(j)} \in \xcal \\ i \neq j}} x^{(i)T} A x^{(j)} + r x^{*T} A x^* - \sum_{x^{(i)} \in \xcal} x^{(i)T} A x^{(i)} \right) \\
	&< \frac{1}{r^2} \left( r(r-1) \frac{\epsilon}{2} + r  \right) \\
	&\leq \frac{\epsilon}{2} + \frac{1}{r} \\
	&= \epsilon,
	\end{align*}
	where the second to last inequality is implied from Lemma \ref{lem: R} which applies for every $x^{(i)},x^{(j)} \in \xcal$ when $i \neq j$, and from the fact that $x^{*T} A x^* -  x^{(i)T} A x^{(i)}$ is upper bounded by 1 for every $x^{(i)} \in \xcal$ (recall that the entries of $A$ are in $[0,1]$).
	
	The proof is concluded by observing that the vector $x$ we created is a $kr$-uniform vector, for $k = 16 \ln(3/\epsilon)/\epsilon^2$ and $r = 2/\epsilon$.
\end{proof}

\begin{corollary}\label{cor:PTAS_SQP}
	There is a PTAS for \textsc{sqp}.
\end{corollary}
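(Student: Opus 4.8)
The plan is to derive Corollary~\ref{cor:PTAS_SQP} directly from Theorem~\ref{thm: x}, treating the latter as a black box. The key observation is that Theorem~\ref{thm: x} guarantees the existence of a $k'$-uniform vector $x$ with $v^* - x^T A x < \eps$ for $k' = 32 \ln(3/\eps)/\eps^3$, and crucially $k'$ depends only on $\eps$ and not on the dimension $p$. So the algorithm is simply: enumerate all $k'$-uniform vectors in $\Delta_p$, evaluate the quadratic form $x^T A x$ on each, and output the maximum value found.

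First I would bound the number of $k'$-uniform vectors in $\Delta_p$: these correspond to ways of writing $k'$ as an ordered sum of $p$ non-negative integers, so there are $\binom{p + k' - 1}{k'} \le p^{k'}$ of them, which is polynomial in $p$ for any fixed $\eps$ (since $k'$ is a constant depending only on $\eps$). Each can be enumerated in time polynomial in $p$, and evaluating $x^T A x$ for a given $x$ takes $O(p^2)$ time. Hence the whole procedure runs in time $p^{O(k')} = p^{O(\ln(1/\eps)/\eps^3)}$, which is polynomial in the input size $N$ (here $N = \Theta(p^2)$ bits for the matrix $A$) for every fixed $\eps > 0$.

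Next I would argue correctness. Let $v_{\mathrm{alg}}$ be the value output by the algorithm. On one hand, every $k'$-uniform vector lies in $\Delta_p$, so $v_{\mathrm{alg}} \le \max_{x \in \Delta_p} x^T A x = v^*$. On the other hand, Theorem~\ref{thm: x} gives a particular $k'$-uniform vector $x$ with $x^T A x > v^* - \eps$, and since the algorithm examines this vector (among all $k'$-uniform vectors), we get $v_{\mathrm{alg}} \ge x^T A x > v^* - \eps$. Combining, $v^* - \eps < v_{\mathrm{alg}} \le v^*$, so the algorithm computes $v^*$ to within additive error $\eps$, and the witnessing $k'$-uniform vector is an $\eps$-approximate maximizer, giving the desired PTAS.

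There is no real obstacle here; the statement is essentially a packaging of Theorem~\ref{thm: x}. The only point requiring a word of care is making sure that $k'$ is treated as a fixed constant (so that $p^{O(k')}$ counts as polynomial): this is exactly the role played by the dimension-independence of the bound in Theorem~\ref{thm: x}, which in turn came from the fact that the Hoeffding argument in Lemma~\ref{lem: R} has a failure probability that does not depend on $p$. One might also remark that this running time essentially matches that of \cite{BK02}, as claimed in the surrounding text.
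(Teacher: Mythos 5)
Your proposal is correct and follows essentially the same approach as the paper: enumerate all $k'$-uniform vectors for $k' = 32\ln(3/\eps)/\eps^3$ (the same quantity the paper denotes $kr$), evaluate the quadratic form on each, and return the best, with correctness following directly from Theorem~\ref{thm: x} and the count $\binom{p+k'-1}{k'} \in O(p^{k'})$. Your write-up is slightly more explicit about the two-sided bound $v^*-\eps < v_{\mathrm{alg}} \le v^*$, but the underlying argument is the same.
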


\begin{proof}
	By Theorem \ref{thm: x}, since the desired probability vector $x$ that is suitable for the approximation is the mean of $r$ many $k$-uniform vectors, $x$ is $kr$-uniform. Therefore, it can be found by exhaustively searching through all possible multisets of $[p]$ created by sampling with replacement $kr = 32 \ln (3/\epsilon) / \epsilon^3$ times. The number of all those possible multisets is $\binom{p+kr-1}{kr} \in O(p^{kr})$. For each multiset, i.e. vector $x$ that the search algorithm takes into account, it picks the one that makes $x^T A x$ maximum. This value is guaranteed to be $\epsilon$-close to $v^*$ by Theorem \ref{thm: x}.
	
	Hence, if we desire a $(1-\eps)$-approximation of \textsc{sqp} \textit{in the weak sense} according to Definition 2.2 of \cite{dK08}, the described algorithm runs in time $O\left(p^{ \ln (\frac{3}{\eps}) / \eps^3}\right)$.
\end{proof}

\subsection{The general proof}

\subsubsection{Problems with multilinear constraints}

We begin by considering constrained \eetr problems where the Boolean
formula $F$ consists of tensor-multilinear polynomial constraints. 
We will use $\tml(A,x_1,\ldots,x_n)$ to denote a tensor-multilinear polynomial with 
$n$ variables and coefficients defined by tensor $A$ of size $\times_{j=1}^{n} p$. 
Formally,
\begin{align*}
\tml(A, x_1,\ldots,x_n) = \sum_{i_1 \in [p]}\cdots\sum_{i_n \in [p]} x_1(i_1)\cdot\ldots\cdot x_n(i_n)\cdot a(i_1, \ldots, i_n) + c.
\end{align*}
We will use $\alpha$ to denote the maximum entry of tensor $A$ in the  absolute value sense
and $\gamma$ to denote the infinite norm of the convex set that constrains the
variables.

%
%
%
%

\begin{lemma}
	\label{lem:k-TML}
	Let $F$ be a Boolean formula with $n$ variables and $m$ tensor-multilinear polynomial
	constraints and let \ycal be a convex set in the variables space. If the constrained \etr problem 
	defined by $\exact(F)$ and \ycal has a solution, then the constrained \eetr problem defined by $F$ and \ycal has a $k$ uniform solution where 
	\begin{align*}
	k = \frac{2\cdot \alpha^2 \cdot \gamma^2 \cdot n^2\cdot \ln (3 \cdot n \cdot m)}{\eps^2}.
	\end{align*}
\end{lemma}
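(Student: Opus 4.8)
The plan is to use the probabilistic method via sampling, exactly in the spirit of the LMM/Althöfer argument, but carried out in the convex-hull setting and with a union bound over all $m$ constraints. Let $x^* = (x_1^*, \dots, x_n^*) \in \ycal$ be a solution of $\exact(F)$; since $\ycal = \conv(c_1, \dots, c_l)$ (or more generally a convex set given as such a hull), each $x_j^*$ is a convex combination of the generators. First I would, for each variable $x_j$ independently, draw $k$ i.i.d. samples from the corresponding probability distribution over the generators $c_1, \dots, c_l$ and form the $k$-uniform vector $y_j$ as the empirical average of these $k$ samples; then $\ebb[y_j] = x_j^*$ and $y = (y_1, \dots, y_n) \in \ycal$ by construction. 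The key point is that for any tensor-multilinear polynomial $\tml(A, x_1, \dots, x_n)$ and any fixed index $j$, the quantity $\tml(A, y_1, \dots, y_{j-1}, y_j, x_{j+1}^*, \dots, x_n^*)$ is, when $y_1, \dots, y_{j-1}$ are held fixed, an average of $k$ i.i.d. bounded random variables whose expectation is $\tml(A, y_1, \dots, y_{j-1}, x_j^*, x_{j+1}^*, \dots, x_n^*)$ — this is where multilinearity is essential, since each variable appears to degree $1$ and so replacing one variable by an empirical mean just averages.

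The main steps, in order: (1) Set up a hybrid/telescoping argument: define the sequence of vectors $z^{(0)} = (x_1^*, \dots, x_n^*)$, $z^{(1)} = (y_1, x_2^*, \dots, x_n^*)$, \dots, $z^{(n)} = (y_1, \dots, y_n) = y$, so that $\tml(A, z^{(0)}) - \tml(A, z^{(n)}) = \sum_{j=1}^n \bigl(\tml(A, z^{(j-1)}) - \tml(A, z^{(j)})\bigr)$. (2) For each $j$, conditioned on $y_1, \dots, y_{j-1}$, bound $|\tml(A, z^{(j-1)}) - \tml(A, z^{(j)})|$ by Hoeffding's inequality: it is a deviation of an average of $k$ i.i.d.\ random variables, each of which is bounded in an interval of length at most $2 \alpha \gamma^{n-1}$ (since the multilinear form evaluated with one generator substituted and the others bounded by $\gamma$ in $L_\infty$ has absolute value $\le \alpha \gamma^{n-1} \cdot (\text{number of index tuples})$ — I would need to track the exact bound here; the relevant factor that enters $k$ is $\alpha \cdot \gamma$ per variable together with an $n$ factor from the telescoping, so after squaring one gets $\alpha^2 \gamma^2 n^2$). (3) Take a union bound: the probability that any one of the $n$ telescoping terms exceeds $\eps/n$ is at most $n \cdot m \cdot 2 \exp(-k \eps^2 / (2 n^2 \alpha^2 \gamma^2 \cdot c))$ for the appropriate constant; choosing $k = \frac{2 \alpha^2 \gamma^2 n^2 \ln(3nm)}{\eps^2}$ makes this probability strictly less than $1$. (4) Conclude by the probabilistic method that there exists a choice of samples for which, simultaneously for every one of the $m$ constraints $p$, $|p(x^*) - p(y)| \le \eps$; since $x^*$ satisfies the exact constraints and every constraint of $F$ is weakened by $\eps$, $y$ satisfies $F$.

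The step I expect to be the main obstacle — or at least the one requiring the most care — is step (2): getting the correct bound on the range of the i.i.d.\ random variables in the Hoeffding application so that the constants in $k$ come out as claimed, and making sure the conditioning is handled correctly (the samples for variable $j$ are independent of those for variables $1, \dots, j-1$, so conditioning on $y_1, \dots, y_{j-1}$ leaves the $j$-th batch i.i.d.\ with the right conditional expectation). One subtlety is that the random variable being averaged is the value of the multilinear form with the $i$-th sampled generator plugged into slot $j$ and $y_1, \dots, y_{j-1}, x_{j+1}^*, \dots, x_n^*$ in the other slots; its absolute value is at most $\alpha \gamma^{n-1} p^{n}$ in the crudest bound, but because $y_1, \dots, y_{j-1}$ and $x_{j+1}^*, \dots$ are themselves in the convex hull (hence $L_\infty$-norm $\le \gamma$) and, for probability-distribution-type generators, have bounded $L_1$ norm, the effective range is $O(\alpha \gamma)$ per the intended accounting; I would state this bound precisely and feed it into Hoeffding. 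Everything else — the telescoping identity, the union bound over $nm$ events, and the final probabilistic-method conclusion — is routine once the per-step concentration bound is in place.
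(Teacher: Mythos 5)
Your proposal is essentially identical to the paper's proof: same sampling of $k$-uniform vectors from the exact solution, same hybrid/telescoping decomposition $z^{(0)},\dots,z^{(n)}$ (the paper calls these $ML_j^0,\dots,ML_j^n$), same per-step Hoeffding bound with deviation $\eps/n$, same union bound over $nm$ events, and the same probabilistic-method conclusion. The only place you hedge — the range of the i.i.d.\ summands in the Hoeffding application — is exactly where the paper is terse as well: the paper asserts each $ML_j^i$ takes values in $[-\gamma\alpha,\gamma\alpha]$ and plugs $\sum(b_i-a_i)^2 = 4k\gamma^2\alpha^2$ into Hoeffding, which is the $O(\alpha\gamma)$ per-sample range you anticipated; your remark that this relies on the non-substituted slots carrying probability-distribution-like (bounded $L_1$) vectors, rather than the crude $p^{n-1}\gamma^{n-1}\alpha$ bound, is a fair caveat that the paper itself does not spell out.
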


\begin{proof}
	Let $(x^*_1, x^*_2, \ldots, x^*_n) \in \ycal$ be a solution for $\exact(F)$. 
	Since we assume the \ycal is the convex hull of $c_1, \ldots, c_l$ any $x \in \ycal$ can be
	written as a convex combination of the $c_i$'s, i.e., $x = \sum_{i \in [l]} a_i \cdot c_i$, 
	where $a_i \geq 0$ for every $i \in [l]$, and $\sum_{i\in [l]} a_i = 1$. 
	Observe, $a = (a_1, \ldots, a_{l})$ corresponds to a probability distribution over 
	$c_1, \ldots, c_l$, where vector $c_i$ is drawn with probability $a_i$, and $x$ can be 
	thought of as the mean of $a$. So, we can ``sample'' a point by sampling over $c_i$'s
	according to the probability that defines this point.

	For every $i \in [n]$, let $x'_i$ be a $k$-uniform vector sampled independently from $x^*_i$.
	To prove the lemma, we will show that, because of the choice of $k$, with positive 
	probability the sampled vectors satisfy every constraint of the \eetr problem. 
	Then, by the probabilistic method the lemma will follow.
	
	Let $\tml_j(A_j,x_1, \ldots,x_n)$ be a multilinear polynomial that defines a 
	constraint of $F$. For every $j \in [m]$ we define the 
	following event
	\begin{align}
	\label{eq:ml-c}
	|\tml_j(A_j,x'_1, \ldots, x'_n) -\tml_j(A_j,x^*_1, \ldots,x^*_n)| \leq \eps.
	\end{align}
	Observe that if $x'_1, \ldots, x'_n$ satisfy inequality \eqref{eq:ml-c} for every $j \in [m]$, 
	then the lemma follows.
	
	For every $j \in [m]$, we replace the corresponding event \eqref{eq:ml-c} with 
	$n$ events that are {\em linear} in each variable. For notation simplicity, let us denote by $ML_j^i$ the 
	multilinear polynomial $TML_j(A_j,x_1, \ldots, x_n)$ in which we have additionally set 
	$x_1=x'_1, x_2=x'_2, \ldots, x_i=x'_i$ and $x_{i+1}=x^*_{i+1}, x_{i+2}=x^*_{i+2}, \ldots, 
	x_n=x^*_n$. Furthermore, let $ML_j^0 =  ML_j(A_j,x^*_1, \ldots,x^*_n)$.
	Then, for every $i \in [n]$ consider the event
	\begin{align}
	\label{eq:l-c}
	|ML_j^i - ML_j^{i-1}| \leq \frac{\eps}{n}.
	\end{align}
	Observe that, if for a given $j \in [m]$ all $n$ events defined in \eqref{eq:l-c} are satisfied, then by the 
	triangle inequality, the corresponding event \eqref{eq:ml-c} is satisfied as well.
	
	Consider now $ML_j^i$. This can be seen as a random variable that depends on 
	the choice of $x'_i$ and takes values in $[-\gamma\cdot \alpha, \gamma\cdot \alpha]$. 
	But recall that the $x'_i$'s are sampled from $x^*_i$ using $k$ samples, and that they are
	mutually independent, so $\ebb\big[ ML_j^i \big] = ML_j^{i-1}$. 
	Thus, we can bound the probability that a constraint~\eqref{eq:l-c} is not 
	satisfied, i.e. bound the probability that $|ML_j^i - ML_j^{i-1}| > \frac{\eps}{n}$, using
	Hoeffding's inequality \cite{H63}. So,
	\begin{align}
	\nonumber
	\pr\left(\left|ML_j^i - ML_j^{i-1}\right| > \frac{\eps}{n}\right) & = 
	\pr\left(\left|ML_j^i - \ebb\big[ ML_j^i \big] \right| > \frac{\eps}{n}\right)\\
	\nonumber
	& \leq 2\cdot exp\left(- \frac{2\cdot k^2 \cdot \left(\frac{\eps}{n}\right)^2}{4 \cdot k \cdot \gamma^2 \cdot \alpha^2} \right)\\
	\label{eq:hoef-bound-l}
	& = 2\cdot exp\left(- \frac{k \cdot \eps^2}{2 \cdot n^2 \cdot \gamma^2 \cdot \alpha^2} \right).
	\end{align}
	Recall, that we have $n \cdot m$ events of the form \eqref{eq:l-c}.
	We can bound the probability that any of those events is violated, via the union bound. 
	So, using~\eqref{eq:hoef-bound-l} and the union bound, the probability that 
	any of these events is violated is upper bounded by 
	\begin{align}
	\label{eq:union-bound-l}
	2 \cdot m\cdot n \cdot exp\left(- \frac{k \cdot \eps^2}{2 \cdot n^2 \cdot \gamma^2 \cdot \alpha^2} \right).
	\end{align}
	Hence, if the value of \eqref{eq:union-bound-l} is strictly less than 1, then there are 
	$x'_1, \ldots, x'_m$ such that all of the $n \cdot m$ events of \eqref{eq:l-c} are realized with positive probability, therefore the events of \eqref{eq:ml-c} are realized with positive probability
	and thus the lemma follows. 
	By requiring \eqref{eq:union-bound-l} to be strictly less than 1, and solving for $k$ we get
	$$k > \frac{ 2 \cdot \alpha^2 \cdot \gamma^2 \cdot n^2 \cdot \ln (2\cdot n \cdot m)}{\eps^2}$$
	which holds, by our choice of $k$.

\end{proof}

\subsubsection{Problems with a standard degree $d$ constraint}

We now consider constrained \eetr problems with 
\emph{exactly one} tensor polynomial constraint of standard degree $d$. 
We will use $\tsd(A,x,d)$ to denote a standard degree $d$ tensor-polynomial with 
coefficients defined by the $\times_{j=1}^{d} p$ tensor $A$. 
Here, $d$ identical vectors $x$  are applied on A. Formally,
\begin{align*}
\tsd(A,x,d) = \sum_{i_1 \in [p]}\cdots\sum_{i_d \in [p]} x(i_1)\cdot\ldots\cdot x(i_d)\cdot a(i_1, \ldots, i_d) + c.
\end{align*}

To prove the following lemma we consider the variable $x$ to be defined as the average of 
$r = O(\frac{\alpha^2 \cdot \gamma^d \cdot d^2}{\eps})$ variables. This allows us
to ``break'' the standard degree $d$ tensor polynomial to a sum of multilinear 
tensor polynomials and to a sum of not-too-many multivariate polynomials. 
Then, the choice of $r$ allows us to upper bound the error occurred by the multivariate
polynomials by $\frac{\eps}{2}$. Then, we observe that in order to prove the lemma we 
can write the sum of multilinear tensor polynomials as an $\frac{\eps}{2}$-\etr problem
with $r$ variables and roughly $r^d$ multilinear constraints. This allows us to use 
Lemma~\ref{lem:k-TML} to complete the proof.


%
\begin{lemma}
	\label{lem:standard-d-new}
	Let $F$ be a Boolean formula with one variable and one tensor-polynomial constraint
	of standard degree $d$, let \ycal be a bounded convex set, and let $r=\frac{2 \cdot \alpha^2 \cdot \gamma^d \cdot d^2 }{\eps}$.
	If the constrained \etr problem $\exact(F)$ has a solution in \ycal, then there 
	exists a satisfiable constrained $\frac{\eps}{2}$-\etr problem $\Pi_{ML}$ with $r$
	variables, where each variable is a $k$-uniform vector for $k=\frac{16 \cdot \alpha^4 \cdot \gamma^d \cdot d^4}{\eps^3}$. The Boolean formula of $\Pi_{ML}$ is the conjunction of $\prod_{i=0}^{d-1}(r-i)$ tensor multilinear constraints, and every 
	solution of $\Pi_{ML}$ in \ycal can be transformed to a solution for the constrained
	\eetr problem defined by $F$ and \ycal.
\end{lemma}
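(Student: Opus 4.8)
The plan is to write $x$ as the average of $r$ sampled vectors and then expand the degree-$d$ tensor polynomial $\tsd(A,x,d)$ via multilinearity of the tensor contraction. Concretely, let $x^*$ be the solution to $\exact(F)$ in $\ycal$, and for $i \in [r]$ let $y_i$ be a $k$-uniform vector obtained by sampling $k$ times from the distribution that realizes $x^*$ as a convex combination of $c_1,\dots,c_l$ (as in the proof of Lemma~\ref{lem:k-TML}), so that each $\ebb[y_i] = x^*$ and $y_i \in \ycal$. Define $x := \frac1r\sum_{i\in[r]} y_i$, which is itself $(kr)$-uniform and lies in $\ycal$. Since $\tsd(A,\cdot,d)$ is a sum over $d$-fold products $x(i_1)\cdots x(i_d)$, substituting $x = \frac1r\sum_i y_i$ and expanding gives
\begin{align*}
\tsd(A,x,d) = \frac{1}{r^d}\sum_{(j_1,\dots,j_d)\in[r]^d} \tml(A, y_{j_1},\dots,y_{j_d}).
\end{align*}
I would split this sum into the terms where $j_1,\dots,j_d$ are pairwise distinct and the rest. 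The number of ordered tuples with a repeated index is at most $r^d - \prod_{i=0}^{d-1}(r-i)$, and each such $\tml$ term is bounded in absolute value by $\alpha\gamma^d$ (a product of $d$ coordinates, each at most $\gamma$, times a coefficient at most $\alpha$, and the constant $c$ is shared); comparing with the ``all-distinct'' average, the contribution of the repeated-index terms is $O\!\left(\tfrac{d^2}{r}\cdot\alpha\gamma^d\right)$ — here I use the standard estimate $r^d - \prod_{i=0}^{d-1}(r-i) \le \binom{d}{2} r^{d-1}$. By the choice $r = 2\alpha^2\gamma^d d^2/\eps$ this error is at most $\eps/2$; one should double-check the exact constant, since there is a slight mismatch between the $\alpha$ and $\alpha^2$ powers in the stated $r$ and the crude bound $\alpha\gamma^d$ on a single $\tml$ term, but this is absorbed by the constant or by a marginally sharper bound.

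Next I would set up $\Pi_{ML}$. Introduce $r$ fresh vector variables $z_1,\dots,z_r$ constrained to $\ycal$. The single constraint $\tsd(A,x,d)\bowtie 0$ of $F$ (with $\bowtie\in\{<_\eps,\le_\eps,>_\eps,\ge_\eps\}$) becomes, after the expansion above and discarding the repeated-index terms into the slack, the constraint
\begin{align*}
\frac{1}{\prod_{i=0}^{d-1}(r-i)}\sum_{\substack{(j_1,\dots,j_d)\in[r]^d\\ j_1,\dots,j_d\text{ distinct}}} \tml(A, z_{j_1},\dots,z_{j_d}) \ \bowtie_{\eps/2}\ 0,
\end{align*}
which is a conjunction of $\prod_{i=0}^{d-1}(r-i)$ tensor-multilinear constraints in the sense required (each summand is a $\tml$ polynomial in $r$ of the variables; formally one writes the whole weighted sum as a single $\tml$ over an augmented tensor, or keeps it as the conjunction demanded by the statement). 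The point of pushing the relaxation from $\eps$ down to $\eps/2$ is exactly that the $\eps/2$ we already spent bounding the repeated-index terms, plus the $\eps/2$ tolerance of $\Pi_{ML}$, recombine to give tolerance $\eps$ for the original constraint. I then invoke Lemma~\ref{lem:k-TML} on $\Pi_{ML}$: it has $r$ variables and $\prod_{i=0}^{d-1}(r-i) \le r^d$ multilinear constraints, so Lemma~\ref{lem:k-TML} yields a $k$-uniform solution with $k = \frac{2\alpha^2\gamma^2 r^2 \ln(3 r\cdot r^d)}{(\eps/2)^2}$; substituting $r = 2\alpha^2\gamma^d d^2/\eps$ and simplifying should give $k = \frac{16\alpha^4\gamma^d d^4}{\eps^3}$ up to the logarithmic factor (so the stated $k$ is presumably meant with the logarithmic factor suppressed or the authors intend a slightly different bookkeeping — I would reconcile the $\ln$ term and the exact powers of $\gamma$ at this point). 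The satisfiability of $\Pi_{ML}$ in $\ycal$ follows because $(x^*,\dots,x^*)$ (the same $x^*$ in every coordinate, or indeed the sampled $y_i$'s) satisfies the exact version: $\exact(\Pi_{ML})$ evaluated at $z_i = x^*$ for all $i$ gives $\tsd(A,x^*,d)$, which satisfies the exact constraint of $F$ by hypothesis.

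Finally, for the transformation clause: given any solution $(z_1,\dots,z_r)$ of $\Pi_{ML}$ in $\ycal$, set $x := \frac1r\sum_i z_i \in \ycal$ (convexity). By the expansion identity, $\tsd(A,x,d)$ equals the $\Pi_{ML}$ expression plus the repeated-index correction term, which is bounded by $\eps/2$ in absolute value; since the $\Pi_{ML}$ expression satisfies $\bowtie_{\eps/2} 0$, the triangle inequality gives that $\tsd(A,x,d) \bowtie_{\eps} 0$, i.e.\ $x$ satisfies $F$. I expect the \textbf{main obstacle} to be purely the bookkeeping of constants: getting the repeated-index error to land at exactly $\eps/2$ with the stated $r$, and then getting Lemma~\ref{lem:k-TML}'s output to match the stated $k = 16\alpha^4\gamma^d d^4/\eps^3$ (in particular tracking whether the logarithmic factor $\ln(3r^{d+1})$ is meant to be folded into the constant, and whether the power of $\gamma$ is $\gamma^d$ or should be $\gamma^{d+2}$ after the $\gamma^2$ from Lemma~\ref{lem:k-TML} is included) — the structural argument itself is the clean multilinear expansion plus one application of the previous lemma.
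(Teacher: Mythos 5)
Your structural approach matches the paper's almost exactly: both write $x$ as the average of $r$ independently sampled $k$-uniform vectors, expand $\tsd(A,x,d)$ multilinearly into a sum of $\tml$ terms over ordered $d$-tuples from $[r]$, split into distinct-index and repeated-index parts, bound the repeated-index contribution by $\eps/2$ via the choice of $r$ (you use $r^d - \prod_{i=0}^{d-1}(r-i)\le\binom{d}{2}r^{d-1}$; the paper reaches the equivalent $\bigl(1-(1-\tfrac{d-1}{r})^{d-1}\bigr)\le\tfrac{(d-1)^2}{r}$ via Bernoulli's inequality), and finally apply Lemma~\ref{lem:k-TML} at tolerance $\eps/2$ to the resulting conjunction of $\prod_{i=0}^{d-1}(r-i)$ multilinear constraints. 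The construction of $\Pi_{ML}$ and the back-transformation $x=\tfrac1r\sum_i z_i$ are also the same.

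The one concrete bookkeeping difference, which explains most of the constant discrepancies you flagged, is the choice of $n$ when invoking Lemma~\ref{lem:k-TML}. You substitute $n=r$ (the literal number of variables of $\Pi_{ML}$), whereas the paper substitutes $n=d$. Each individual multilinear constraint $A(\xhat_1,\ldots,\xhat_d)$ involves only $d$ of the $r$ variables, and in the proof of Lemma~\ref{lem:k-TML} the parameter $n$ enters solely through the per-constraint hybrid decomposition (how many single-variable swaps are needed) and the corresponding Hoeffding tolerance $\eps/n$ — both of which should use the number of variables actually appearing in a constraint. Your substitution $n=r$ is safe but costs an extra factor of $r^2/d^2$ in $k$, which is where the excess powers of $\alpha$, $\gamma$, and $1/\eps$ in your estimate come from. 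With $n=d$, $m=|S_d|<r^d$, and tolerance $\eps/2$, Lemma~\ref{lem:k-TML} asks for roughly $k\ge\tfrac{8\alpha^2\gamma^2 d^2(d+2)\ln r}{\eps^2}$, which the paper then claims is dominated by the stated $k=\tfrac{16\alpha^4\gamma^d d^4}{\eps^3}$ by the choice of $r$. Your suspicion that the $\ln$ factor has been absorbed crudely into the $1/\eps$ and that the $\gamma$ powers do not line up perfectly is a fair reading of the paper's own bookkeeping at this step; the statement is not being derived with tight constants here, and the paper essentially folds $\ln r$ into the extra $\tfrac1\eps$.
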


\begin{proof}
	Assume that $x^* \in \ycal$ is a solution for $F$. Let $\tsd(A,x,d)$ denote the tensor
	polynomial of standard degree $d$ used in $F$. 
	For notation simplicity, let $\tsd(A,x,d) = A(x^d)$.
	Create $r$ new $k$-uniform variables $x_1, \ldots, x_r \in \ycal(k)$ by sampling each one from $x^*$, where $\ycal(k)$ is the discretized set made from $\ycal$ by using $k$-uniform vectors, and set
	$x = \frac{1}{r}(x_1+ \ldots + x_r)$. 
	Let $\xcal= \bigcup_{i=1}^r \{x_i\}$ be a \textit{multiset} of $\ycal(k)$ with cardinality $r$, meaning that multiple copies of an element of $\ycal(k)$ are allowed in $\xcal$. In the sequel we will treat the elements of $\xcal$ as distinct, even though some might correspond to the same element of $\ycal(k)$.
	Then, note that $A(x^d)$ can be written as a sum of simple tensor-multivariate 
	polynomials where some of them are multilinear and have as variables  $x_1, \ldots, x_r$. 
	Now, let $\scal$ be the set of all ordered $d$-tuples that can be made by drawing $d$ elements from $\xcal$ with replacement. Formally, 
	$\scal = \{(\xhat_{1}, \ldots, \xhat_{d}): \xhat_1, \ldots, \xhat_d \in \xcal\}$.
	Let us also define $\scal_d$ to be the set of all ordered $d$-tuples that can be made by drawing $d$ elements from $\xcal$ without replacement. Formally, $\scal_d = \{(\xhat_{1}, \ldots, \xhat_{d}): \xhat_{1}, \ldots, \xhat_{d} \in \xcal, \quad \xhat_{1},\ldots, \xhat_{d}~\text{are pairwise different}\}$, and observe that $|S_d| =  \prod_{i=0}^{d-1}(r-i)$. 
	So, any element of $\scal_d$, combined with tensor $A$, produces a multilinear polynomial.
	Hence, using the notation introduced, we get that $|A(x^d) - A(x^{*^d})|$ is less than or equal to the sum of the following two sums
	\begin{align}
	\label{eq:sum1}
	& \frac{1}{r^d} \sum_{(\xhat_{1}, \ldots, \xhat_{d}) \in \scal_d} \left| A(\xhat_{1}, \ldots, \xhat_{d})  - A(x^{*^d}) \right| \qquad \text{and}\\
	\label{eq:sum2}
	& \frac{1}{r^d} \sum_{(\xhat_{1}, \ldots, \xhat_{d}) \in \scal - \scal_d} \left| A(\xhat_{1}, \ldots, \xhat_{d})  -  A(x^{*^d}) \right|.
	\end{align}
	Observe, $|\scal-\scal_d| = r^d - |\scal_d|$ and that $\left| A(\xhat_{1}, \ldots, \xhat_{d})  -  A(x^{*^d}) \right| \leq \gamma^d \cdot \alpha$ for every 
	$A(\xhat_{1}, \ldots, \xhat_{d})$. Then, for the sum given in~\eqref{eq:sum2} 
	we get
	\begin{align*}
	\frac{1}{r^d} \sum_{(\xhat_{1}, \ldots, \xhat_{d}) \in \scal - \scal_d} & \left| A(\xhat_{1}, \ldots, \xhat_{d})  -  A(x^{*^d}) \right| \\
	& \leq \left(1 - \frac{r\cdot (r-1) \cdots (r-d+1)}{r^d} \right)\cdot \gamma^d \cdot \alpha \\
	& \leq \left(1 - \left(1 -\frac{1}{r}\right) \left(1 -\frac{2}{r}\right) \cdot \left(1 -\frac{d-1}{r}\right) \right)\cdot \gamma^d \cdot \alpha\\
	& \leq \left(1 - \left(1 -\frac{d-1}{r}\right)^{d-1} \right)\cdot \gamma^d \cdot \alpha \\
	& \leq \left(1 - \left(1 -\frac{(d-1)^2}{r}\right)\right)\cdot \gamma^d \cdot \alpha \quad \text{(Bernoulli's inequality)} \\ 
	& = \frac{(d-1)^2}{r} \cdot \gamma^d \cdot \alpha\\
	& \leq \frac{\eps}{2}.
	\end{align*}
	Hence, in order for the original constraint to be satisfied, it suffices to satisfy the constraint 
	\begin{align}
	\label{eq:xd-help-12}
	\frac{1}{r^d} \sum_{(\xhat_{1}, \ldots, \xhat_{d}) \in \scal_d} \left| A(\xhat_{1}, \ldots, \xhat_{d})  - A(x^{*^d}) \right| \leq \frac{\eps}{2}.
	\end{align}
	Observe that $|\scal_d| =  \prod_{i=0}^{d-1}(r-i) < r^d$, therefore, instead of the constraint~\eqref{eq:xd-help-12}, 
	it suffices to satisfy the following $|S_d|$ constraints (we introduce one 
	constraint for every $(\xhat_{1}, \ldots, \xhat_{d}) \in \scal_d$)
	\begin{align}
	\label{eq:xd-help-13}
	\left| A(\xhat_{1}, \ldots, \xhat_{d})  - A(x^{*^d}) \right| \leq \frac{\eps}{2}.
	\end{align}
	Note that each constraint \eqref{eq:xd-help-13} is the relaxed by $\eps/2$ version of a constraint with a multilinear function equal to 0; multilinearity is due to the fact that $\xhat_{1}, \ldots, \xhat_{d}$ are pairwise different by definition of the set $S_d$. The proof is completed by using Lemma \ref{lem:k-TML} for $n=d$, $m = |S_d|$ and $\eps / 2$ instead of $\eps$ to show that indeed there exists a collection $S_d$ of tuples $\xhat_{1}, \ldots, \xhat_{d}$, where each $\xhat_{i}$, $i \in [d]$ is a $k$-uniform vector with $k \geq \frac{8 \cdot \alpha^2 \cdot \gamma^2 \cdot d^2 (d+2) \cdot \ln r}{\eps^2}$
	such that all $|S_d|$ constraints of \eqref{eq:xd-help-13} are satisfied. The latter inequality is true by our choice of $k$ and $r$.

\end{proof}

Now we can prove the following lemma.

\begin{lemma}
	\label{lem:k-TSD}
	Let $F$ be a Boolean formula with variable $x$ and one tensor-polynomial constraint
	of standard degree $d$,  and let \ycal be a bounded convex set. If the 
	constrained \etr problem defined by $\exact(F)$ and \ycal has a solution, then the 
	constrained \eetr problem defined by $F$ and \ycal has a $k$-uniform solution where 
	\begin{align*}
	k = \frac{32 \cdot \alpha^6 \cdot \gamma^{2d} \cdot d^6}{\eps^4}.
	\end{align*}
\end{lemma}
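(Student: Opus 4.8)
The plan is to read the result off from Lemma~\ref{lem:standard-d-new}, combined with the elementary fact that averaging several $k$-uniform vectors produces a vector that is again uniform, only with a coarser granularity.

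First I would apply Lemma~\ref{lem:standard-d-new} with the given $\eps$. Since $\exact(F)$ has a solution in \ycal, this yields a \emph{satisfiable} constrained $\frac{\eps}{2}$-\etr problem $\Pi_{ML}$ with $r = \frac{2\alpha^2\gamma^d d^2}{\eps}$ variables, each of which is a $k_0$-uniform vector for $k_0 = \frac{16\alpha^4\gamma^d d^4}{\eps^3}$, and such that every solution of $\Pi_{ML}$ in \ycal{} — via the averaging map $x := \frac{1}{r}(x_1+\dots+x_r)$ used in that proof — transforms into a solution of the \eetr{} problem defined by $F$ and \ycal. So I would fix a satisfying assignment $x_1,\dots,x_r \in \ycal(k_0)$ of $\Pi_{ML}$ and set $x$ to be their average; convexity of \ycal{} gives $x \in \ycal$, and the last clause of Lemma~\ref{lem:standard-d-new} gives that $x$ satisfies $F$.

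It then remains only to bound the granularity of $x$. Writing each $x_i = \sum_{j\in[l]}(\beta_{ij}/k_0)\,c_j$ with $\beta_{ij}\in\nats$ and $\sum_{j}\beta_{ij}=k_0$, we obtain $x = \sum_{j\in[l]}\bigl(\tfrac{1}{k_0 r}\sum_{i=1}^{r}\beta_{ij}\bigr)c_j$ with $\sum_{j}\sum_{i}\beta_{ij}=k_0 r$, so $x$ is $(k_0 r)$-uniform. Substituting, $k_0 r = \frac{16\alpha^4\gamma^d d^4}{\eps^3}\cdot\frac{2\alpha^2\gamma^d d^2}{\eps} = \frac{32\alpha^6\gamma^{2d}d^6}{\eps^4}$, which is exactly the claimed value of $k$, completing the proof.

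Since Lemma~\ref{lem:standard-d-new} has already done the substantive work — decomposing the degree-$d$ tensor polynomial into $\prod_{i=0}^{d-1}(r-i)$ multilinear pieces, controlling the residual $|\scal-\scal_d|$ terms by $\eps/2$ via Bernoulli's inequality, and invoking the multilinear sampling bound of Lemma~\ref{lem:k-TML} — there is no genuinely hard step left here. The only points requiring care are (i) confirming that the "transformation" promised by Lemma~\ref{lem:standard-d-new} is precisely this averaging, so that its output is honestly a convex combination of $c_1,\dots,c_l$ and hence lies in \ycal, and (ii) the bookkeeping that the size-$r$ \emph{multiset} appearing in that proof makes the combined vector $(k_0 r)$-uniform even when some $x_i$ coincide. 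That multiplicity bookkeeping is the one place where a careless argument could slip, so I would state it explicitly.
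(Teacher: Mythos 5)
Your proposal is correct and follows essentially the same route as the paper: invoke Lemma~\ref{lem:standard-d-new} to produce the satisfiable $\frac{\eps}{2}$-\etr{} problem $\Pi_{ML}$ with $r = \frac{2\alpha^2\gamma^d d^2}{\eps}$ variables, each a $k_0$-uniform vector for $k_0 = \frac{16\alpha^4\gamma^d d^4}{\eps^3}$, then average them to obtain the $(k_0 r)$-uniform solution $\xhat$, with $k_0 r = \frac{32\alpha^6\gamma^{2d}d^6}{\eps^4}$. The paper states this more tersely and supplements it by re-displaying the triangle-inequality split of $|A(\xhat^d) - A(x^{*d})|$ into the $\scal_d$ and $\scal-\scal_d$ parts, while you instead lean directly on the transformation clause of Lemma~\ref{lem:standard-d-new} and spell out the elementary $(k_0 r)$-uniformity bookkeeping, but the substance is identical.
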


\begin{proof}
	First, we use Lemma~\ref{lem:standard-d-new} to construct the constrained $\frac{\eps}{2}$-\etr problem $\Pi_{ML}$ with tensor-multilinear constraints.
	Recall that $\Pi_{ML}$ has $r=\frac{2\cdot \alpha^2 \cdot \gamma^d \cdot d^2}{\eps}$
	variables and if $\Pi_{ML}$ is
	satisfiable, then there exist $\frac{k}{r}$-uniform vectors 
	$\xhat_1 \in \ycal, \ldots, \xhat_r \in \ycal$ that $\eps/2$-satisfy $\Pi_{ML}$.  
	Then, let us construct the $k$-uniform vector $\xhat = \frac{1}{r} \cdot (\xhat_1+ \ldots+ \xhat_r)$. Note that, according to Lemma~\ref{lem:standard-d-new}, it is
	\begin{align*}
	|A({\xhat}^d) - A(x^{*^d})| &\leq \frac{1}{r^d} \sum_{(\xhat_{1}, \ldots, \xhat_{d}) \in \scal_d} \left| A(\xhat_{1}, \ldots, \xhat_{d})  - A(x^{*^d}) \right| \\
	&\qquad + \frac{1}{r^d} \sum_{(\xhat_{1}, \ldots, \xhat_{d}) \in \scal - \scal_d} \left| A(\xhat_{1}, \ldots, \xhat_{d})  -  A(x^{*^d}) \right| \\
	&\leq \frac{\eps}{2} + \frac{\eps}{2} \\
	&=\eps.
	\end{align*}
	This completes the proof of the lemma.
\end{proof}

\subsubsection{Problems with simple multivariate constraints}
%
We now assume that we are given a constraint-\eetr
problem defined by a Boolean formula $F$ of tensor simple multilinear polynomial 
constraints and a bounded convex set \ycal. As before $\gamma = \|\ycal\|_\infty$ and let
$\alpha$ be the maximum absolute value of the coefficients of the constraints. 
We will say that the constraints are of maximum degree $d$ if $d$ is the maximum degree among all variables.  
%
The main idea of the proof of the following lemma is to rewrite the problem as an equivalent problem with standard degree
$d$ constraints and then apply Lemmas~\ref{lem:k-TSD} and~\ref{lem:k-TML} to 
derive the bound for $k$.

\begin{lemma}
	\label{lem:k-TSM}
	Let $F$ be a Boolean formula with $n$ variables and $m$ simple tensor-multivariate
	polynomial constraints of maximum degree $d$ and let \ycal be a bounded convex set in the 
	variables space. If the constrained \etr problem defined by $\exact(F)$ and \ycal has a
	solution, then the constrained \eetr problem defined by $F$ and \ycal has a $k$-uniform
	solution where
	\begin{align*}
	k = \frac{512 \cdot \alpha^6 \cdot \gamma^{2d+2} \cdot n^6 \cdot d^6 \cdot \ln ( 2 \cdot \alpha' \cdot \gamma' \cdot d \cdot n \cdot m )}{\eps^5},
	\end{align*}
	where $\alpha':=\max(\alpha, 1), \gamma':=\max(\gamma, 1)$.
\end{lemma}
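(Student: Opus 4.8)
The plan is to reduce each simple tensor-multivariate constraint to the \emph{standard degree $d$} regime so that Lemma~\ref{lem:k-TSD} (together with Lemma~\ref{lem:k-TML}, which it already uses) can be invoked. Fix a solution $(x_1^*,\dots,x_n^*)\in\ycal$ of $\exact(F)$; since every constraint of $F$ has tolerance $\eps$, it suffices to produce a $k$-uniform tuple $(x_1',\dots,x_n')\in\ycal$ such that for each of the $m$ constraints $\stm(A,x_1^{d_1},\dots,x_n^{d_n})\bowtie 0$ of $F$ one has $\bigl|\stm(A,(x_1')^{d_1},\dots,(x_n')^{d_n})-\stm(A,(x_1^*)^{d_1},\dots,(x_n^*)^{d_n})\bigr|\le\eps$. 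I would establish this through the telescoping identity
\[
\stm(A,x_1',\dots,x_n')-\stm(A,x_1^*,\dots,x_n^*)=\sum_{j=1}^{n}\Delta_j,\qquad
\Delta_j:=\stm(A,x_1',\dots,x_j',x_{j+1}^*,\dots,x_n^*)-\stm(A,x_1',\dots,x_{j-1}',x_j^*,x_{j+1}^*,\dots,x_n^*),
\]
so that it is enough to guarantee $|\Delta_j|\le\eps/n$ for every $j$ (here $x_j$ is applied $d_j$ times throughout, as before).

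The key point is that $\Delta_j$ records a change in only the single vector variable $x_j$: once the remaining vectors are fixed (to the already-committed $x_1',\dots,x_{j-1}'$ and to $x_{j+1}^*,\dots,x_n^*$), the map $x_j\mapsto\stm(A,\dots,x_j,\dots)$ is a standard degree-$d_j$ tensor polynomial $\tsd(A^{(j)},x_j,d_j)$ with $d_j\le d$, whose coefficient tensor $A^{(j)}$ is $A$ contracted against the fixed vectors and whose entries are bounded in terms of $\alpha$ and $\gamma$. So, having fixed $x_1',\dots,x_{j-1}'$, choosing a good $x_j'$ amounts to finding a $k$-uniform vector that simultaneously $\tfrac{\eps}{n}$-preserves the $m$ standard degree-$\le d$ single-variable polynomials coming from the $m$ constraints. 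For this I would re-run the proof of Lemma~\ref{lem:k-TSD}: write $x_j'=\tfrac1r(x_{j,1}+\dots+x_{j,r})$ with each $x_{j,s}$ sampled independently from $x_j^*$ and $r$ polynomially large (as in Lemma~\ref{lem:standard-d-new}, rescaled to tolerance $\eps/n$); expand each of the $m$ polynomials by Lemma~\ref{lem:standard-d-new} into a part that is multilinear in the sub-variables $x_{j,1},\dots,x_{j,r}$ plus a residual of absolute value at most $\eps/(2n)$; and then apply Lemma~\ref{lem:k-TML} to the resulting system of $O(m r^{d})$ tensor-multilinear constraints (each in at most $d$ of the $r$ sub-variables) with tolerance $\eps/(2n)$. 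This yields $(k/r)$-uniform sub-variables whose average is the required $k$-uniform $x_j'$.

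To finish, I would run this construction for $j=1,\dots,n$, take a single union bound over the (at most $n\cdot m\cdot r^{d}$) bad events generated over all variables and all constraints — the slack needed for this is exactly the logarithmic factor supplied by Lemma~\ref{lem:k-TML} — and conclude by the probabilistic method. It then remains to trace the bound on $k$ through the three nested steps (telescoping over the $n$ vector variables, the degree-$d$ expansion of Lemma~\ref{lem:standard-d-new}, and the union bound of Lemma~\ref{lem:k-TML}), using that $r$ is polynomial in $\alpha,\gamma,d,n,1/\eps$ so that $\ln(m r^{d+1})=O\bigl(\ln(2\alpha'\gamma' d n m)\bigr)$; collecting the constants then gives
\[
k=\frac{512\cdot\alpha^6\cdot\gamma^{2d+2}\cdot n^6\cdot d^6\cdot\ln(2\cdot\alpha'\cdot\gamma'\cdot d\cdot n\cdot m)}{\eps^5}.
\]

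I expect the main difficulty to be parameter bookkeeping rather than any new idea. One must check that contracting $A$ against the other variables' values inflates the coefficients by only enough to leave the exponent of $\gamma$ at $2d+2$ (and not something like $\gamma^{\Theta(nd)}$), that the number of multilinear constraints produced per original constraint stays $O(r^{d})$ so the union-bound logarithm collapses to $\ln(\mathrm{poly})$, and that the powers of $n$, $d$ and $1/\eps$ together with the leading constant $512$ come out precisely as stated after composing the three lemmas. The only genuinely non-routine ingredient — the handling of degree-$d$ dependence — is already encapsulated in Lemma~\ref{lem:standard-d-new}; everything else is a careful combination of telescoping, Hoeffding's inequality, and the union bound.
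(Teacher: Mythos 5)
Your proposal follows essentially the same route as the paper's own proof: telescoping each constraint across the $n$ vector variables (the paper's $\stm_j^i - \stm_j^{i-1}$ decomposition is exactly your $\Delta_j$), recognizing each telescoping term as a single-variable standard degree-$d_{ji}$ polynomial, invoking Lemma~\ref{lem:standard-d-new} with tolerance $\eps/(2n)$ to convert each such term into $\prod_{i=0}^{d-1}(r-i)$ multilinear constraints, and then closing with one application of Lemma~\ref{lem:k-TML} to the aggregated system of $rn$ sub-variables and $O(r^d n m)$ constraints, followed by a trace of the constants. The bookkeeping concerns you flag (the coefficient inflation from fixing the other variables, and the collapse of $\ln(m r^{d+1})$ to $\ln(\mathrm{poly})$) are precisely where the paper spends the bulk of its effort.
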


\begin{proof}
	Let $x^*_1, \ldots, x^*_n$ be a solution for $\exact(F)$ 
	and let $x'_i$, $i \in [n]$ be a variable $k$-uniform vector sampled from $x_{i}^*$. We will prove that if $k$ equals at least the quantity of the statement of the lemma, then there exist vectors $x'_{1}, \dots, x'_{n}$ that constitute a solution to the constrained \eetr problem defined by $F$ and $\ycal$.
	
	Consider the $j$-th constraint where $j \in [m]$ defined by the simple tensor-multivariate 
	polynomial $\stm(A_j,x_1^{d_{j1}}, \ldots,x_n^{d_{jn}})$. We will use the
	same technique we used in Lemma~\ref{lem:k-TML} to create $n$ constraints,
	where constraint $i \in [n]$ is defined via a simple degree $d_{ji}$ polynomial.
	Again, for notation simplicity for every $i \in [m]$ we use $\stm_j^i$ to denote
	the polynomial $\stm(A_j,x_1^{d_{j1}}, \ldots,x_n^{d_{jn}})$ where we set 
	$x_1 = x'_1, \ldots, x_i = x'_i$ and $x_{i+1} = x^*_{i+1}, \ldots, x_n = x^*_n$.
	Let $\stm_j^0 := \stm(A_j,(x^*_1)^{d_{j1}}, \ldots,(x^*_n)^{d_{jn}})$.
	Then, for every $j \in [m]$ we define the following $n$ constraints
	\begin{align}
	\label{eq:sm-to-sd}
	|\stm_j^i - \stm_j^{i-1}| \leq \frac{\eps}{n}.
	\end{align}
	
	Observe that for some $j \in [m]$, every constraint $i$ of the form~\eqref{eq:sm-to-sd} defines a simple degree $d_{ji}$ polynomial with respect to variable $x'_i$. Furthermore, observe 
	that if every such constraint is satisfied, then the initial constraint defined
	by $\stm(A_j,x_1^{d_{j1}}, \ldots,x_n^{d_{jn}})$ is satisfied too. Then, we convert each 
	such constraint to a set of $\prod_{i=0}^{d-1}(r-i)$ multilinear constraints with 
	$r = \frac{2 \cdot \alpha^2  \cdot \gamma^{d} \cdot d^2}{\eps}$ variables, using Lemma \ref{lem:standard-d-new} where we demand that every 
	multilinear constraint is $\frac{\eps}{2n}$-satisfied (we restrict the current $\frac{\eps}{n}$ to half of it in order to use Lemma \ref{lem:standard-d-new}). The proof is then 
	completed by using Lemma~\ref{lem:k-TML} where we observe that we have 
	$r \cdot n = \frac{2 \cdot \alpha^2  \cdot \gamma^{d} \cdot d^2 \cdot n}{\eps}$ 
	variables and $\prod_{i=0}^{d-1} (r-i)\cdot n \cdot m < r^d \cdot n \cdot m $ constraints 
	and we set $\eps$ to $\frac{\eps}{2n}$.
	
	To arrive to the actual size $k$ of the required uniform vector, we start from the size $k'$ prescribed by Lemma \ref{lem:k-TML} and sequentially set proper values for the parameters as dictated by our method for transforming the constraints. We have
	
	\begin{align*}
	k' &= \frac{2\cdot \alpha^2 \cdot \gamma^2 \cdot n^2\cdot \ln (3 \cdot n \cdot m)}{\eps^2} \\
	&= 	\frac{8 \cdot \alpha^6 \cdot \gamma^{2d+2} \cdot n^2 \cdot d^4 \cdot \ln (6 \cdot \alpha^2 \cdot \gamma^d \cdot d^2 \cdot n \cdot m / \eps)}{\eps^4} \qquad (n \leftarrow n \cdot r) \\
	&= 	\frac{8 \cdot \alpha^6 \cdot \gamma^{2d+2} \cdot n^2 \cdot d^4 \cdot \ln (6 \cdot \alpha^{2d+2} \cdot \gamma^{d^2 + d} \cdot d^{2d + 2} \cdot n^2 \cdot m / \eps^{d+1})}{\eps^4} \qquad (m \leftarrow r^d \cdot n \cdot m) \\
	&= 	\frac{128 \cdot \alpha^6 \cdot \gamma^{2d+2} \cdot n^6 \cdot d^4 \cdot \ln (6 \cdot 2^{d+1} \cdot \alpha^{2d+2} \cdot \gamma^{d^2 + d} \cdot d^{2d + 2} \cdot n^{d+3} \cdot m / \eps^{d+1})}{\eps^4} \qquad (\eps \leftarrow \frac{\eps}{2n}) \\
	&\leq 	\frac{128 \cdot \alpha^6 \cdot \gamma^{2d+2} \cdot n^6 \cdot d^4 \cdot \ln ( 2 \cdot \max(\alpha, 1) \cdot \max(\gamma, 1) \cdot d \cdot n \cdot m / \eps)^{4d^2}}{\eps^4} \qquad (\text{for any } d \geq 1)  \\
	&= 	\frac{512 \cdot \alpha^6 \cdot \gamma^{2d+2} \cdot n^6 \cdot d^6 \cdot \ln ( 2 \cdot \max(\alpha, 1) \cdot \max(\gamma, 1) \cdot d \cdot n \cdot m / \eps)}{\eps^4}  \\
	&\leq 	\frac{512 \cdot \alpha^6 \cdot \gamma^{2d+2} \cdot n^6 \cdot d^6 \cdot \ln ( 2 \cdot \max(\alpha, 1) \cdot \max(\gamma, 1) \cdot d \cdot n \cdot m )}{\eps^5}.
	\end{align*}	
	We want $k \geq k'$, therefore it suffices to bound from below $k$ by the upper bound of $k'$. This completes the proof.
\end{proof}

\subsubsection{Putting everything together}

%
\begin{proof}
	For the final step of the proof of Theorem \ref{thm:main}, assume that $x^*_1, \ldots, x^*_n \in \ycal$ is a solution for $\exact(F)$. Consider now 
	a multivariate constraint $i \in [m]$ of $F$ defined by $TMV_i(x_1, \ldots, x_n)$. Firstly, 
	we replace this constraint by 
	\begin{align}
	\label{eq:tmv-help}
	|TMV_i(x_1, \ldots, x_n) - TMV_i(x^*_1, \ldots, x^*_n)| \leq \eps.
	\end{align}
	Then, replace constraint \eqref{eq:tmv-help} by $t$ constraints of the form
	\begin{align}
	\label{eq:stm-help}
	|STM_{i,j}(x_1, \ldots, x_n) - STM_{i,j}(x^*_1, \ldots, x^*_n)| \leq \frac{\eps}{t}
	\end{align}
	where $STM_{i,1}(x_1, \ldots, x_n), \ldots, STM_{i,t}(x_1, \ldots, x_n)$ are the simple
	tensor multivariate polynomials $TMV_i(x_1, \ldots, x_n)$ consists of. By the triangle 
	inequality we get that if all $t$ constraints given by~\eqref{eq:stm-help} hold, then 
	constraint \eqref{eq:tmv-help} holds as well. Hence, we can reduce the problem to an 
	equivalent problem with the same $n$ variables and  $m\cdot t$ constraints that all of them
	are simple tensor multivariate polynomials. So, we can apply Lemma~\ref{lem:k-TSM} where we 
	replace $m$ with $m\cdot t$ and $\eps$ with $\frac{\eps}{t}$. This completes the proof of the theorem.
\end{proof}

%
%
%
%
%
%

\section{Applications}
\label{sec:appl}

We now show how our theorems can be applied to derive new approximation
algorithms for a variety of problems. In order to conclude that Corollary \ref{cor:qptas} provides a PTAS or QPTAS for some given problem, one has to carefully determine the actual input size of the problem and show that the running time of the corollary's algorithm satisfies the PTAS or QPTAS definition.

\subsection{Constrained approximate Nash equilibria}

A {\em constrained} Nash equilibrium is a Nash equilibrium that satisfies some
extra constraints, like specific bounds on the payoffs of the players. 
Constrained Nash equilibria attracted the attention of many authors, who 
proved \NP-completeness for two-player games~\cite{GZ89,CS08,BM12} and 
\etr-completeness for three-player games~\cite{BM12,BM14,BM16,BM17,GMV+} for 
constrained \emph{exact} Nash equilibria. 

Constrained approximate equilibria have been studied, but so far only lower
bounds have been derived~\cite{ABC13,HK11,BKW15,DFS18,DFS17}. It has been
observed that sampling methods can give QPTASs for finding constrained
approximate Nash equilibria for certain constraints in two player
games~\cite{DFS18}.

By applying Theorem~\ref{thm:main}, we get the following result for games with number of players up to polylogarithmic in the number of pure strategies
(here n is the number of players): \emph{Any property of
	an approximate equilibrium that can be formulated in $\eps$-\etr where $\alpha$,
	$\gamma$, $d$, $t$ and $n$ are up to polylogarithmic in the number of pure strategies has a QPTAS.} This generalises past
results to a much broader class of constraints, and provides results for games
with more than two players, which had not previously been studied in this
setting. 

A game is defined by the set of players, the set of actions for every 
player, and the payoff function of every player. In normal form games, 
the payoff function is given by a multilinear function on a tensor of 
appropriate size. Consider an $n$-player game where every player has 
$l$-actions, and let $A_j$ denote the payoff tensor of player $j$ with elements
in $[0,1]$; $A_j$ has size $\times_{i=1}^n l$. 
The interpretation of the tensor $A_j$ is the following: the element 
$A_j(i_1, \ldots, i_n)$ of the tensor corresponds to the payoff of player
$j$ when Player 1 chooses action $i_1$, Player 1 chooses action $i_2$, and
so on. To play the game, every player $j$ chooses a probability distribution 
$x_j \in \Delta^l$, a.k.a. a {\em strategy}, over their actions. A collection 
of strategies is called {\em strategy profile}. The expected payoff of player 
$j$ under the strategy profile $(x_1, \ldots, x_n)$ is given by 
$ML(A_j, x_1, \ldots, x_n)$. For notation simplicity, let 
$u_j(x_j,x_{-j}) := ML(A_j, x_1, \ldots, x_n)$, where $x_{-j}$ is the strategy
profile of all players except player $j$.
A strategy profile $(x^*_1, \ldots, x^*_n)$ is a Nash equilibrium if for
every player $j$ it holds that $u_j(x^*_j,x^*_{-j}) \geq u_j(x_j,x^*_{-j})$
for every $x_j \in \Delta^l$, or equivalently $u_j(x^*_j,x^*_{-j}) \geq 
u_j(s_{p},x^*_{-j})$ for every possible $s_{p}$, where $s_{p}$ denotes the
case where player $j$ chooses their action $p$ with probability 1.

Our framework formally describes a broad family of constrained Nash equilibrium problems for which we can get a QPTAS.
\begin{theorem}
	\label{thm:eps-NE}
	Let $\Gamma$ be an $n$-player $l$-action normal form game $\Gamma$. Furthermore, let $F$ be a Boolean formula with $c \in \poly(l)$ TMV constraints 
	of degree $d$.
	If $n, d \in \polylog(l)$, then in quasi-polynomial time we can compute an approximate NE of $\Gamma$ 
	constrained by $F$, or decide that no such constrained approximate NE exists.
\end{theorem}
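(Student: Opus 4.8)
The plan is to phrase the constrained approximate equilibrium problem as a constrained $\eps$-\etr instance with tensor constraints and then apply Theorem~\ref{thm:main}, enumerating $k$-uniform candidates as in Corollary~\ref{cor:qptas}. I would use the $n$ vector variables $x_1,\dots,x_n$, one strategy per player, and constrain them to the product of simplices $\Delta^l\times\cdots\times\Delta^l$, which is the convex hull of the $l^n$ pure strategy profiles $(e_{i_1},\dots,e_{i_n})$; every such vertex has infinity norm $1$, so the parameter $\gamma$ equals $1$. As constraints I would take, for every player $j\in[n]$ and every pure strategy $p\in[l]$, the approximate best-response inequality $u_j(x_j,x_{-j}) \ge_\eps u_j(s_p,x_{-j})$, together with the $c$ given TMV constraints of $F$; call the resulting $\eps$-\etr formula $F'$. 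Note that $\exact(F')$ is precisely the decision problem ``does $\Gamma$ have an exact Nash equilibrium satisfying $F$?'', and by Nash's theorem an exact equilibrium always exists, so whenever $F$ is feasible over the equilibria of $\Gamma$ the formula $\exact(F')$ has a solution.

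Next I would verify that the best-response inequalities are genuine tensor constraints of bounded degree and length. Using $\sum_{i_j}x_j(i_j)=1$ one can rewrite $u_j(x_j,x_{-j})-u_j(s_p,x_{-j})$ as the single STM polynomial $\stm(B_{j,p},x_1,\dots,x_n)$ with $B_{j,p}(i_1,\dots,i_n)=A_j(i_1,\dots,i_n)-A_j(i_1,\dots,i_{j-1},p,i_{j+1},\dots,i_n)$, which has degree $1$ and coefficients in $[-1,1]$. Hence $F'$ has $m=nl+c$ TMV constraints over $n$ vector variables of dimension $l$, maximum degree $d$, coefficient bound $\alpha=O(1)$ (taking the coefficients of $F$ to be bounded), $\gamma=1$, and length $t$ bounded by $(d+1)^n$ via the standard bound for TMV polynomials. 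Applying Theorem~\ref{thm:main} to $F'$: if $\exact(F')$ has a solution, then $F'$ has a $k$-uniform solution, with $k$ given by the theorem's formula evaluated at these parameters.

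The algorithm then enumerates every $k$-uniform point of $\Delta^l\times\cdots\times\Delta^l$. Such a point is the concatenation of a $k$-uniform point in each factor simplex, so there are at most $\binom{l+k-1}{k}^{\,n}=l^{O(nk)}$ of them; for each candidate it evaluates the $m$ tensor constraints in time polynomial in their description. If some candidate satisfies the approximate formula $F'$, output it as the desired $\eps$-approximate constrained equilibrium; otherwise, by the contrapositive of Theorem~\ref{thm:main}, $\exact(F')$ has no solution, i.e.\ no exact equilibrium of $\Gamma$ satisfies $F$, so report that. The overall running time is $l^{O(nk)}$, and the remaining task is to check that, because $n$ and $d$ are polylogarithmic in $l$ and $m=nl+c$ is polynomial in $l$, this is quasi-polynomial in the true input size $N$ (which is dominated by the $n$ payoff tensors of size $l^n$, so $\log N=\Theta(n\log l)$ up to the description length of $F$).

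The step I expect to be the main obstacle is exactly this last parameter accounting. One must be careful that the base of the enumeration is the number $l^n$ of vertices of the product simplex, so it is the product $nk\log l$ --- not $k\log l$ --- that has to be kept polylogarithmic in $N$, and $k$ itself carries a factor $t^5$ with $t\le(d+1)^n$; keeping this under control is where the hypotheses $n,d\in\polylog(l)$ are really spent. Everything upstream --- the existence of an exact equilibrium, the STM rewriting of best responses, and the triangle-inequality reductions internal to Theorem~\ref{thm:main} --- is routine.
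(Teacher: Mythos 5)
Your proposal takes essentially the same route as the paper: encode the best-response inequalities together with the $c$ given TMV constraints as a constrained $\eps$-\etr instance over the product of simplices, invoke Theorem~\ref{thm:main}, and enumerate the $k$-uniform candidates. You are, if anything, more careful than the paper about the parameter bookkeeping: you correctly record $t \le (d+1)^n$, whereas the paper's proof writes the bound $t \le d$, which is incorrect, and your concern that the resulting $t^5$ factor could be quasi-polynomial rather than polylogarithmic in $l$ is legitimate --- both your argument and the paper's implicitly need the given TMV constraints of $F$ to have polylogarithmic length $t$, a hypothesis the theorem statement leaves unstated.
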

\begin{proof}
	Observe that we can write the problem of the existence of a constrained Nash 
	equilibrium as an \etr problem. The constraints of the problem will be the constraints 
	of $F$ plus the constraint
	$$u_j(s_{l},x_{-j}) - u_j(x_j,x_{-j}) \leq 0$$
	for every player $i \in [m]$ and every action $s_l$ of player $j$.
	
	Thus, we can use Theorem~\ref{thm:main} and complete the proof since we produced an
	\eetr problem with $m = c + n \cdot l = \poly(l)$ constraints, which is polynomial in the
	input size; $d$ and $t$ are polylogarithmic in $l$ by assumption (it always holds that $t \leq d$); $\gamma=1$ since every variable is 
	a probability distribution; $\alpha=1$ by the definition of normal form games.
\end{proof}

\subsection{Shapley games}

Shapley's stochastic games~\cite{S53} describe a two-player infinite-duration
zero-sum game. The game consists of $N$ \emph{states}. Each state specifies a
two-player $M \times M$ bimatrix game where the players compete over: (1) a reward
(which may be negative) that is paid by player two to player one, and (2) a
probability distribution over the next state of the game. So each round consists
of the players playing a bimatrix game at some state $s$, which generates a
reward, and the next state $s'$ of the game. The reward in round $i$ is
discounted by $\lambda^{i-1}$, where $0 < \lambda < 1$ is a \emph{discount
	factor}. The overall payoff to player 1 is the discounted sum of the infinite
sequence of rewards generated during the course of the game.

Shapley showed that these games are determined, meaning that there exists a
value vector $v$, where $v_s$ is the value of the game starting at state $s$. A
polynomial time algorithm has been devised for computing the value vector of a
Shapley game when the number of states $N$ is constant~\cite{HKLMT11}. However,
since the values may be irrational, this algorithm needs to deal with algebraic
numbers, and the \emph{degree} of the polynomial is $O(N)^{N^2}$, so if $N$ is even
mildly super-constant, then the algorithm is not polynomial.

Furthermore, Shapley showed that the value vector is the unique solution of a system of
polynomial optimality equations, which can be formulated in \etr. Any approximate
solution of these equations gives an approximation of the value vector, and
applying Theorem~\ref{thm:main} gives us a QPTAS. This algorithm works when $N
\in O(\sqrt[6]{\log M})$, which is a value of $N$ that prior work cannot handle.
The downside of our algorithm is that, since we require the solution to be
bounded by a convex hull defined by finitely many points, the algorithm only works when the value vector is
reasonably small. Specifically, the algorithm takes a constant bound $B \in
\reals$, and either finds the approximate value of the game, or verifies that
the value is strictly greater than $B$.

To formally define a Shapley game, we use $N$ to denote the number of states,
and $M$ to denote the number of actions. The game is defined by the following
two functions.
\begin{itemize}
	\item For each $s \le N$ and $j, k \le M$ the function $r(s, j, k)$ gives the
	reward at state $s$ when player one chooses action $j$ and player two 
	chooses action $k$.
	
	\item For each $s, s' \le N$ and $j, k \le M$ the function $p(s, s', j, k)$
	gives the probability of moving from state $s$ to state $s'$ when player one
	chooses action $j$ and player two chooses action $k$. It is required that
	$\sum_{s' = 1}^N p(s, s', j, k) = 1$ for all $s$, $j$, and $k$.
\end{itemize}

The game begins at a given starting state. In each round of
the game the players are at a state $s$, and play the matrix game at that 
state by picking an action from the set $\{1, 2, \dots, M\}$. 
The players are allowed to use
randomization to make this choice. Supposing that the first player chose action
$j$ and the second player chose the action $k$, the first player receives the
reward $r(s, j, k)$, and then a new state $s'$ is chosen according to the
probability distribution given by $p(s, \cdot, j, k)$. 

The reward in future rounds is \emph{discounted} by a factor of $\lambda$ where
$0 < \lambda < 1$ in each round. So if $r_1$, $r_2$, \dots is the infinite
sequence of rewards, the total reward paid by player two to player one is
$\sum_{i=1}^{\infty} \lambda^{i-1} \cdot r_i$, which, due to the choice of
$\lambda$, is always a finite value.

The two players play the game by specifying a probability distribution at each
state, which represents their strategy for playing at that state. Let $\Delta^M$
denote the $M$-dimensional simplex, which represents the strategy space for both
players at a single state. For each $x, y \in \Delta^M$, we overload notation by
defining the expected reward and next state functions.
\begin{align*}
r(s, x, y) &= \sum_{j=1}^M \sum_{k=1}^M x(j) \cdot y(k) \cdot r(s, i, j), \\
p(s, s', x, y) & = \sum_{j=1}^M \sum_{k=1}^M x(j) \cdot y(k) \cdot p(s, s', i, j).
\end{align*}

Shapley showed that these games are \emph{determined}~\cite{S53}, meaning that there is a
unique vector $v \in \reals^N$ such that $v_s$ is the \emph{value} of the game
starting at state $s$: player one has a strategy to ensure that the expected
reward is at least $v(s)$, while player two has a strategy to ensure that the
expected reward is at most $v(s)$. Furthermore, Shapley showed that this value
vector is the unique solution of the following \emph{optimality
	equations}~\cite{S53}. For each state $s$ we have the equation
\begin{align}
\label{eq:shapley}
v(s) = \min_{x \in \Delta^M} \max_{y \in \Delta^M} \left( r(s, x, y) +
\lambda \cdot \sum_{s'=1}^N  p(s, s', x, y) \cdot v_{s'} \right).
\end{align}
In other words, $v_s$ must be the value of the one-shot zero-sum game at $s$,
where the payoffs of this zero-sum game are determined by the values of the
other states given by $v_{s'}$. 

\begin{theorem}
	\label{thm:shapley}
	Let $\Gamma$ be a Shapley game with $N \in O(\sqrt[6]{\log M})$, unbounded number of 
	actions per state, and rewards in $[-c,c]$ for every state-action combination, where $c$ is a 
	constant. Furthermore, let $s$ be the starting state of the game. Let $B \in
	\reals$ be a constant. In quasi-polynomial
	time we can approximately compute the value of $\Gamma$  starting from $s$, if the value of
	every state is less than or equal to $B$, or decide that at least one of these
	values is greater than or equal to $B$.
\end{theorem}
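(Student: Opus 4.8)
The plan is to write the Shapley optimality equations~\eqref{eq:shapley} as a constrained $\eps$-\etr formula with tensor constraints and then apply Theorem~\ref{thm:main} (through Corollary~\ref{cor:qptas}). The variables of the formula are a candidate value vector $v$ (living in $\reals^N$, padded to a common dimension so that it fits the tensor framework) and, for each state $s \le N$, a pair of strategies $x_s, y_s \in \Delta^M$ that will play the roles of optimal one-shot strategies for the minimiser and the maximiser at $s$. The convex domain is the product of these $2N$ simplices with the box $[-c/(1-\lambda),B]^N$ for $v$. Since every reward lies in $[-c,c]$ we always have $v^*_s \ge -c/(1-\lambda)$, so the true value vector $v^*$ lies in this box precisely when every state value is at most $B$; and with $c$, $B$ and $\lambda$ fixed its infinity-norm $\gamma$ is a constant (here one uses that $1/(1-\lambda)$ is a constant). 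The domain is the convex hull of $l \le 2^{N} M^{2N}$ vertices.

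For the formula $F$ itself, for each state $s$ let $Q_v(s,j,k) = r(s,j,k) + \lambda\sum_{s'=1}^{N} p(s,s',j,k)\,v_{s'}$ be the payoff matrix of the one-shot zero-sum game at $s$, viewed as a function of the candidate vector $v$. The quantity $v_s = \min_x\max_y(\dots)$ hides a universal quantifier, which I would eliminate using the fact that a bilinear form over a simplex attains its extrema at vertices: for every pure action $k$ impose $\sum_{j} x_s(j)\,Q_v(s,j,k) \le v_s$ (so that $\max_y Q_v(s,x_s,y) \le v_s$, hence $T(v)(s) := \min_x\max_y Q_v(s,\cdot,\cdot) \le v_s$ for the Shapley operator $T$), and symmetrically for every pure action $j$ impose $\sum_{k} y_s(k)\,Q_v(s,j,k) \ge v_s$ (so that $\min_x Q_v(s,x,y_s) \ge v_s$, hence $T(v)(s) \ge v_s$ by the minimax theorem). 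Thus these $2NM$ constraints together force $v_s = T(v)(s)$: so $(v^*, x^*_s, y^*_s)$ with one-shot-optimal $x^*_s, y^*_s$ satisfies $\exact(F)$, and by Shapley's uniqueness every exact solution has $v = v^*$. After moving $v_s$ to the other side, each constraint is a TMV constraint that is bilinear in one strategy variable and $v$ and linear in $v$ separately, so its degree is $d = 1$, its length $t$ is at most three, and the largest coefficient is $\alpha = \max(c,1)$. The number of vector variables is $n = 2N+1 \in O(\sqrt[6]{\log M})$ and the number of constraints is $m = 2NM$.

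With $\alpha, \gamma, d, t$ constant and $n$ polylogarithmic in $m$, Corollary~\ref{cor:qptas} applies: the value of $k$ from Theorem~\ref{thm:main} is $O(\log^2 M/\eps^5)$, the number of $k$-uniform points of the domain is $l^{O(k)} = M^{O((\log M)^{13/6}/\eps^5)}$, and each is checked against $F$ in polynomial time, so the enumeration runs in quasi-polynomial time. If it certifies $\exact(F)$ unsatisfiable, then by uniqueness $v^*$ lies outside the box, i.e.\ some state value exceeds $B$. Otherwise it outputs a $k$-uniform point whose $v$-component $\eps$-satisfies every constraint, and the vertex argument together with the minimax theorem then gives $\|T(v) - v\|_\infty \le \eps$; since $T$ is a $\lambda$-contraction in the $\|\cdot\|_\infty$ norm and $v^* = T(v^*)$, this yields $\|v - v^*\|_\infty \le \eps/(1-\lambda)$, so running the procedure with $\eps := (1-\lambda)\delta$ and returning $v_s$ approximates the value of the starting state within the desired accuracy $\delta$.

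The step I expect to be the real work is this last correctness argument rather than the encoding. One must argue both that replacing $\min_x\max_y$ by the two vertex-indexed families of inequalities is exact (which rests on von Neumann's minimax theorem) and, more importantly, that an $\eps$-feasible point of the \emph{relaxed} inequalities is genuinely close to $v^*$ --- this is where the contraction property of the Shapley operator enters and is not provided by the $\eps$-\etr machinery itself. The remaining parameter bookkeeping ($d = 1$, constant $t$, $n = O(\sqrt[6]{\log M})$, $m$ polynomial) and the verification that the quasi-polynomial enumeration does not blow up are routine.
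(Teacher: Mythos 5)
Your proposal follows the paper's high-level plan exactly: encode Shapley's optimality equations~\eqref{eq:shapley} as a constrained $\eps$-\etr formula with TMV constraints, apply Theorem~\ref{thm:main} (via Corollary~\ref{cor:qptas}) to enumerate $k$-uniform points, and convert the resulting approximate fixed point into an approximate value vector via the $\lambda$-contraction property of the Shapley operator. The genuine difference is in how $v(s) = \min_x\max_y(\cdots)$ is flattened into polynomial inequalities. The paper's stated constraints $\tmv(x_s,y_s,v) - \tmv(j,y_s,v) \le 0$ and $\tmv(x_s,k,v) - \tmv(x_s,y_s,v) \ge 0$ (one per pure action) only express that $(x_s,y_s)$ is a saddle point of the one-shot game parameterised by $v$; as written they never relate $v(s)$ to the payoff $\tmv_s(x_s,y_s,v)$ of that saddle point, so an implicit equality constraint is needed to actually pin $v$ down. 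Your inequalities $\sum_j x_s(j)Q_v(s,j,k) \le v_s$ (for all $k$) and $\sum_k y_s(k)Q_v(s,j,k) \ge v_s$ (for all $j$) sandwich $v_s$ between $\max_y Q_v(s,x_s,\cdot)$ and $\min_x Q_v(s,\cdot,y_s)$, so the minimax theorem forces $v_s = T(v)(s)$ directly and the encoding is self-contained. You are also more explicit than the paper about the bounding convex set for $v$ (the box $[-c/(1-\lambda),B]^N$, which is what actually makes $\gamma$ a constant) and about the $\eps \mapsto \eps/(1-\lambda)$ dilation the contraction introduces; the paper gestures at both but leaves them informal. The parameter accounting ($d=1$, $t\le 3$, $n=2N+1$, $m=2NM$, $\alpha,\gamma$ constant) and the resulting running time $M^{O((\log M)^{13/6}/\eps^5)}$ are consistent with what Theorem~\ref{thm:main} yields, so your argument is sound and, if anything, slightly tighter than the paper's.
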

\begin{proof}
	Let $v= (v(1),v(2),\ldots, v(N))$, and for every state $s$ let $x_s$ and $y_s$ 
	denote the strategy player one and player two choose at state $s$ respectively.
	Observe that $r(s,x_s,y_s)$ is an STM polynomial with variables $x$ and $y$ of 
	the form
	\begin{align*}
	\stm(A_{s1}, x_s,y_s) = \sum_{j=1}^M \sum_{k=1}^M  x_s(j) \cdot y_s(k) \cdot a_{s1}(j,k)
	\end{align*}
	where $a_{s1}(i,j,k) = r(s, j, k)$.
	
	Observe also that $\lambda \cdot \sum_{s'=1}^N  p(s, s', x_s, y_s) \cdot v_{s'}$ can 
	be written as an STM polynomial with variables $x,y$ and $v$ of the form 
	\begin{align*}
	\stm(A_{s2}, x,y,v) = \sum_{j=1}^M \sum_{k=1}^M \sum_{l=1}^N x_s(j) \cdot y_s(k) \cdot v(l) \cdot a_{s2}(j,k,l)
	\end{align*}
	where $a_{s2}(i,j,k) = \lambda \cdot p(s,l,j,k)$.
	
	Let us define $\tmv_s(x_s,y_s,v) = \stm(A_{s1}, x_s,y_s) +  \stm(A_{s2}, x_s,y_s,v)$;
	$\tmv_s(x_s,y_s,v)$ has length 2 and degree 1.
	
	Note that we can replace equation~\eqref{eq:shapley} with the following $2\cdot M$ 
	TMV polynomial constraints
	\begin{align*}
	\tmv(x_s,y_s,v) -\tmv(j,y_s,v) \leq 0 &\qquad \text{for every action $j \leq M$ of player one}\\
	\tmv(x_s,k,v) -\tmv(x_s,y_s,v) \geq 0 &\qquad \text{for every action $k \leq M$ of player two}.
	\end{align*}
	
	So, to approximate $v(s)$ it suffices to solve the \eetr problem defined by the 
	$2 \cdot M \cdot N$ constraints defined as above for every state $s \leq N$. 
	Observe, the \eetr problem has: $2N+1$ variables ($x_1$ through $x_N$, $y_1$ through
	$y_N$, and $v$); $2 \cdot M \cdot N$ TMV constraints; 
	$\gamma = \max\big\{1, \max_s v(s)\big\}$; 
	$\alpha = \max\big\{c, \lambda\cdot \max_{s,s',j,k}p(s,s',j,k)\big\} = \max\{c ,1\}$,
	since $\lambda < 1$ and $\max_{s,s',j,k}p(s,s',j,k)<1$. 
	So, if $N \in O( \sqrt[6]{\log M} )$, $\max_s v(s)$ is constant, and $c$ is a constant, we can use
	Theorem~\ref{thm:main} and derive a QPTAS for~\eqref{eq:shapley}.
	
	Finally, we note that an approximate solution to~\eqref{eq:shapley} gives an
	approximation of the value vector itself. This is because Shapley has shown
	that, when $v$ is treated as a variable, the optimality equation given
	in~\eqref{eq:shapley} is a \emph{contraction map}. The value vector is a fixed
	point of this contraction map, and the uniqueness of the
	value vector is guaranteed by Banach's fixed point theorem. Our algorithm
	produces an approximate fixed point of the optimality equations. It is easy to
	show, using the contraction map property, that an approximate fixed point must
	be close to an exact fixed point.
\end{proof}

\subsection{Approximate consensus halving}\label{sec:QPTAS}

In this section we show that an approximate solution to the consensus
halving problem can be found in quasi-polynomial time when each agent's 
valuation function is a single polynomial of constant or even polylogarithmic degree.
We will do so by formulating the problem as a constrained \eetr instance, and then applying Theorem \ref{thm:main}. 

This result first appeared in \cite{I-DFMS19,DFMS19} and implies that these instances can be solved approximately using a
polylogarithmic number of cuts. We note that this is one of the most general classes
of instances for which we could hope to prove such a result: any instance in
which $n$ agents desire completely disjoint portions of the object can only be
solved by an $n$-cut, and piecewise linear functions are capable of producing such a situation.
So in a sense, we are exploiting the fact that this situation cannot arise when
the agents have non-piecewise polynomial valuation functions. 




\begin{lemma}\label{lem:qptas}
	For every \CH instance with $n$ agents, and every $\eps > 0$, if each agent's valuation function $F_i$
	is a single polynomial of degree at most $O(poly \log n)$, then 
	there exists a $k$-cut, where $k:=O(poly \log n)/\eps^5$, and pieces $A_+$ and $A_-$ such that:
	\begin{itemize}
		\item every cut point is a multiple of $1/k = \frac{\eps^5}{O(poly \log n)}$;
		\item $|F_i(A_+) -F_i(A_-)| \leq \eps$, for every agent $i$.
	\end{itemize} 
\end{lemma}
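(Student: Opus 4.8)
The plan is to write the \CH instance as a constrained \eetr instance that has only \emph{one} vector variable and whose parameters are all polylogarithmic in $n$, and then invoke Theorem~\ref{thm:main}. The conceptual starting point is that, although consensus halving in general requires $n$ cuts, here each density $f_i$ is a single polynomial of degree at most $d=O(\polylog n)$, so $f_1,\dots,f_n$ all lie in the $(d+1)$-dimensional space spanned by $1,x,\dots,x^d$. Applying the Hobby--Rice theorem (a consequence of Borsuk--Ulam) to these $d+1$ basis monomials yields a partition of $[0,1]$ into at most $r:=d+1$ alternately-labelled intervals whose signed integral against each $x^j$ is zero; by linearity this same partition exactly halves every $F_i$. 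Thus the underlying (non-relaxed) problem already has an exact solution using only $r=O(\polylog n)$ cuts, which is where we genuinely exploit that the valuations are non-piecewise polynomials, as the remark preceding the lemma anticipates.

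I would then encode candidate $r$-cut solutions by a single vector variable $\tau=(\tau_1,\dots,\tau_r)$ constrained to the \emph{order simplex} $\conv(v_0,v_1,\dots,v_r)\subseteq\reals^r$, where $v_i$ has its last $i$ coordinates equal to $1$ and the rest $0$; this polytope is exactly $\{\,0\le\tau_1\le\cdots\le\tau_r\le1\,\}$, so the ordering of the cuts is baked into the un-relaxed convex constraint and needs no $\eps$-slack. With the alternating labelling and $g_i$ the antiderivative of $f_i$ (a degree-$(d+1)$ polynomial whose coefficients are bounded by $\max_k|c_{ik}|$), the quantity $F_i(A_+)-F_i(A_-)$ equals a constant plus $\sum_{j=1}^r \pm 2\,g_i(\tau_j)$, which over the single variable $\tau$ is a TMV polynomial of length $d+1$ and degree $d+1$ whose STM tensors are diagonal with entries $\pm 2c_{ik}/(k+1)$. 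Hence $\alpha=O(1)$ (for normalised valuations), $\gamma=\max_i\|v_i\|_\infty=1$, and the formula $F$ consists of the $m=2n$ constraints $F_i(A_+)-F_i(A_-)\le_\eps 0$ and $F_i(A_+)-F_i(A_-)\ge_\eps 0$; $\exact(F)$ is satisfied by the point $\tau^*$ coming from the Hobby--Rice partition (padding with coincident cuts at $1$ if that partition uses fewer than $r$ of them, which leaves $A_+,A_-$ unchanged).

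Now apply Theorem~\ref{thm:main}: with one vector variable, $d,t=O(\polylog n)$, $\alpha,\gamma=O(1)$, and $m=2n$, the prescribed grid size is $k=O(\polylog n)/\eps^5$, and the theorem produces a $k$-uniform $\tau'$ in the order simplex satisfying $F$. By the stronger guarantee discussed in Section~\ref{sec:conseq_main} — that the $k$-uniform point approximates the \emph{value} of each polynomial to within $\eps$, not merely the weakened inequality — we get $|\,(F_i(A_+)-F_i(A_-))(\tau')-(F_i(A_+)-F_i(A_-))(\tau^*)\,|\le\eps$, and since the second evaluation is $0$ this is exactly $|F_i(A_+)-F_i(A_-)|\le\eps$ for every $i$. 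Finally, every coordinate of a $k$-uniform point of the order simplex is a partial sum of terms $\beta_i/k$ with $\sum_i\beta_i=k$, hence an integer multiple of $1/k$; so the $r\le k$ cut points are multiples of $1/k$, and padding with $k-r$ dummy cuts at the endpoint $1$ (creating empty intervals, hence changing nothing) turns this into a bona fide $k$-cut with the required properties.

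The step I expect to be the main obstacle is ensuring the encoding keeps \emph{all} of the parameters — number of vector variables, degree, length, $\alpha$, $\gamma$ — polylogarithmic in $n$ simultaneously. The representation of the cut positions is the delicate point: encoding each cut as a two-point distribution $(1-\tau_j,\tau_j)$ would blow the STM coefficients up to $2^{\Theta(d)}$ when expanding $x^{d+1}$ in the bases $(1-t)^a t^b$, so one must keep the cut positions as plain coordinates, for which the antiderivative's coefficients pass through unchanged; once the encoding is right, the invocation of Theorem~\ref{thm:main} and the rounding/padding arguments are routine.
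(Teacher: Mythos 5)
Your proof is correct, and its core mechanism — encode the cut as a single vector variable ranging over a simplex-like polytope $\conv(v_0,\dots,v_r)$ in which the ordering of the cut points is built into the convex constraint, observe that the resulting TMV constraints have one vector variable, degree $d$, length $d+1$, $\gamma=1$, $\alpha=O(1)$, and then apply Theorem~\ref{thm:main} — is the same as the paper's. Where you diverge is at the very first step. The paper does not pre-reduce the number of cut points: it works with all $n$ cut positions, takes the exact CH solution (guaranteed by Borsuk--Ulam/Simmons--Su with $n$ cuts) as the point to sample from, and deduces that only $O(\polylog n)/\eps^5$ \emph{distinct} cut points survive a posteriori, simply because a $k$-uniform point in that simplex forces every partial sum $t_i=\sum_{j\le i}\beta_j/k$ to be a multiple of $1/k$. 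You instead invoke Hobby--Rice on the $d+1$ monomials $1,t,\dots,t^d$ to show a priori that an exact solution with only $r=d+1=O(\polylog n)$ cuts exists, work over the $r$-dimensional order simplex, and then pad with dummy cuts at $1$. Both are valid; your route makes the ``polylogarithmic number of cuts'' phenomenon explicit and conceptual (and cleanly justifies the remark the paper states without proof just before the lemma), at the cost of an extra tool, whereas the paper's route is tool-minimal and extracts the cut-count bound from the rounding itself. Your Hobby--Rice step is also not actually needed to make Theorem~\ref{thm:main} apply, since the theorem's $k$ depends only on the number of vector variables (one in either encoding), not on the dimension of the vector; and note one small bookkeeping nit: you do not need the internal ``approximate the value'' guarantee from Section~\ref{sec:conseq_main} — satisfying both $\le_\eps 0$ and $\ge_\eps 0$ already gives $|F_i(A_+)-F_i(A_-)|\le\eps$ directly from the definition of the relaxed operators.
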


\begin{proof}
	Since each agent $i$ has a polynomial valuation function, there is a $d \in O(\log
	n)$ and constants $a_0, a_1, \ldots, a_l$ such that each function $F_i$ can be
	written as $F_i(t) = \sum_{j=0}^d a_j \cdot t^j$. 
	
	To prove the theorem, we will formulate the problem as a constrained \eetr instance, and apply Theorem \ref{thm:main}, which proves the claim.
	We first write a simple \etr formula for consensus halving with polynomial
	valuation functions. If a consensus halving instance has a solution, then it
	also has one in which the cuts are \emph{strictly alternating}, meaning that
	\begin{align*}
	F_i(A_+) &= \sum_{j=1}^{\lfloor n/2 \rfloor} \big(F_i(t_{2j}) -
	F_i(t_{2j-1})\big),\\
	F_i(A_-) &= \sum_{j=1}^{\lceil n/2 \rceil} \big(F_i(t_{2j-1}) -
	F_i(t_{2j-2})\big),
	\end{align*}
	where the cut is the tuple $(t_1, t_2, \dots, t_n)$, with $0 \leq t_1 \leq \dots \leq t_n \leq 1$ and $t_{0} := 0$, $t_{n+1}:=1$.
	
	In this encoding, we have no need to encode which set a particular cut belongs
	to, and so we can encode a $n$-cut as an element of the $n$-simplex
	$x:=(x_1, x_2, \dots, x_{n+1}) \in \Delta^{n+1}$, where $x_{i}:= t_{i} - t_{i-1}$.
	From the latter, it is easy to see that 
	\begin{align}\label{eq:t_i}
	t_{i}:=\sum_{j=1}^{i} x_{j}.
	\end{align}
	For $j \in \{0,1, \dots, n \}$, let us denote by $1^j$ and $0^j$ a $j$-tuple of $1$'s and $0$'s respectively. Let us also define the $n$-dimensional vector $v_{j} := (0^{j}, 1^{n-j})$. Now observe that any $n$-cut $t := (t_1, t_2, \dots, t_n)$ can be represented by a $n$-dimensional point which is in fact a convex combination of the $n+1$ vectors $v_{j}$, $j \in \{ 0, 1, \dots, n \}$. In particular, from \eqref{eq:t_i} it is easy to see that 
	\begin{align*}
	t := (t_1, t_2, \dots, t_n) = \sum_{j=1}^{n+1} x_{j} \cdot v_{j-1}.
	\end{align*}
	
	Hence, we can encode the problem as an \etr formula
	\begin{equation*}
	\exists t \cdot \left( \bigwedge\limits_{i=1}^n F_i(A_+) = F_i(A_-) \right)
	\land t \in C,
	\end{equation*}
	where $C$ is the convex hull of the vectors $v_{0}, v_{1}, \dots, v_{n}$.
	This formula has $n$ constraints, one for each agent, and a single constraint
	bounding the variables in the convex set $C$ which can be expressed by $n+1$ vectors, namely $v_{j}$, $j \in \{ 0,1, \dots, n \}$.

	The main theorem of \cite{W-DFMS18} allows us to leave the constraint
	$t \in C$ unchanged, but insists that we weaken the others.
	Specifically each constraint is weakened so that 
	only $F_i(A_+)- F_i(A_-)\leq \eps$ and $F_i(A_+)- F_i(A_-)\geq -\eps$ are
	enforced, which implies that 
	$|F_i(A_+)- F_i(A_-)| \leq \eps$. This is sufficient to encode an approximate
	solution to the problem.
	
	The constructed \eetr instance has one vector-variable $t \in C$ 
	and $2n$ constraints. Let us now study one of the constraints of the \eetr instance.
	\begin{align*}
	\sum_{j=1}^{\lfloor n/2 \rfloor} \big(F_i(t_{2j}) - F_i(t_{2j-1})\big)- \sum_{j=1}^{\lceil n/2 \rceil} \big(F_i(t_{2j-1}) - F_i(t_{2j-2})\big) & \leq \eps.
	\end{align*}
	Using the representation of $F_i$, we can write down a constraint as 
	$\sum_{k=0}^d a_k\cdot h_k(t_1, t_2, \ldots, t_n) \leq \eps$, where 
	$h_{k}(t_1, t_2, \ldots, t_n)$ is a sum of monomials, each one of degree $d$. $F_{i}$ depends on $t_{0}$ and $t_{n+1}$ as well, but recall that these are $0$ and $1$ respectively. 
	
	The term $a_k\cdot h_k(t_1, t_2, \ldots, t_n)$ is a simple tensor multivariate polynomial
	with one variable of degree $k$, which we will denote by $STM(H_k,t^k)$. Under this notation
	$H_k$ is a $k$-dimensional tensor where vector $t$ is applied $k$ times. Hence,
	every constraint is a sum of $d+1$ simple tensor multivariate polynomials, i.e. a TMV polynomial of maximum
	degree $d$ constructed by $d+1$ STM polynomials. Furthermore, $||v_{j}||_{\infty} \leq 1$ for all $j \in \{ 0, 1, \dots, n \}$, and for every constraint, the maximum absolute coefficient is
	constant by definition, and the degree $d$ is $O(poly \log n)$. Hence, we can apply Theorem \ref{thm:main} and get
	the claimed result.   
\end{proof}

As a consequence, we can perform a brute force search over all possible $k$-cuts to find an approximate solution, which can be carried out in 
$n^{O(poly \log n/\eps^5)}$ time.

\begin{theorem}
	\CH admits a QPTAS when the valuation function of every agent 
	is a single polynomial of degree $O(poly \log n)$.
\end{theorem}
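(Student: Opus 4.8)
The plan is to obtain the theorem as the brute-force enumeration corollary of Lemma~\ref{lem:qptas}, in exactly the same spirit as Corollary~\ref{cor:qptas}. The key point that makes the lemma directly applicable is that every \CH instance is guaranteed to have an exact solution (a classical consequence of the Borsuk--Ulam theorem), so the hypothesis of Theorem~\ref{thm:main} used inside the proof of Lemma~\ref{lem:qptas} is always satisfied; consequently Lemma~\ref{lem:qptas} holds unconditionally. Thus, for every $\eps>0$ and every agent valuation profile of polynomials of degree $O(\polylog n)$, there is a $k$-cut with $k = O(\polylog n)/\eps^5$ all of whose cut points are multiples of $1/k$ and for which $|F_i(A_+)-F_i(A_-)|\le\eps$ for every agent $i$.

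Given this, the algorithm I would describe is: enumerate every candidate cut whose $n$ cut points are multiples of $1/k$ in $[0,1]$ (equivalently, using the simplex encoding of Lemma~\ref{lem:qptas}, every $k$-uniform vector of the $(n+1)$-simplex $\conv(v_0,\dots,v_n)$), and for each one compute $F_i(A_+)$ and $F_i(A_-)$ for all agents and check whether $|F_i(A_+)-F_i(A_-)|\le\eps$ holds simultaneously; output the first candidate that passes. The number of $k$-uniform vectors of $\conv(v_0,\dots,v_n)$ is $\binom{n+k}{k}=(n+1)^{O(k)} = n^{O(\polylog n/\eps^5)}$, and each candidate can be checked in time polynomial in the input size, since $F_i(A_+)$ and $F_i(A_-)$ are sums of at most $n$ evaluations of a degree-$O(\polylog n)$ polynomial. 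Hence the whole procedure runs in time $n^{O(\polylog n/\eps^5)}$. Correctness is immediate from Lemma~\ref{lem:qptas}: at least one enumerated candidate $\eps$-satisfies all agents, so the search cannot fail (and, since an exact solution always exists, no ``no-solution'' branch is needed).

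Finally I would check that this meets the QPTAS definition with respect to the true input size $N$: an instance of this form is specified by $n$ polynomials each of degree $O(\polylog n)$, so $N$ is polynomial in $n$ up to coefficient bit-length, and $n^{O(\polylog n/\eps^5)} = N^{O(\polylog N/\eps^5)}$ is quasi-polynomial for every fixed $\eps$. I do not expect any real obstacle here: all the substantive work -- the sampling argument, the degree-$d$ reduction, and verifying that the strictly-alternating exact solution is encodable in $\conv(v_0,\dots,v_n)$ so that Theorem~\ref{thm:main} applies -- is already carried out in Lemma~\ref{lem:qptas}; the only thing requiring care in the present proof is the bookkeeping that the grid has only $n^{O(\polylog n/\eps^5)}$ points and that each per-point check is polynomial-time.
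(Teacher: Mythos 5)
Your proposal is correct and matches the paper's approach exactly: the paper likewise derives this theorem as the immediate brute-force corollary of Lemma~\ref{lem:qptas}, enumerating the $n^{O(\polylog n/\eps^5)}$ many $k$-uniform cut points and checking each in polynomial time. Your explicit note that Borsuk--Ulam guarantees an exact \CH solution (so the hypothesis of Theorem~\ref{thm:main} inside Lemma~\ref{lem:qptas} is always satisfied) is a worthwhile detail that the paper leaves implicit.
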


\subsection{Optimization problems}

Our framework can provide approximation schemes for optimization problems with one 
vector variable $x \in \reals^p$ with polynomial constraints over bounded convex sets.
Formally,
\begin{align*}
\max~ & ~h(x)\\
s.t.~ &~ h_1(x) \geq 0, \ldots, h_m(x) \geq 0  \\
&~ x \in \conv(c_1, \ldots, c_l)
\end{align*}
where $h(x), h_1(x), \ldots, h_m(x)$ are polynomials with respect to vector $x$; 
for example $h(x) = x^TAx$, where $A$ is an $p \times p$ matrix, 
subject to $h_1(x) = x^Tx - \frac{1}{10} \geq 0$ and $x \in \Delta^p$.  
We will call the polynomials $h_i$ {\em solution-constraints}.
Optimization problems of this kind  received a lot of attention over the years~\cite{K08,de2006ptas,de2015alternative,F04}. 

For optimization problems, we sample from the solution that achieves the
maximum when we apply Theorem~\ref{thm:main}, in order to prove that there is a
$k$-uniform solution that is close to the maximum. Our algorithm
enumerates all $k$-uniform profiles, and outputs the
one that maximizes the objective function. Using this technique,
Theorem \ref{thm:main} gives the following results.
\begin{enumerate}
	\item 
	\label{opt:1}
	There is a PTAS if $h(x)$ is a STM polynomial of maximum degree independent of $p$, the number of
	solution-constraints is independent of $p$, and $l = \poly(p)$.	
	\item 
	\label{opt:2}
	There is QPTAS if $h(x)$ is a STM polynomial of maximum degree up to $\polylog p$, the 
	number of solution-constraints is $\poly(p)$, and $l = \poly(p)$.	
\end{enumerate}

To the best of our knowledge, the second result is new.
The first result was already known, however it was proven using
completely different techniques: in~\cite{BK02} it was proven for the
special case of degree two, in~\cite{F04} it was extended to any fixed degree,
and alternative proofs of the fixed degree case were also given in
\cite{de2006ptas,de2015alternative}. We highlight that in all of the
aforementioned results solution constraints were not allowed. Note that
unless \NP= \texttt{ZPP} there is no FPTAS for quadratic programming even when the
variables are constrained in the simplex~\cite{K08}. Hence, our results can be
seen as a partial answer to the important question posed in \cite{K08}: \textit{What is a complete classification of functions that allow a PTAS?}

Furthermore, as shown in Theorem \ref{thm:multi-obj} this technique yields a generalized algorithm for multi-objective optimization problems which, to the best of our knowledge, is a completely new result.

\subsection{Tensor problems}

Our framework provides quasi-polynomial time algorithms for deciding the
existence of approximate eigenvalues and approximate eigenvectors of tensors in
$\reals^{p \times p \times p}$, where the elements are bounded by a constant,
where the solutions are required to be in a bounded convex set.  In~\cite{HL13} it is
proven that there is no PTAS for these problems when the domain is unrestricted.
To the best of our knowledge this is the first positive result for the problem
even in this, restricted, setting.  

\begin{definition}
	\label{def:eigen}
	The nonzero vector $x \in \reals^p$ is an {\em eigenvector} of tensor
	$A \in \reals^{p \times p \times p}$ if there exists an {\em eigenvalue} 
	$\lambda \in \reals$ such that for every $k \in [p]$ it holds that
	\begin{align}
	\label{eq:eigen}
	\sum_i^n \sum_j^n a(i,j,k) \cdot x(i) \cdot x(j) = \lambda \cdot x(k).
	\end{align}
\end{definition}

\begin{theorem}
	\label{thm:eigen}
	Let $A$ be an $\reals^{p \times p \times p}$ tensor with entries in $[-c,c]$, where $c$ is a 
	constant. Furthermore, let $B \in \reals$ be a constant and let $\ycal$ be a bounded convex 
	set where $\|\ycal\|_\infty$ is a constant. In a quasi-polynomial time we can compute an 
	eigenvalue-eigenvector pair $(\lambda, x)$ that approximately satisfy~\eqref{eq:eigen} 
	such that $\lambda \leq B$ and $x \in \ycal$, or decide that no such pair exists.
\end{theorem}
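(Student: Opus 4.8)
The plan is to express Definition~\ref{def:eigen} as a constrained \eetr instance with tensor constraints and then invoke Theorem~\ref{thm:main}. Fix the tensor $A$ (entries in $[-c,c]$), the constant $B$, and write $\ycal=\conv(c_1,\dots,c_l)$ with $\gamma_0:=\|\ycal\|_\infty$. I would use two vector variables: the candidate eigenvector $x$, constrained to lie in $\ycal$, and a one-dimensional variable $\ell$ carrying the eigenvalue (mixed variable dimensions are fine, exactly as in the proof of Theorem~\ref{thm:shapley}, where $v$ and the strategies have different dimensions; if one prefers, pad $\ell$ to $\reals^p$ on its first coordinate), constrained to the bounded interval $[-B_0,B]$ for a suitable constant $B_0$, so that $\lambda:=\ell(1)$ satisfies $\lambda\le B$ automatically as part of the \emph{un-weakened} convex-hull constraint -- the same device used for $\max_s v(s)\le B$ in Theorem~\ref{thm:shapley}. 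For each $k\in[p]$ I would encode~\eqref{eq:eigen} as the polynomial identity $\sum_{i,j}a(i,j,k)\,x(i)\,x(j)-\lambda\,x(k)=0$, split into its $\le_\eps 0$ and $\ge_\eps 0$ halves. Its left-hand side is a TMV polynomial of length $t=2$: the quadratic piece $\sum_{i,j}a(i,j,k)x(i)x(j)$ is an STM polynomial in the single variable $x$ of degree $2$, and $\lambda x(k)$ is an STM polynomial in the two variables $\ell,x$ of degree $1$ whose coefficient tensor has a single nonzero entry, equal to $-1$; so the maximum degree is $d=2$.

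With this encoding the parameters of Theorem~\ref{thm:main} are all constant: $n=2$, $d=2$, $t=2$, $\alpha=\max(c,1)=O(1)$, $\gamma=\max(\gamma_0,B_0,|B|)=O(1)$, and $m=2p$ constraints. Theorem~\ref{thm:main} then yields $k=O(\log(p)/\eps^5)$: if $\exact(F)$ has a solution -- i.e. if there is an exact eigenpair $(\lambda,x)$ with $x\in\ycal$ and $-B_0\le\lambda\le B$ -- then it has a $k$-uniform one, and for it the guarantee $|p(x')-p(x^*)|\le\eps$ of Theorem~\ref{thm:main}, applied to the polynomial in each of the $2p$ constraints (which vanishes at the exact solution), gives $|\sum_{i,j}a(i,j,k)x'(i)x'(j)-\lambda'x'(k)|\le\eps$ for every $k$, precisely the $\eps$-approximate form of~\eqref{eq:eigen}. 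The algorithm then enumerates all $k$-uniform points of the joint convex hull (whose vertex set has size $O(l)$), of which there are $l^{O(k)}=l^{O(\log p/\eps^5)}$; when $\ycal$ is given explicitly by $l=\poly(p)$ vertices this is $N^{O(\log N/\eps^5)}$ in the input size $N$, i.e. quasi-polynomial. For each enumerated pair it checks the $2p$ weakened constraints in polynomial time (the bound $\lambda\le B$ is built in), outputs one that passes, or, if none passes, reports by the contrapositive of Theorem~\ref{thm:main} that no exact eigenpair with $x\in\ycal$ and $\lambda\le B$ exists. (When $\ycal$ has exponentially many vertices, the same formulation together with Corollary~\ref{cor:nptas} instead puts the approximate problem in \NP.)

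The step I expect to be the main obstacle is the clean treatment of the eigenvalue. Unlike the eigenvector, $\lambda$ is a scalar and is not a priori bounded, so the key observation is that it can be carried by a one-parameter vector variable confined to a segment whose endpoints are \emph{constants}; this keeps $\gamma$ constant in Theorem~\ref{thm:main}, which is what makes the enumeration quasi-polynomial -- a $p$-dependent a priori bound on $\lambda$ (such as the $\Theta(p)$ bound obtained from $\lambda x(k)=\sum_{i,j}a(i,j,k)x(i)x(j)$ on a simplex) would blow $k$ up to $\poly(p)$ and make the search exponential. A minor additional point is the ``nonzero'' requirement on eigenvectors in Definition~\ref{def:eigen}, which I would accommodate by taking $\ycal$ to be a normalized convex body bounded away from the origin (for instance a simplex), as the definition implicitly intends; everything else is a routine reading-off of parameters and an application of Theorem~\ref{thm:main}.
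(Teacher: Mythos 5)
Your proposal is correct and takes essentially the same route as the paper: express each coordinate of the eigenvalue equation as a length-$2$, degree-$2$ TMV constraint over two vector variables $x$ and $\ell$ (the quadratic STM piece $\stm(A_1,x^2)$ with $a_1(i,j)=a(i,j,k)$, plus the bilinear STM piece $\stm(A_2,x,\ell)$ with a single nonzero tensor entry), then apply Theorem~\ref{thm:main} with $n=2$, $d=2$, $t=2$, $\alpha=c$, $\gamma$ constant, and $m=\Theta(p)$ to get a quasi-polynomial enumeration over $k$-uniform points. The only differences are bookkeeping: the paper counts $m=p$ (treating each equality as one constraint) where you count $2p$, and you spell out the constant-size interval for $\ell$ more carefully -- both of which are immaterial to the conclusion.
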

\begin{proof}
	Observe that $\sum_i^n \sum_j^n a(i,j,k) \cdot x(i) \cdot x(j)$ can be written as an STM
	polynomial $\stm(A_1,x^2)$ where $a_1(i,j) = a(i,j,k)$. Furthermore, let $\ell$ be a 
	$p$-dimensional vector. Then,  $\lambda \cdot x(k)$ can be written as an STM polynomial 
	$\stm(A_2,x,\ell)$, where $a_2(k,1) = 1$ and zero otherwise.
	
	So, Equation \eqref{eq:eigen} can be written as an TMV polynomial constraint of degree 2 and 
	length 2, with two vector variables, $x$ and $\ell$. So, the problem of computing an 
	eigenvalue-eigenvector pair that approximately satisfy \eqref{eq:eigen} can be written as an 
	\eetr problem with $p$ TMV polynomial constraints of degree 2 and length 2 and two vector
	variables. Hence, we can use Theorem~\ref{thm:main} with $\gamma=\|\ycal\|_\infty$  which 
	is constant, $\alpha = c$, $n=2$, $t=2$, $d=2$, and $m = p$ to find a solution if exists, 
	or decide that no such solution exists.
\end{proof}

\subsection{Computational geometry}

Finally, we note that our theorem can be applied to problems in computational
geometry, although the results are not as general as one may hope. Many problems
in this field are known to be \etr-complete, including, for example, the Steinitz
problem for 4-polytopes, inscribed polytopes and Delaunay triangulations,
polyhedral complexes, segment intersection graphs, disk intersection graphs, dot
product graphs, linkages, unit distance graphs, point visibility graphs,
rectilinear crossing number, and simultaneous graph embeddings. We refer the
reader to the survey of Cardinal \cite{C15} for further details.

All of these problems can be formulated in $\eps$-\etr, and indeed our theorem
does give results for these problems. However, our requirement that the bounding
convex set be given explicitly limits their applicability. Most computational
geometry problems are naturally constrained by  a cube, so while 
Corollary~\ref{cor:nptas} does give \NP algorithms, we do not get QPTASs unless we further
restrict the convex set. Here we formulate QPTASs for
the segment intersection graph and the unit disk intersection graph
problems when the solutions are restricted to lie in a simplex. While it is not
clear that either problem has natural applications that are restricted in this
way, we do think that future work may be able to derive sampling theorems that
are more tailored towards the computational geometry setting.

\subsubsection{Segment intersection graphs}

\paragraph*{Definitions} Let $G$ be an undirected graph with vertex set $\{v_1, v_2, \dots, v_n \}$. We say that $G$ is a \textit{segment graph} if there are straight segments $s_1, s_2, \dots, s_n$ in the plane such that, for every $i,j, 1 \leq i < j \leq n$, the segments $s_i$ and $s_j$ have a common point if and only if $\{v_i, v_j\} \in E(G)$.
%

By a suitable rotation of the co-ordinate system we can achieve that none of the segments is vertical. Then the segment $s_i$ representing vertex $v_i$ can be algebraically described as the set $\{(x,y) \in \reals^2 : y = a_{i} x + b_{i} , c_{i} \leq x \leq d_{i} \}$ for some real numbers $a_{i}, b_{i}, c_{i}, d_{i}$. We say that $G$ is a \textit{simplex K segment graph} if the real numbers $a_{i}, b_{i}, c_{i}, d_{i}$, $i=1,2, \dots n$ are under the constraints 
\begin{align*}
& a_{i}, b_{i}, c_{i}, d_{i} \geq 0, \text{ for every } i=1,2, \dots n, \text{ and} \\
& \sum_{i=1}^{n} (a_{i} + b_{i} + c_{i} + d_{i}) = K, \quad \text{where } K > 0 \text{ is a given constant.}
\end{align*}
We let SIM-K-SEG denote the class of all simplex $K$ segment graphs with parameter $K > 0$.

The problem $\eps$-RECOG(SIM-K-SEG) is defined as follows. Given an abstract undirected graph $G$, does it belong with tolerance $\eps$ to SIM-K-SEG?

\paragraph*{Formulation of $\eps$-RECOG(SIM-K-SEG)}
We first give a description for the problem with $\eps = 0$ and then we generalize for arbitrary $\eps \geq 0$. The formulation is taken from \cite{Matousek14}.

Letting $l_{i}$ be the line containing $s_{i}$, we note that $s_{i} \cap s_{j} \neq \emptyset$ if $l_{i}$ and $l_{j}$ intersect in a single point whose $x$-coordinate lies in both the intervals $[c_{i},d_{i}]$ and $[c_{j},d_{j}]$. It is easy to see that the $x$-coordinate equals $\frac{b_{j} - b_{i}}{a_{i} - a_{j}}$. 

Now we turn to the general case where $\eps \geq 0$. Let us introduce variables $A_{i}, B_{i}, C_{i}, D_{i}$ representing the unknown quantities $a_{i}, b_{i}, c_{i}, d_{i}$, $i = 1,2, \dots, n$. By the problem's definition we require the vector $(A_{1}, B_{1}, C_{1}, D_{1}, \dots , A_{n}, B_{n}, C_{n}, D_{n})$ to be in the ($4n-1$)-simplex with parameter $K$.
Then $s_{i} \cap s_{j} \neq \emptyset$ can be expressed by the following predicate:
\begin{align*}
\text{INTS}(A_{i}, B_{i}, C_{i}, D_{i},& A_{j}, B_{j}, C_{j}, D_{j}) = \\
( & A_{i} \geps A_{j} \land C_{i}(A_{i} - A_{j}) \leeps B_{j} - B_{i} \leeps D_{i}(A_{i} - A_{j}) \\
& \land C_{j}(A_{i} - A_{j}) \leeps B_{j} - B_{i} \leeps D_{j}(A_{i} - A_{j}) ) \\
\lor ( & A_{i} \leps A_{j} \land C_{i}(A_{i} - A_{j}) \geeps B_{j} - B_{i} \geeps D_{i}(A_{i} - A_{j}) \\
& \land C_{j}(A_{i} - A_{j}) \geeps B_{j} - B_{i} \geeps D_{j}(A_{i} - A_{j}) )
\end{align*}
(this is only correct if we ``globally'' assume that $C_{i} \leeps D_{i}$ for all $i$). The existence of a SEG-representation of $G$ can then be expressed by the formula 
\begin{align*}
( \exists A_{1}B_{1}C_{1}&D_{1} \dots A_{n}B_{n}C_{n}D_{n}K) \left( \bigwedge_{i=1}^{n} C_i \leeps D_i \right) \\
& \wedge \left( \bigwedge_{\{i,j\} \in E} \text{INTS}(A_{i}, B_{i}, C_{i}, D_{i}, A_{j}, B_{j}, C_{j}, D_{j}) \right) \\
& \wedge \left( \bigwedge_{\{i,j\} \notin E} \lnot \text{INTS}(A_{i}, B_{i}, C_{i}, D_{i}, A_{j}, B_{j}, C_{j}, D_{j}) \right) 
\end{align*}
%

\begin{theorem}
	There is an algorithm
	that runs in time $n^{O(K^2 \cdot \log n / \eps^2)}$ and either finds
	a vector $(A_{1},B_{1},C_{1},D_{1}, \dots ,A_{n},B_{n},C_{n},D_{n})$ that is a solution to $\eps$-RECOG(SIM-K-SEG), or determines that there is no solution to
	$0$-RECOG(SIM-K-SEG). 
\end{theorem}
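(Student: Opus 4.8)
The plan is to recognise the exact problem $0$-RECOG(SIM-K-SEG) as a constrained \etr instance with a very favourable parameter profile, apply Theorem~\ref{thm:kunif} to it, and then run the brute-force enumeration over $k$-uniform vectors described in Section~\ref{sec:conseq_main}.

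First I would take the $\eps$-\etr formula for $\eps$-RECOG(SIM-K-SEG) displayed above, whose exact counterpart $\exact(F)$ is precisely the formula for $0$-RECOG(SIM-K-SEG). Its $q = 4n$ scalar variables $(A_1,B_1,C_1,D_1,\dots,A_n,B_n,C_n,D_n)$ are constrained to the $(4n-1)$-simplex scaled by $K$, that is, to $\conv(Ke_1,\dots,Ke_{4n})$, a convex hull of $l = 4n$ vectors each of $\ell_\infty$-norm exactly $K$; as our framework requires, this convex constraint is not weakened. After expanding the products $C_i(A_i-A_j)$, $D_i(A_i-A_j)$ and so on, and after pushing the negations appearing in the non-edge clauses down into the atoms by De Morgan's laws and inequality flipping, as prescribed by the definition of $\eps$-\etr, every atom of the formula becomes a polynomial (in)equality of degree $d = 2$ with at most four monomials and all coefficients in $\{-1,0,1\}$; hence $\alpha = 1$, and the term count $t$ of Theorem~\ref{thm:kunif} — which is really the maximum number of monomials in a single atom, since its proof applies Lemma~\ref{lem:polynomial} to each polynomial separately against one common $\ell_\infty$-close point — is $O(1)$. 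The number of constraints $m = O(n^2)$ does not enter the bound. Applying Theorem~\ref{thm:kunif} with $q = 4n$, $\gamma = K$, $d = 2$, $\alpha = 1$, constantly many terms per polynomial and $l = 4n$, we conclude that if $\exact(F)$ has a solution in the scaled simplex, i.e. if $G$ is a simplex $K$ segment graph, then $F$ has a $k$-uniform solution in the scaled simplex with
\[
k \;=\; O\!\left(\frac{\gamma^{2d-2}\cdot \log l}{\eps^{2}}\right) \;=\; O\!\left(\frac{K^{2}\log n}{\eps^{2}}\right).
\]

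The algorithm then enumerates all $k$-uniform vectors of $\conv(Ke_1,\dots,Ke_{4n})$; there are $l^{O(k)} = (4n)^{O(K^{2}\log n/\eps^{2})} = n^{O(K^{2}\log n/\eps^{2})}$ of them, and each one is, by definition, a point of the scaled simplex. For each candidate we evaluate, in time polynomial in $n$, all of the finitely many relaxed atoms and the Boolean structure of the SIM-K-SEG formula. If some candidate satisfies the relaxed formula, it is a bona fide solution to $\eps$-RECOG(SIM-K-SEG) and the algorithm outputs it; if none does, then by the contrapositive of Theorem~\ref{thm:kunif} the exact formula $\exact(F)$ is unsatisfiable, i.e. there is no solution to $0$-RECOG(SIM-K-SEG). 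The total running time is $n^{O(K^{2}\log n/\eps^{2})}\cdot \poly(n) = n^{O(K^{2}\log n/\eps^{2})}$.

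The argument is essentially bookkeeping, and the only places that need care are: (i) fully expanding the degree-$2$ products and pushing all negations to the atoms, so that Lemma~\ref{lem:polynomial} applies with a constant term count and the relaxed formula is negation-free as the definition demands; (ii) tracking the $\gamma$-dependence honestly, since here the convex set is the $K$-scaled simplex rather than the standard one, and it is exactly the factor $\gamma^{2d-2} = K^{2}$ that produces the $K^{2}$ in the exponent; and (iii) including the global proviso $C_i \le_\eps D_i$ among the constraints so that the INTS predicate carries its intended meaning. No conceptual obstacle remains: the substance is in Theorem~\ref{thm:kunif}, which is applied here almost verbatim.
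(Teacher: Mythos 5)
Your proof is correct and takes essentially the same route as the paper's: formulate RECOG(SIM-K-SEG) as a constrained $\eps$-\etr instance over the $K$-scaled $(4n-1)$-simplex and apply Theorem~\ref{thm:kunif} with $\alpha = 1$, $\gamma = K$, $d = 2$, $t$ constant, $l = 4n$, giving $k = O(K^2\log n/\eps^2)$ and the $l^{O(k)}$-time enumeration of $k$-uniform vectors. The paper's own proof is considerably terser, merely listing the parameter values ($\gamma = K$, $\alpha = 1$, $t = 4$, $d = 2$, $l = 4n$) and invoking Theorem~\ref{thm:kunif}, but it agrees with yours in substance, including your reading of $t$ as a per-polynomial term count.
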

\begin{proof}
	We set $x = (A_{1},B_{1},C_{1},D_{1}, \dots ,A_{n},B_{n},C_{n},D_{n})$ and
	$F(x)$ to be the above formula that we constructed. Their combination makes an
	$\eps$-\etr instance. Vector $x$ is constrained over the convex hull defined by
	the vertices of the $(4n-1)$-simplex, i.e. vectors $v_{i} \in \reals^{4n}$, $i
	\in \{1,2, \dots 4n \}$ with their $i$-th element equal to $K$ and the rest
	equal to 0. Therefore the cardinality of our convex set is $m = 4n$, and $\gamma
	= K$. By looking at the formula we can conclude that $a = 1$, $t = 4$, and $d =
	2$. By Theorem~\ref{thm:kunif} the result follows.
\end{proof}

\subsubsection{Unit disk intersection graphs}

\paragraph*{Definitions} Let $G$ be an undirected graph with vertex set $\{v_1, v_2, \dots, v_n \}$. We say that $G$ is a \textit{unit disk intersection graph} or \textit{unit disk graph} if there are disks $d_1, d_2, \dots, d_n$ (in the plane) with radius 1 such that, for every $i,j, 1 \leq i < j \leq n$, the disks $d_i$ and $d_j$ have more than one points common (i.e. an area) if and only if $\{v_i, v_j\} \in E(G)$.
%

The disk $d_i$ representing vertex $v_i$ can be algebraically described as the set $\{(x,y) \in \reals^2 : (x - x_{i})^2 + (y - y_i)^2 \leq 1 \}$ for some real numbers $x_{i}, y_{i}$ that determine the centre of the disk. We say that $G$ is a \textit{simplex K unit disk graph} if the real numbers $x_{i}, y_{i}$, $i=1,2, \dots n$ are under the constraints 
\begin{align*}
& x_{i}, y_{i} \geq 0, \text{ for every } i=1,2, \dots n, \text{ and} \\
& \sum_{i=1}^{n} (x_{i} + y_{i}) = K, \quad \text{where } K > 0 \text{ is a given constant.}
\end{align*}
We let SIM-K-UDG denote the class of all simplex $K$ unit disk graphs with parameter $K > 0$.

The problem $\eps$-RECOG(SIM-K-UDG) is defined as follows. Given an abstract undirected graph $G$, does it belong with tolerance $\eps$ to SIM-K-UDG?

\paragraph*{Formulation of $\eps$-RECOG(SIM-K-UDG)}

Let us introduce variables $X_{i}, Y_{i}$ representing the unknown quantities $x_{i}, y_{i}$, $i = 1,2, \dots, n$. We require the vector $(X_{1}, Y_{1}, \dots , X_{n}, Y_{n})$ to be in the ($2n-1$)-simplex with parameter $K$.
Then we consider an $\eps$-intersection $d_{i} \cap_{\eps} d_{j} \neq \emptyset$ to happen if:
\begin{align*}
\sqrt{(X_{i} - X_{j})^2 + (Y_{i} - Y_{j})^2} < 2 + \eps
\end{align*}
and an $\eps$-non-intersection $d_{i} \cap_{\eps} d_{j} = \emptyset$ to happen if:
\begin{align*}
\sqrt{(X_{i} - X_{j})^2 + (Y_{i} - Y_{j})^2} \geq 2 - \eps
\end{align*}
The existence of a UDG-representation of $G$ can then be expressed by the formula 
\begin{align*}
( \exists X_{1}Y_{1}& \dots X_{n}Y_{n})  \\
& \left( \bigwedge_{\{i,j\} \in E} (X_{i} - X_{j}) \cdot (X_{i} - X_{j}) + (Y_{i} - Y_{j}) \cdot (Y_{i} - Y_{j}) < 4 + 2 \eps + \eps^2 \right) \\
\wedge & \left( \bigwedge_{\{i,j\} \notin E} (X_{i} - X_{j}) \cdot (X_{i} - X_{j}) + (Y_{i} - Y_{j}) \cdot (Y_{i} - Y_{j}) \geq 4 - 2 \eps + \eps^2 \right) 
\end{align*}
%

\begin{theorem}
	There is an algorithm
	that runs in time $n^{O(K^2 \cdot \log n / \eps^2)}$ and either finds
	a vector $(X_{1},Y_{1}, \dots ,X_{n},Y_{n})$ that is a solution to $\eps$-RECOG(SIM-K-UDG), or determines that there is no solution to 0-RECOG(SIM-K-UDG). 
\end{theorem}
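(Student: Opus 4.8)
The plan is to follow the proof of the $\eps$-RECOG(SIM-K-SEG) theorem and reduce to Theorem~\ref{thm:kunif}. First I would set $x := (X_1, Y_1, \dots, X_n, Y_n) \in \reals^{2n}$ and let $F(x)$ be the Boolean formula displayed above, so that $F$ together with the simplex constraint forms a constrained $\eps$-\etr instance. The convex set constraining $x$ is $\conv(v_1, \dots, v_{2n})$, where $v_i \in \reals^{2n}$ has $K$ in its $i$-th coordinate and $0$ elsewhere, i.e. the vertices of the $(2n-1)$-simplex with parameter $K$. Hence, in the language of Theorem~\ref{thm:kunif}, we have $l = 2n$ and $\gamma = \max_i \|v_i\|_\infty = K$.

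Next I would read off the remaining parameters of Theorem~\ref{thm:kunif} from $F$. Expanding $(X_i - X_j)\cdot(X_i - X_j)$ as $X_i^2 - 2 X_i X_j + X_j^2$ (and similarly for the $Y$-part), every polynomial occurring in $F$ has degree $d = 2$, consists of the six quadratic/bilinear monomials plus a constant term --- so at most a constant number of terms, giving $t = O(1)$ --- and has all coefficients bounded in absolute value by a constant, giving $\alpha = O(1)$. The key structural point is that the number of constraints $m = \binom{n}{2}$ does not appear in the bound on $k$ in Theorem~\ref{thm:kunif}; this is exactly why we invoke Theorem~\ref{thm:kunif} here rather than Theorem~\ref{thm:main}, whose bound would carry a $\log(\cdots n \cdot m)$ term and a large power of the number of vector variables. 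Substituting $d = 2$, $\gamma = K$, $l = 2n$, $\alpha = O(1)$, $t = O(1)$ into $k = \alpha^2 \cdot \gamma^{2d-2} \cdot (2^d - 1)^2 \cdot t^2 \cdot \log l / \eps^2$ yields $k = O(K^2 \log n / \eps^2)$.

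The algorithm then enumerates every $k$-uniform vector of $\conv(v_1, \dots, v_{2n})$; there are $l^{O(k)} = (2n)^{O(k)} = n^{O(K^2 \log n / \eps^2)}$ of these, and for each one we verify in time polynomial in $n$ whether all $\binom{n}{2}$ edge and non-edge constraints of $F$ hold. If some enumerated vector satisfies $F$, it is returned as a solution to $\eps$-RECOG(SIM-K-UDG). Otherwise, by the contrapositive of Theorem~\ref{thm:kunif}, $\exact(F)$ has no solution in the simplex; since $\exact(F)$ is the $\eps = 0$ formula (modulo the rescaling discussed below), i.e. the condition that the centre vector witnesses $G \in$ SIM-K-UDG, we may correctly report that $0$-RECOG(SIM-K-UDG) has no solution. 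The total running time is $n^{O(K^2 \log n / \eps^2)}$, as claimed.

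The only place I expect real friction is reconciling the geometric $\eps$-relaxation with the additive $\eps$ on polynomial values used by $\eps$-\etr: two unit disks ``$\eps$-intersect'' when their centres are at distance less than $2 + \eps$ and ``$\eps$-non-intersect'' when at distance at least $2 - \eps$, and squaring these thresholds turns $2 \pm \eps$ into constants of the form $4 \pm \Theta(\eps)$ for every $\eps \le 1$. Thus the displayed formula is, after replacing $\eps$ by a suitable $\eps' = \Theta(\eps)$, a genuine $\eps'$-\etr instance whose exact version is $0$-RECOG(SIM-K-UDG); applying Theorem~\ref{thm:kunif} with tolerance $\eps'$ rather than $\eps$ changes $k$ only by a constant factor, so the stated running time is unaffected.
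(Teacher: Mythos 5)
Your argument matches the paper's proof essentially line for line: both formulate the recognition problem as a constrained $\eps$-\etr instance over the $(2n-1)$-simplex with parameter $K$, read off $\gamma = K$, $d = 2$, $\alpha, t \in O(1)$, $l = 2n$, and invoke Theorem~\ref{thm:kunif} to get $k = O(K^2 \log n / \eps^2)$ and the claimed running time by enumeration. Your write-up is in fact somewhat more careful than the paper's, in that you explicitly note why Theorem~\ref{thm:kunif} (whose $k$-bound is independent of the number of constraints) is preferable to Theorem~\ref{thm:main}, and you flag the reconciliation between the geometric $2\pm\eps$ threshold and the additive $\eps$-relaxation on polynomial values via an $\eps' = \Theta(\eps)$ rescaling — a point the paper glosses over.
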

\begin{proof}
	We set $x = (X_{1},Y_{1}, \dots ,X_{n},Y_{n})$ and $F(x)$ to be the above
	formula that we constructed. Their combination makes an $\eps$-\etr instance.
	Vector $x$ is constrained over the convex set defined by the vertices of the
	$(2n-1)$-simplex, i.e. vectors $v_{i} \in \reals^{2n}$, $i \in \{1,2, \dots 2n
	\}$ with their $i$-th element equal to $K$ and the rest equal to 0. Therefore
	the cardinality of our convex set is $m = 2n$, and $\gamma = K$. By looking at
	the formula we can conclude that $a = 2$, $t = 7$, and $d = 2$. By
	Theorem~\ref{thm:kunif} the result follows.
\end{proof}

\section{Discussion and Open Problems}

It seems that \etr is a class which captures decision problems that are a lot harder than these in \NP (under standard complexity assumptions) because either they do not have truth certificates of polynomial length or because the certificate cannot be checked in polynomial time. One can think of \etr and thus \texttt{Function} \etr (\fetr) and \texttt{Total Function} \etr (\tfetr) as being the analogues of \NP, \FNP and \TFNP respectively in the Blum-Shub-Smale (BSS) model of computation \cite{BSS89}, in which computing functions over real numbers is as costly as is computing functions over rational numbers in Turing machines.

In this paper we provide a general framework for approximation schemes, a framework designed for problems in a subclass of \etr (or more precisely, \fetr). In particular, since some function problems in \TFNP or, in general, \FNP (whose corresponding decision problems are in \NP), have polynomial or quasi-polynomial time approximation schemes (PTAS/QPTAS), we study harder problems in \tfetr or \fetr, and seek similar approximation schemes. In a beautiful turn of events, we show that PTASs and QPTASs exist for a wide class of problems in \fetr. By extending the well-known Lipton-Markakis-Mehta (LMM) technique that yields the best possible algorithm (under standard complexity assumptions) for computing approximate Nash equilibria in strategic games, we provide a general framework that gives in a standardized way, approximation algorithms of the same quality as the state of the art for some problems, while for some other problems these algorithms are the first to achieve an efficient approximation. Interestingly, approximation techniques that work inside \FNP, transcend it, and reach \fetr.

For a given constrained \eetr instance whose variables' domain is the convex hull of $l$ vectors, we presented an algorithm which runs in time $l^{O(k)}$, for $k$ indicated in Theorem \ref{thm:main}, that either computes a solution or responds that a solution to the exact instance does not exist. This algorithm is a QPTAS or PTAS for many well-known problems. However, our algorithm, being an extension of the LMM algorithm, for some problems does not have better running time than the state of the art algorithms that are tailored to these problems. The most important open problem is to make the quantity $k$ depend logarithmically on crucial parameters, such as the number of variables $n$ and the degree of the polynomials $d$, instead of polynomially. This would generalize many algorithms, such as the PTAS for computing an $\eps$-Nash equilibrium in anonymous games \cite{DP09} and the best algorithm for computing an $\eps$-Nash equilibrium in general multi-player normal form games \cite{BBP}.

\section*{Acknowledgements}
P. Spirakis wishes to dedicate this paper to the memory of his late father in law Mathematician and Professor Dimitrios Chrysofakis, who was among the first in Greece to work on tensor analysis.

\bibliographystyle{abbrv}
\bibliography{references}

\end{document}